\newenvironment{varalgorithm}[1]
  {\algorithm}
  {\endalgorithm}
\def\cov{\mathop{\mathrm{Cov}}}  
\def\var{\mathop{\mathrm{Var}}}  
\def\diag{\mathop{\mathrm{diag}}}  
\def\trace{\mathop{\mathrm{trace}}}  
\def\IID{\mathop{\mathrm{IID}}}  
\def\nc{\mathop{\mathrm{nc}}}  
\let\bbordermatrix\bordermatrix
\patchcmd{\bbordermatrix}{8.75}{4.75}{}{}
\patchcmd{\bbordermatrix}{\left(}{\left[}{}{}
\patchcmd{\bbordermatrix}{\right)}{\right]}{}{}
\numberwithin{equation}{section}
\newcommand{\T}{^{\mbox{\tiny T}}}
\newtheorem{remark}{Remark}
\newtheorem{problem}{Problem}
\newtheorem{example}{Example}
\newcommand{\sr}{\stackrel}
\newcommand{\noi}{\noindent}
\newcommand{\be}{\begin{equation}}
\newcommand{\ee}{\end{equation}}
\newcommand{\bea}{\begin{eqnarray}}
\newcommand{\eea}{\end{eqnarray}}
\newcommand{\bes}{\begin{eqnarray*}}
\newcommand{\ees}{\end{eqnarray*}}
\newcommand{\bfi}{\begin{figure}}
\newcommand{\bfit}{\begin{figure}[t]}
\newcommand{\bfib}{\begin{figure}[b]}
\newcommand{\bfih}{\begin{figure}[h]}
\newcommand{\bfip}{\begin{figure}[p]}
\newcommand{\efi}{\end{figure}}
\newcommand{\bi}{\begin{itemize}}
\newcommand{\ei}{\end{itemize}}
\newcommand{\ben}{\begin{enumerate}}
\newcommand{\een}{\end{enumerate}}
\newcommand{\bp}{\begin{problem}}
\newcommand{\ep}{\end{problem}}
\newcommand{\hso}{\hspace{.1in}}
\newcommand{\hst}{\hspace{.2in}}
\begin{document}

\title{Optimal Estimation via Nonanticipative Rate Distortion Function and Applications to Time-Varying Gauss-Markov Processes\footnote{{Part of the material in this paper was presented in IEEE Conference on Decision and Control (CDC), Las Vegas, NV, USA, 2016 \cite{stavrou-charalambous-charalambous2016cdc}}.}} 

\author{Photios A. Stavrou\thanks{Department of Electronic Systems, Aalborg University, Aalborg, Denmark. ({\tt fos@es.aau.dk}).}
        \and Themistoklis~Charalambous\thanks{Department of Signals and Systems, Chalmers University of Technology, Gothenburg, Sweden. ({\tt themistoklis.charalambous@chalmers.se}).}
\and Charalambos D.~Charalambous\thanks{Department of Electrical and Computer Engineering, University of Cyprus, Nicosia, Cyprus, ({\tt chadcha@ucy.ac.cy}).}
\and Sergey Loyka\thanks{School of Electrical Engineering and Computer Science, University of Ottawa, Ontario, Canada, K1N 6N5. ({\tt sergey.loyka@ottawa.ca}).}}

\maketitle

\begin{abstract}
In this paper, we develop {finite-time horizon} causal filters using the nonanticipative rate distortion theory. We apply the {developed} theory to {design optimal filters for} time-varying multidimensional Gauss-Markov processes, subject to a mean square error fidelity constraint. We show that such filters are equivalent to the design of an optimal \texttt{\{encoder, channel, decoder\}}, which ensures that the error satisfies {a} fidelity constraint.  Moreover, we derive a universal lower bound on the mean square error of any estimator of time-varying multidimensional Gauss-Markov processes in terms of conditional mutual information. Unlike classical Kalman filters, the filter developed is characterized by a reverse-waterfilling algorithm, which ensures {that} the fidelity constraint is satisfied. The theoretical results are demonstrated via illustrative examples.
\end{abstract}

\begin{keywords} 
Causal filters, nonanticipative rate distortion function, mean square error distortion, reverse-water filling, universal lower bound.
\end{keywords}


\pagestyle{myheadings}
\markboth{P. A. STAVROU, T. CHARALAMBOUS, C. D. CHARALAMBOUS AND S. LOYKA}{OPTIMAL ESTIMATION VIA NRDF}

%
%
%
%
\section{Introduction}\label{introduction} 
Motivated by  real-time control applications, of communication system design, Gorbunov and Pinsker in \cite{gorbunov-pinsker1973} introduced the so-called nonanticipatory $\epsilon$-entropy of general processes, (see \cite[Introduction I]{gorbunov-pinsker1973}). The nonanticipative $\epsilon$-entropy is equivalent to Shannon's Rate Distortion Function (RDF) \cite{shannon1959,berger1971} with an additional  causality constraint on the optimal reproduction or estimator.  
%
 \par Along the same lines, for a two-sample Gaussian process, Bucy in \cite{bucy} derived a causal estimator using the Distortion Rate Function\footnote{The DRF is the dual of the RDF (see \cite{cover-thomas2006}).} (DRF)  subject to a causality constraint.  Galdos and Gustafson in \cite{galdos-gustafson1977} applied the classical RDF to design reduced order estimators. Tatikonda, in his Ph.D. thesis \cite{tatikonda2000}, introduced the so-called sequential RDF, which is a variant of the nonanticipatory $\epsilon$-entropy and related this to the Optimal Performance Theoretically Attainable (OPTA) by causal codes, as defined by Neuhoff and Gilbert in \cite{neuhoff-gilbert1982}. Moreover, in \cite{tatikonda2000}, the author computed the sequential RDF of a scalar-valued Gaussian process described by discrete recursion driven by an Independent and Identically Distributed ($\IID$) Gaussian noise process, subject to a Mean Square Error (MSE) fidelity constraint. In  addition, the author of \cite{tatikonda2000} illustrated by construction, how to communicate the Gaussian process, optimally over a memoryless Additive Gaussian Noise (AGN) channel subject to a power constraint, that is, by designing the \texttt{\{encoder, decoder\}} so that the AGN channel operates at its capacity and the sequential RDF is achieved.  In \cite{tatikonda-sahai-mitter2004ieeetac} the authors showed that if the Gaussian process is unstable then sequential RDF is bounded below by the {sum of} logarithm{s} of the absolute value{s} of {the} unstable eigenvalue{s}, and that a necessary condition for asymptotic stability of {a} linear control system over {a} limited-rate communication channel is  ``the capacity of the channel, noiseless or noisy, is larger than the sum of logarithms of the absolute values of the unstable eigenvalues of the open loop control system''. Similar conditions are derived by many authors via alternative methods \cite{wong-brockett1999ieeetac,elia2004ieeetac,charalambous-farhadi2008}.  

\par In \cite{charalambous-stavrou-ahmed2014ieeetac} the authors re-visited the relation between information theory and filtering theory, by introducing the  so-called Nonanticipative RDF (NRDF), and derived existence of optimal solutions. Moreover,  under the assumption that the  solution to the NRDF is  time-invariant, the form of the optimal reproduction distribution is derived. This expression is applied to derive a sub-optimal causal filter for time-invariant multidimensional partially observed Gaussian processes described by discrete-time recursions. For fully observed Gaussian processes the solution given in \cite{charalambous-stavrou-ahmed2014ieeetac} is optimal and generalizes the solution given in \cite{tatikonda-sahai-mitter2004ieeetac} to multidimensional Gaussian processes with MSE distortion instead of per letter distortion. Recently, Stavrou \emph{et al.} in \cite{stavrou-kourtellaris-charalambous2016ieeetit} showed that nonanticipative $\epsilon$-entropy, sequential RDF, and NRDF are equivalent notions. The optimal reproduction  distribution which minimizes directed information from one process to another process subject to average distortion constraint  is given in \cite{charalambous-stavrou2014ecc}. 

\par The NRDF has been  used in many other communication-related problems. For example, Derpich and $\O$stergaard in \cite{derpich-ostergaard2012} applied the nonanticipatory $\epsilon$-entropy of the scalar Gaussian process subject to a MSE fidelity constraint,  to derive several bounds on the OPTA by causal and zero-delay codes. Kourtellaris \emph{et al.} in \cite{kourtellaris-charalambous-boutros2015isit} illustrated the simplicity of jointly design{ing} an \texttt{\{encoder, channel, decoder\}} operating optimally in real-time, for a Binary Symmetric Markov process subject to a Hamming distance distortion function, which is communicated over a finite state channel with unit memory on past channel outputs (with some symmetry) subject to a transmission cost constraint. The NRDF is  {also applied} in control-related problems using zero-delay communication constraints. For example, Tanaka {\it et al.} in \cite{tanaka-kim-parrilo-mitter2016ieeetac} investigated {a} time-varying multidimensional fully observed Gauss-Markov process with {letter-by-letter distortion} motivated by the utility of such communication model in real-time communications for control. In addition, in \cite{tanaka-kim-parrilo-mitter2016ieeetac} the authors apply semidefinite programming to find, numerically, optimal solutions to the sequential RDF (or NRDF) of time-varying fully observed Gauss-Markov sources.   

\subsection{Problem Statement}

In this paper we investigate the following estimation problem: \textit{given an arbitrary random process, we wish to design an optimal  communication system so that at the output of this system the estimated process satisfies {an} end-to-end average fidelity or distortion criterion}.

This problem is  equivalent to the design of an optimal \texttt{\{encoder, decoder\}}, which communicates the arbitrary process and reconstructs it at the output of the decoder. Formally, the problem can be cast as follows:
\begin{problem}\label{fp1}(Information-based estimation)
Given 
\begin{itemize}
\item[(a)] an arbitrary random process $\{X_t:~t=0,\ldots,n\}$ taking values in  complete separable metric spaces $\{ {\cal X}_t: t=0, \ldots, n\}$, with  conditional distribution $\{P_{X_t|X^{t-1}}(dx_t|x^{t-1}):t=0,\ldots,n\}$, $x^{t-1}\triangleq\{x_0,x_1,\ldots,x_{t-1}\}$; 
\item[(b)]  a distortion function or fidelity of reproducing $x_t$ by $y_t \in {\cal Y}_t \subseteq {\cal X}_t, t=0,1, \ldots, n$, defined by a real-valued measurable function $d_{0,n}(\cdot, \cdot)$
\begin{align}
d_{0,n}(x^n,y^n)\triangleq\sum_{t=0}^n\rho_t(T^tx^n, T^ty^n)\in [0, \infty],  \label{introduction:distortion_function}
\end{align}
where $T^tx^n \subseteq \{x_0, x_1,\ldots, x_t\},~T^ty^n \subseteq \{y_0, y_1,\ldots, y_t\}$ is either fixed or non-increasing with time\footnote{For example $\rho_t(T^tx^n, T^ty^n)=\rho(x_t,y_t), t=0, \ldots, n$, where $\rho(\cdot,\cdot)$ is a  distance metric.} for $t=0,1,\ldots,n$,
\end{itemize}
 we wish to determine  an optimal probabilistic \texttt{\{encoder, channel, decoder\}} which communicates $\{X_t:t=0, \ldots, n\}$ and reconstructs it at the output of the decoder or estimator, while it satisfies the end-to-end average fidelity given by 
\begin{align}
\frac{1}{n+1}\mathbb{E}\left\{d_{0,n}(X^n,Y^n)\right\}\leq{D}, \hst \forall D \in [0, \infty).\label{fid_1}
\end{align}
\end{problem}

\par The above definition of estimation problem ensures fidelity (\ref{fid_1}) is satisfied, hence it is fundamentally different from standard approaches of estimation theory, such as, MSE estimation. In general, to achieve such fidelity, for any $D \in [D_{\min}, \infty]$, we know from Shannon's information theory \cite{shannon1959}, that we need to design the actual observation process or sensor from which the estimator is  constructed.  This is equivalent the construction  of the \texttt{\{encoder, channel, decoder\}}, as shown in Fig.~\ref{introduction:fig:realization:nrdf:zero-delay}. 
\begin{figure}[!h]
\centering
\includegraphics[width=\columnwidth]{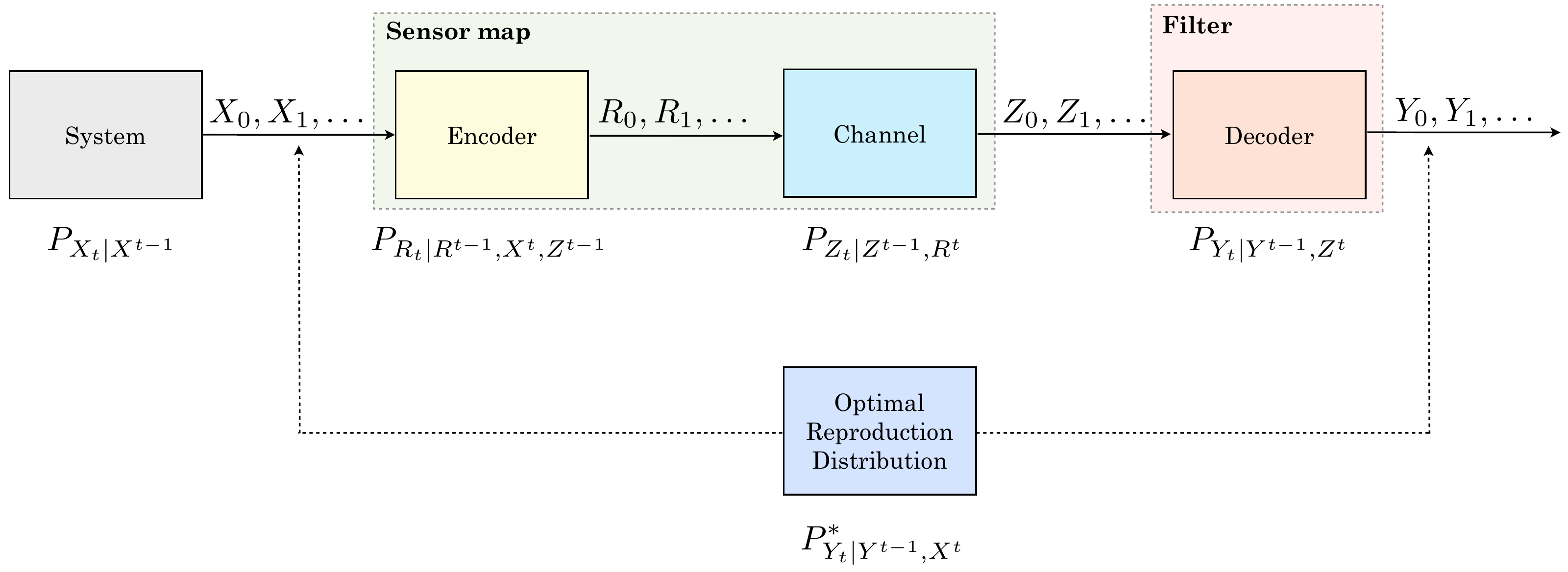}
\caption{Block diagram of Problem~\ref{fp1} with probabilistic \texttt{\{encoder, channel, decoder\}}.  
}
\label{introduction:fig:realization:nrdf:zero-delay}
\end{figure}
This point of view was recognized by Gorbunov and Pinsker \cite{gorbunov-pinsker1973}, and Bucy \cite{bucy} several years ago. 

\noi Our main objective is to address Problem~\ref{fp1} using information-theoretic measures. 
{The natural information-theoretic measure to addresse Problem~\ref{fp1} is the NRDF; this is justified by the equivalence of NRDF and nonanticipatory $\epsilon$-entropy.} 

\par In the next section, we describe the contributions and the fundamental differences between information-based estimation via NRDF and Bayesian estimation theory.
 
\subsection{Relation {between} Bayesian Estimation and Estimation using NRDF}\label{introduction:connection}
In Bayesian filtering {\cite{caines1988,elliott-aggoun-moore1995}}, one is given a model that generates the unobserved process $\{X_t:~t=0,\ldots,n\}$, via its conditional distribution $\{P_{X_t|X^{t-1}}(dx_t|x^{t-1}):t=0,\ldots,n\}$, or via discrete-time recursive dynamics, {and} a model that generates observed data obtained from sensors $\{Z_t:~t=0,\ldots,n\}$, via its conditional distribution $\{P_{Z_t|Z^{t-1},X^t}$ $(dz_t|z^{t-1},x^t):t=0,\ldots,n\}$, while an  estimate of the unobserved process {$\{X_t:~t=0,\ldots,n\}$, denoted by}
$\{\widehat{X}_t:~t=0,\ldots,n\}$, is constructed causally, based on the observed data $\{Z_t:~t=0,\ldots,n\}$. 
Thus, in Bayesian filtering theory, both models which generate the unobserved and observed processes, $\{X_t:~t=0,\ldots,n\}$ and $\{Z_t:~t=0,\ldots,n\}$, respectively, are given \emph{\'a priori}, while the estimator $\{\widehat{X}_t:~t=0,\ldots,n\}$ is a nonanticipative functional of the past information $Z^{t-1},~t=0,\ldots,n$, often computed recursively, like Kalman filter. Fig.~\ref{introduction:fig:filtering} illustrates the  block diagram of the Bayesian filtering problem.	
\begin{figure}[!h]
\centering
\includegraphics[width=0.79\columnwidth]{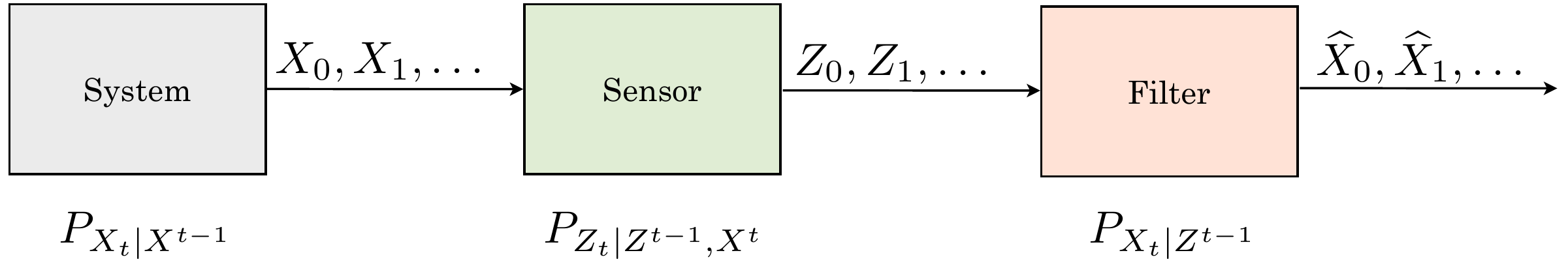}
\vspace{-0.1cm}
\caption{Bayesian Filtering Problem\vspace{-0.3cm}.}
\label{introduction:fig:filtering}
\end{figure}

\par On the other hand, in information-based estimation, defined in  Problem~\ref{fp1}, one is given the process  $\{X_t:~t=0,\ldots,n\}$ and a fidelity criterion,  and the objective is to  determine the optimal nonanticipative reproduction conditional distribution $\{P^*_{Y_t|Y^{t-1},X^t}(dy_t|y^{t-1},x^t):~t=0,\ldots,n\}$ corresponding to NRDF, denoted hereinafter by $R_{0,n}^{na}(D)$, and to realize this distribution by an \texttt{\{encoder, channel, decoder\}} so that the end-to-end distortion \eqref{fid_1} is met.

\par As a result, in Problem~\ref{fp1}, the observation model is constructed by  the cascade of the \texttt{\{encoder, channel\}} and the filter is the decoder, which satisfies the end-to-end fidelity \eqref{fid_1}.  

\subsection{Contributions}\label{introduction:contributions}
The main contributions of this paper are the following:\\
{\bf (R1)} We give a {\it closed form expression} for {the optimal nonanticipative reproduction conditional distribution,} $\{P^*_{Y_t|Y^{t-1}, X^t}: t=0, \ldots, n\}${, which} achieves the infimum of the Finite-Time Horizon (FTH) NRDF\footnote{In the sequel, when we refer to FTH NRDF we just say NRDF.}. {Then,} we identify some of its properties{, which are necessary for the design of the optimal \texttt{\{encoder, decoder\}} pair. 

\noi {\bf (R2)} We apply our framework to a {\it time-varying multidimensional fully observed Gauss-Markov process} $\{X_t: t=0, \ldots, n\}$ with MSE distortion,  and we show the following:
\begin{itemize}
\item[(1)] The parametric expression of $R_{0,n}^{na}(D)$ is characterized by a time-space reverse-waterfilling;   


\item[(2)] At each time $n$ the value  $R_{0,n}^{na}(D)$  is achieved by an optimal \texttt{\{encoder, channel, decoder\}}, where the channel is  a Multiple Input Multiple Output (MIMO) Additive Gaussian Noise (AGN) channel, the encoder operates at the capacity of the AGN channel, and \eqref{fid_1} holds with equality. 

\item[(3)] At each time $n$, we {give the universal lower bound on} {\it the  MSE of any causal estimator of the Gauss-Markov process}.
\end{itemize}


\par Contribution {\bf(R1)} generalizes \cite{charalambous-stavrou-ahmed2014ieeetac,stavrou-kourtellaris-charalambous2016ieeetit}, in that we remove the {assumption that} the optimal reproduction distribution $\{P^*_{Y_t|Y^{t-1}, X^t}: t=0, \ldots, n\}$ is time-invariant, the source is Markov, and  distortion is single-letter. This leads to recursive computation of the optimal nonstationary distribution $\{P^*_{Y_t|Y^{t-1}, X^t}: t=0, \ldots, n\}$,  backwards in time{; i.e.}, starting at time $t=n$ and going backward{s} to time $t=0$. Contribution {\bf(R2)} demonstrates that for time-varying multidimensional fully-observed Gauss-Markov processes, the parametric expression of the NRDF, $R_{0,n}^{na}(D)$, is characterized by a time-space reverse-waterfilling. To solve the time-space reverse-waterfilling,  we propose an iterative algorithm which  computes numerically the value of $R_{0,n}^{na}(D)$, and we present examples to illustrate the effectiveness of the algorithm. The Markovian property of the optimal reproduction distribution, implies that the optimal distribution  is  $\{P^*_{Y_t|Y^{t-1}, X_t}: t=0, \ldots, n\}$. This is realized by an \texttt{\{encoder, channel, decoder\}},  with probability of estimation error decaying exponentially{, under certain conditions}. The universal lower bound on the MSE of any estimator generalizes the well-known bound of a Gaussian RV given in  \cite{liptser-shiryaev2001}.   The new recursive estimator is finite-dimensional, and  ensures the fidelity constraint is met. The time-space reverse-waterfilling implies that given a distortion level, the optimal state estimation is chosen based on an optimal threshold policy, in time and space (dimension).  This is the fundamental difference from the well-known Kalman filter equations. 

\par The rest of the paper is structured as follows. {In Section~\ref{notation}, we provide the notation used throughout the paper.} In Section~\ref{nonanticipative_rdf}, we introduce NRDF for general processes. In Section~\ref{nonstationary:necessary_conditions_nonstationary_solution_rdf}, we describe the form of the optimal nonstationary (time-varying) reproduction distribution of the NRDF. In Section~\ref{nonstationary:section:example:gaussian}, we concentrate on evaluating the NRDF for time-varying multidimensional Gaussian processes with memory, present examples in the context of realizable filtering theory, and we derive a universal lower bound to the mean square error of any estimator of Gaussian processes based on NRDF. We draw conclusions and discuss future directions in Section~\ref{sec:conclusions}.

%
%
%
%
\section{Notation}\label{notation}

We let $\mathbb{R}=(-\infty,\infty)$, $\mathbb{Z}=\{\ldots,-1,0,1,\ldots\}$, {$\mathbb{N}=\{1,2, \ldots\}$,} $\mathbb{N}_0=\{0,1,\ldots\}$, $\mathbb{N}_0^n\triangleq\{0,1,\ldots,n\}$. $\mathbb{E}\{\cdot \}$ represents the expectation of its argument. $\sigma\{\cdot\}$ represents the $\sigma$-algebra of events generated by its argument. For a non-square matrix $A\in \mathbb{R}^{n\times m}$, we denote its transpose by $A\T$. 
 For a square matrix $A\in \mathbb{R}^{n\times n}$, we denote by $\diag\{A\}$ the matrix having $A_{ii},~i=1,\ldots,n$, on its diagonal and zero elsewhere. We denote the source alphabet spaces by the measurable space $\{({\cal X}_n,{\cal B}({\cal X }_n)):n\in\mathbb{Z}\}$, where ${\cal X}_n, n\in\mathbb{Z}$ are complete separable metric spaces or Polish spaces, and ${\cal B}({\cal X}_n)$ are Borel $\sigma-$algebras of subsets of ${\cal X}_n$. We denote points in ${\cal X}^{\mathbb{Z}}\triangleq{{\times}_{n\in\mathbb{Z}}}{\cal X}_n$ by $x_{-\infty}^{\infty}\triangleq\{\ldots,x_{-1},x_0,x_1,\ldots\}\in{\cal X}^{\mathbb{Z}}$, and their restrictions to finite coordinates for any $(m,n)\in\mathbb{N}_0$ by $x_{m}^n\triangleq\{x_{m},\ldots,x_0,x_1,\ldots,x_n\}\in{\cal X}_m^n,~n\geq{m}$.  We denote by ${\cal B}({\cal X}^{\mathbb{Z}})\triangleq\otimes_{t\in\mathbb{Z}}{\cal B}({\cal X}_t)$ the $\sigma-$algebra on ${\cal X}^{\mathbb{Z}}$ generated by cylinder sets $\{{\bf x}=(\ldots,x_{-1},x_0,x_1,\ldots)\in{\cal X}^{\mathbb{Z}}:x_j\in{A}_j,~j\in\mathbb{Z}\}, A_j\in{\cal B}({\cal X}_j), j\in\mathbb{Z}$. Thus, ${\cal B}({\cal X}_{m}^{n})$ denote the $\sigma-$algebras of cylinder sets in ${\cal X}_m^n$, with bases over $A_j\in{\cal B}({\cal X}_j)$,~$j\in\{m,m+1,\ldots,n\},~m\leq{n},~ (m,n)\in\mathbb{Z}$. For a Random Variable (RV) $X : (\Omega, {\cal F}) \longmapsto  ({\cal X}, {\cal B}({\cal X}))$  we denote the distribution induced by $X$ on $({\cal X}, {\cal B}({\cal X}))$ by ${\bf P}_{X}(dx)\equiv {\bf P}(dx)$. We denote the set of such probability distributions by $ {\cal M}({\cal X})$. We denote the conditional distribution of RV $Y$ given $X=x$ (i.e., fixed) by ${\bf P}_{Y|X}(dy| X=x)  \equiv {\bf P}_{Y|X}(dy|x)$. Such conditional distributions are  equivalently  described   by  stochastic kernels or transition functions \cite{dupuis-ellis1997} ${\bf K}(\cdot|\cdot)$ on $ {\cal  B}({\cal Y}) \times {\cal X}$, mapping ${\cal X}$ into ${\cal M}({\cal Y})$ (space of distributions), i.e., $x \in {\cal X}\longmapsto {\bf K}(\cdot|x)\in{\cal M}({\cal Y})$, and such that for every $A\in{\cal B}({\cal Y})$, the function ${\bf K}(A|\cdot)$ is ${\cal B}({\cal X})$-measurable. We denote the set of such stochastic kernels by ${\cal Q}({\cal Y}|{\cal X})$.
%
%
%
%
\section{NRDF on General Alphabets}\label{nonanticipative_rdf}

In this section, we introduce the definition of NRDF for general processes taking values in Polish spaces (complete separable metric spaces), that include finite, countable, and continuous alphabet spaces. 

\noi{\bf Source Distribution.} The process $\{X_0, X_1, \ldots\}$ is described by the  collection of conditional probability distributions $\{{\bf P}_{X_n|X^{n-1}}(\cdot|x^{n-1}): x^{n-1} \in {\cal X}^{n-1}, ~n\in\mathbb{N}_0\}$.  For each $n\in\mathbb{N}_0$, we let  ${\bf P}_{X_n|X^{n-1}}(\cdot|\cdot) \equiv P_n(\cdot| \cdot) \in{\cal Q}_n({\cal X}_n|{\cal X}^{n-1})$, and  for $n=0$, we set ${\bf P}_{X_0|X^{-1}}=P_0(dx_0)$. We define the probability distribution on ${\cal X}^{n}$ by
\begin{align}
{ P}_{0,n}(A_{0,n})\triangleq\int_{A_0}P_0(dx_0)\ldots\int_{A_n}P_n(dx_n|x^{n-1}),    \hso A_t\in{\cal B}({\cal X }_t), \; A_{0,n}=\times_{t=0}^n{A_t}.\label{equation2}
\end{align}
\noi Thus, for each $n\in\mathbb{N}_0$, ${P}_{0,n}(\cdot)\in{\cal M}({\cal X}^n)$.
\vspace*{0.2cm}\\
\noi{\bf Reproduction Distribution.} The  reproduction process $\{\ldots, Y_2, Y_1, Y_0, Y_1, \ldots\}\equiv \{Y^{-1}, Y_0, Y_1, \ldots, \}$  is described by the collection of conditional  distributions $\{{\bf P}_{Y_n|Y^{n-1},X^n}$ $(\cdot|y^{n-1},x^n): (y^{n-1}, x^n) \in  {\cal Y}^{n-1} \times {\cal X}^n,  n\in\mathbb{N}_0\}$, i.e., $y^n\equiv ({y^{-1}, y_0^n}), x^n\equiv x_0^n$. For each $n\in\mathbb{N}_0$, we let ${\bf P}_{Y_n|Y^{n-1},X^n}(\cdot|\cdot,\cdot)\equiv Q_n(\cdot|\cdot,\cdot)\in{\cal Q}_n({\cal Y}_n|{\cal Y}^{n-1}\times{\cal X}^{n})$, and for $n=0$,  
${\bf P}_{Y_0|Y^{-1},X_0}=Q_0(dy_0|y^{-1},x_0)$. The RV $Y^{-1}$ is the initial data with fixed distribution ${\bf P}_{Y^{-1}}(dy^{-1})=\mu(dy^{-1})$. 
We define the family of conditional probability distributions on ${\cal Y}_0^{n}$ parametrized by $(y^{-1},x^n) \in {\cal Y}^{-1} \times {\cal X}^{n}$ by 
\begin{align}
{\overrightarrow{Q}}_{0,n}(B_{0,n}|y^{-1},x^n)\triangleq & \int_{{B}_0}Q_0(dy_0|y^{-1},x_0)\ldots\nonumber \\
&\int_{{B}_n}Q_n(dy_n|y^{n-1},x^n), \hso B_t\in{\cal B}({\cal Y}_t),\: {B}_{0,n}=\times_{t=0}^n{{B}_t}.\label{nonstationary:equation4}
\end{align}
\noi Thus, for each $n\in\mathbb{N}_0$, $\overrightarrow{Q}_{0,n}(\cdot|y^{-1},x^n)\in{\cal M}({\cal Y}_0^{n}),~(y^{-1},x^{n})\in {\cal Y}^{-1} \times {\cal X}^{n}$. 

\noi Given  a ${P}_{0,n}(\cdot)\in{\cal M}({\cal X}^{n})$ a $\overrightarrow{Q}_{0,n}(\cdot|y^{-1},x^n)\in{\cal M}({\cal Y}_0^{n})$, and a fixed distribution $\mu(dy^{-1})$, we define the following distributions.\\
{\it The joint distribution} on ${\cal X}^{n}\times{\cal Y}_0^{n}$ given  $Y^{-1}=y^{-1}$ is defined by  
\begin{align}
{\bf P}^{\overrightarrow{Q}}(A_{0,n} \times B_{0,n}|y^{-1})\triangleq &({P}_{0,n}{\otimes}{\overrightarrow Q}_{0,n})\left(\times^n_{t=0}(A_t{\times}B_t)|y^{-1}\right)\nonumber\\
=&\int_{A_0}P_0(dx_0) \int_{{B}_0}Q_0(dy_0|y^{-1},x_0)  \ldots \nonumber \\
&\int_{A_n}P_n(dx_n|x^{n-1})\int_{{B}_n}Q_n(dy_n|y^{n-1},x^n). \label{joint:distribution}
\end{align}
{\it The marginal distribution} on ${\cal Y}_0^{n}$  given $Y^{-1}=y^{-1}$ is defined by 
\begin{align*}
{\Pi }_{0,n}^{\overrightarrow{Q}}(B_{0,n}|y^{-1}) \triangleq & \int_{B_{0,n}} \int_{{\cal X}^n}  ({P}_{0,n}\otimes{\overrightarrow Q}_{0,n})(dx^n, dy_0^n|y^{-1})\\
=& \int_{B_{0,n}} \Pi_0^{\overrightarrow{Q}}(dy_0|y^{-1}) \ldots \Pi_n^{\overrightarrow{Q}}(dy_n|y^{n-1}).
\end{align*}
 {\it The product probability distribution} ${\overrightarrow\Pi}_{0,n}^{\overrightarrow{Q}}(\cdot|y^{-1}):{\cal B}({\cal X}^n)\otimes{\cal B}({\cal Y}_0^{n})\longmapsto[0,1]$ conditioned on $Y^{-1}=y^{-1}$, is defined by
\begin{align}
&{\overrightarrow\Pi}_{0,n}^{\overrightarrow{Q}}\left(A_{0,n} \times B_{0,n}|y^{-1}\right)\triangleq \left({P}_{0,n}\times{\Pi}_{0,n}^{\overrightarrow{Q}}\right)\left(\times^n_{t=0}(A_t{\times}{B_t})|y^{-1}\right)\nonumber\\
&=\int_{A_0}P_{0}(dx_0)\int_{B_0}\Pi_0^{\overrightarrow{Q}}(dy_0|y^{-1})\ldots\int_{A_n}P_n(dx_n|x^{n-1})\int_{B_n}\Pi_n^{\overrightarrow{Q}}(dy_n|y^{n-1}).\nonumber
\end{align}

\par We define the relative entropy between the joint distribution ${\bf P}^{\overrightarrow{Q}}(dx^n,dy_0^n|y^{-1})$ and the product distribution  ${\overrightarrow\Pi}_{0,n}^{\overrightarrow{Q}}\left(dx^n,dy_0^n|y^{-1}\right)$, averaged over the initial distribution $\mu(dy^{-1})$, as follows:
\begin{align}
\mathbb{D}({P}_{0,n} \otimes {\overrightarrow Q}_{0,n}||{\overrightarrow\Pi}_{0,n}^{\overrightarrow{Q}})= &\int_{{\cal X}^{n}\times{\cal Y}^{n}} \log \left( \frac{P_{0,n}(\cdot) \otimes {\overrightarrow Q}_{0,n}(\cdot|y^{-1},x^n)}{P_{0,n}(\cdot)\otimes {\Pi}_{0,n}^{\overrightarrow{Q}}\left(\cdot|y^{-1}\right)      }(x^n, y_0^n) \right) \nonumber \\
& P_{0,n}(dx^n) \otimes {\overrightarrow Q}_{0,n}(dy_0^n|y^{-1},x^n)\otimes \mu(dy^{-1}) 
\label{equation33}  \\
\sr{(a)}{=}& \int_{{\cal X}^n\times{\cal Y}^{n}} \log \left( \frac{{\overrightarrow Q}_{0,n}(\cdot|y^{-1},x^n)}{{\Pi }_{0,n}^{\overrightarrow{Q}}(\cdot|y^{-1}) }(y_0^n)\right)\nonumber \\
& P_{0,n}(dx^n) \otimes {\overrightarrow Q}_{0,n}(dy_0^n|y^{-1},x^n)\otimes \mu(dy^{-1}) \label{equation203}\\
\equiv & {\mathbb{I}}_{0,n}({P}_{0,n}, {\overrightarrow Q}_{0,n})\label{equation7a}
\end{align}
where $(a)$ is due to the chain rule of relative entropy (see \cite{charalambous-stavrou2016ieeetit}). 
In (\ref{equation7a})  the notation ${\mathbb{I}}_{0,n}(\cdot,\cdot)$  indicates the functional dependence  on $\{P_{0,n}, {\overrightarrow Q}_{0,n}\}$ (the dependence on $\mu(dy^{-1})$ is omitted). By  \cite[Theorem 5]{charalambous-stavrou2016ieeetit},  the set of distributions $\overrightarrow{Q}_{0,n}(\cdot|y^{-1},x^n)\in{\cal M}({\cal Y}_0^{n})$ is convex, and by \cite[Theorem 6]{charalambous-stavrou2016ieeetit}, ${\mathbb{I}}_{0,n}(P_{0,n},\cdot)$ is a convex functional of $\overrightarrow{Q}_{0,n}(\cdot|y^{-1},x^n)\in{\cal M}({\cal Y}_{0}^n)$. 

\par Given the distortion function  of reproducing $x_t$ by $y_t, t=0,1, \ldots, n$, defined  by \eqref{introduction:distortion_function}, the  fidelity constraint set  is defined as follows.
\begin{align}
\overrightarrow{\cal Q}_{0,n}&(D)\triangleq\left\{\overrightarrow{Q}_{0,n}(\cdot|y^{-1},x^n)\in{\cal M}({\cal Y}_0^{n}):\frac{1}{n+1}{\bf E}_{\mu}^{\overrightarrow{Q}}\left\{d_{0,n}(X^n,Y^n)\right\}\leq{D}\right\},~D\geq{0} \nonumber
\end{align}
where ${\bf E}_{\mu}^{\overrightarrow{Q}}\{\cdot\}$ indicates that the joint distribution is induced by $\{P_{0,n}(dx^n), \overrightarrow{Q}_{0,n}(dy_0^n|y^{-1}$,  $ x^n), {\mu}(dy^{-1})\}$ defined by (\ref{joint:distribution}). Clearly, $\overrightarrow{\cal Q}_{0,n}(D)$ is a convex set. 

\begin{definition}({NRDF}){\ \\}
The NRDF is defined by 
\begin{align}
{R}^{na}_{0,n}(D) \triangleq  \inf_{{\overrightarrow{Q}_{0,n}(dy_0^n|y^{-1}, x^n)\in\overrightarrow{\cal Q}_{0,n}(D)}}\mathbb{I}_{0,n}(P_{0,n},{\overrightarrow Q}_{0,n}).
\label{nonstationary:ex12}
\end{align}
\end{definition}
By the above discussion the NRDF is a convex optimization problem. Sufficient conditions for existence of an optimal solution to the convex optimization problem \eqref{nonstationary:ex12} are given in \cite[Theorem III.13]{charalambous-stavrou2016ieeetit}.

For completeness, in the next remark we give the connection of the NRDF to the classical Shannon RDF \cite{berger1971} and  nonanticipatory $\epsilon$-entropy \cite{gorbunov-pinsker1973}.

\begin{remark}(RDF and nonanticipatory $\epsilon$-entropy) {\ \\}
Consider the distribution ${P}_{0,n}(\cdot)\in{\cal M}({\cal X}^n)$ and the conditional distribution ${Q}^{\nc}_{0,n}(dy_0^n|$ $y^{-1},x^n)\in{\cal M}({\cal Y}_0^{n}),~(y^{-1},x^{n})\in {\cal Y}^{-1} \times {\cal X}^{n}$, which is a non-causal distribution, because by Bayes' rule ${Q}^{\nc}_{0,n}(dy_0^n|y^{-1},x^n)=\otimes_{t=0}^n{Q}_t^{\nc}(dy_t|y^{t-1},x^n)$. The conditional  distribution on ${\cal Y}_0^n$ given $Y^{-1}=y^{-1}$, and the joint distribution on ${\cal X}^n \times {\cal Y}_0^n$ are  induced as follows.   
\begin{align}
\Pi_{0,n}^{Q^{\nc}}(dy_0^n|y^{-1})=&\int_{{\cal X}^n}Q^{\nc}_{0,n}(dy_0^n|y^{-1}, x^n)\otimes P_{0,n}(dx^n), \\
 {\bf P}^{Q^{\nc}}(dx^n, dy_0^n|y^{-1})= & P_{0,n}(dx^n) \otimes Q^{\nc}_{0,n}(dy_0^n|y^{-1},x^n).
\end{align}
Define the fidelity constraint 
\begin{align}
{\cal Q}_{0,n}(D)\triangleq\left\{ {Q}^{\nc}_{0,n}(dy_0^n|y^{-1},x^n)\in{\cal M}({\cal Y}_0^n):\frac{1}{n+1}{\bf E}_{\mu}^{Q^{\nc}}\left\{d_{0,n}(X^n,Y^n)\right\}\leq{D}\right\},~D\geq{0}.\label{fidelity_set_general}
\end{align} 
The classical RDF \cite{berger1971} is defined by 
\begin{align}
{R}_{0,n}(D) \triangleq  \inf_{{{Q}^{\nc}_{0,n}(dy_0^n|y^{-1}, x^n)\in {\cal Q}_{0,n}(D)}} I(X^n;Y_0^n|Y^{-1}) ,
\label{nonstationary:ex12_c}
\end{align}
where $I(X^n;Y_0^n|Y^{-1})$ is the conditional mutual information given by 
\begin{align}
I(X^n; Y_0^n|Y^{-1}) =&\int_{{\cal X}^n\times{\cal Y}^n} \log \left( \frac{{ Q}^{\nc}_{0,n}(\cdot|y^{-1},x^n)}{{\Pi}_{0,n}^{Q^{\nc}}\left(\cdot|y^{-1}\right)      }(y_0^n) \right) \nonumber \\
& P_{0,n}(dx^n) \otimes { Q}^{\nc}_{0,n}(dy_0^n|y^{-1},x^n)\otimes \mu(dy^{-1}) 
\label{equation33_N}  \\
\equiv & \mathbb{I}_{0,n}(P_{0,n}, Q^{\nc}_{0,n}).\label{equation7a_N}
\end{align}
Unfortunately, classical RDF does not give causal estimators, because the optimal reproduction distribution   in \eqref{nonstationary:ex12_c} is $\{{Q}^{\nc}_{t}(dy_t|y^{t-1}, x^n): t\in\mathbb{N}_0^n\}$; hence, in general, it is non-causal with respect to $\{X_0, \ldots, X_n\}$. This let Gorbunov and Pinsker  in \cite{gorbunov-pinsker1973} to define the notion of nonanticipatory $\epsilon$-entropy,  as follows
\begin{align}
R_{0,n}^{\varepsilon}(D)\triangleq\inf_{\substack{{\cal Q}_{0,n}(D): Q^{\nc}_{0,t}(dy_0^t|y^{-1}, x^n)=Q_{0,t}^{GP}(dy_0^t|y^{-1}, x^t)  \\ t=0, \ldots, n}} I(X^n;Y_0^n|Y^{-1}). \label{introduction:definition:epsilon-entropy}
\end{align}
We note that conditional independence $Q^{\nc}_{0,t}(dy_0^t|y^{-1}, x^n)=Q_{0,t}^{GP}(dy_0^t|y^{-1}, x^t), t=0, \ldots, n$ is a causality restriction of the reproduction distribution in (\ref{nonstationary:ex12_c}).

\par The equivalence of the nonanticipatory $\epsilon$-entropy, $R_{0,n}^{\varepsilon}(D)$, and NRDF, $R_{0,n}^{na}(D)$, is a direct consequence of the following equivalent characterization of conditional independence  statements shown in \cite{stavrou-kourtellaris-charalambous2016ieeetit}.

\begin{description}
\item[{ MC1}:] ${Q}^{\nc}_{0,n}(dy_0^n|y^{-1},x^n)={\overrightarrow Q}_{0,n}(dy_0^n|y^{-1},x^n)=\otimes_{t=0}^n{Q}_{t}(dy_t|y^{t-1},x^t)$,~~$\forall{n}\in\mathbb{N}_0$;

\item[{ MC2}:]  $Q^{\nc}_t(dy_t|y^{t-1},x^t,x_{t+1}^n)=Q_t(dy_t|y^{t-1},x^t)$,~for each $t=0,1,\ldots, n-1$,~$\forall{n}\in\mathbb{N}_0$;

\item[{ MC3}:] $P_t(dx_{t+1}|x^{t},y^t)=P_t(dx_{t+1}|x^{t})$,~for each  $t=0,1,\ldots, n-1$,~$\forall{n}\in\mathbb{N}_0$;

\item[{ MC4}:] $Q^{\nc}_{0,t}(dy_0^t|y^{-1},x^{t},x_{t+1}^n)=\overrightarrow{Q}_{0,t}(dy_0^t|y^{-1},x^{t})$,~for each $t=0,1,\ldots, n-1$,~$\forall{n}\in\mathbb{N}_0$. 
\end{description}

\noi In view of the above statements, the  NRDF defined by (\ref{nonstationary:ex12}) is equivalent to the nonanticipatory $\epsilon$-entropy defined by (\ref{introduction:definition:epsilon-entropy}), that is, $R_{0,n}^{na}(D)= R_{0,n}^\varepsilon(D)$.
\end{remark}

%
%
%
%
\section{Optimal Nonstationary Reproduction Distribution}\label{nonstationary:necessary_conditions_nonstationary_solution_rdf}

In this section, we describe the form of the optimal nonstationary (time-varying) reproduction distribution that achieves the infimum in (\ref{nonstationary:ex12}). 

\par First, we state the following properties regarding the convexity and continuity of the NRDF, ${R}^{na}_{0,n}(D)$, that are necessary for the development of our results.\\
{\bf 1)}~${R}^{na}_{0,n}(D)$ is a convex, non-increasing function of $D\in[0,\infty)$.\\
{\bf 2)}~If $R_{0,n}^{na}(D)<\infty$, then $R_{0,n}^{na}(\cdot)$ is continuous on $D\in[0,\infty)$.\\
Note that {\bf1)} is similar to the one derived in \cite[Lemma IV.4]{stavrou-kourtellaris-charalambous2016ieeetit}. Also, for {\bf2)} recall that a bounded and convex function is continuous. Since $R_{0,n}^{na}(D)$ is non-increasing, it is bounded outside the neighbourhood of $D=0$ and it is also continuous on $(0,\infty)$. In other words, if $R_{0,n}^{na}(D)<\infty$ then $R_{0,n}^{na}(D)$ is bounded and hence continuous on $[0,\infty)$. \\
Moreover, since $R_{0,n}^{na}(D)$ is convex and non-increasing then its inverse function, $D_{0,n}(R^{na})$, exists and it is convex, non-increasing function of $R^{na}\in[0,\infty)$. $D_{0,n}(R^{na})$ is called FTH Nonanticipative Distortion Rate Function (NDRF) and is given by
\begin{align}
D_{0,n}(R^{na})=\inf_{\frac{1}{n+1}{\mathbb I}_{0,n}(P_{0,n}, \overrightarrow{Q}_{0,n})\leq{R^{na}}}{\bf E}_{\mu}^{\overrightarrow{Q}}\left\{ d_{0,n}(X^n, Y^n)\right\}.\label{section:ndrf:equation1}
\end{align}

\noi The NRDF defined by (\ref{nonstationary:ex12}) is a convex optimization problem, and thus, if there exists an interior point in the set $\overrightarrow{\cal Q}_{0,n}(D)$, it can be reformulated using Lagrange duality theorem \cite[Theorem 1, pp. 224-225]{dluenberger1969} as an unconstrained problem as follows.
\begin{align}
R^{na}_{0,n}(D)= \sup_{s \leq 0} \inf_{\overrightarrow{Q}_{0,n}(\cdot|y^{-1},x^n) \in{\cal M}({\cal Y}_0^{n})}\left\{\mathbb{I}_{0,n}({P}_{0,n},\overrightarrow{Q}_{0,n})  - s\frac{1}{n+1}{\bf E}_{\mu}^{\overrightarrow{Q}}\left\{d_{0,n}(X^n,Y^n)\right\}\right\}. \label{nonstationary:eq.0}
\end{align}

\par Next, we state Theorem~\ref{nonstationary:theorem:solution_rdf}, which is used in the subsequent analysis to compute the NRDF, $R^{na}_{0,n}(D)$, of time-varying multidimensional Gauss-Markov processes.

\begin{theorem}(Optimal nonstationary reproduction distributions)\label{nonstationary:theorem:solution_rdf}{\ \\}
Suppose there exists a $\overrightarrow{Q}^*_{0,n}(\cdot|y^{-1},x^n)\in\overrightarrow{\cal Q}_{0,n}(D)$, which solves (\ref{nonstationary:ex12}), and that $\mathbb{I}_{0,n}(P_{0,n}$, $\overrightarrow{Q}_{0,n})$ is G\^ateaux differentiable in every direction of $\{Q_{t}(\cdot|y^{t-1},x^t): {t\in\mathbb{N}_0^n}\}$ for a fixed $P_{0,n}(\cdot)\in{\cal M}({\cal X}^n)$ and $\mu(dy^{-1})\in{\cal M}({\cal Y}^{-1})$. Then, the following hold: \\
{\bf(1)} The optimal reproduction distributions denoted by  $\{Q^*_{t}(\cdot|y^{t-1},x^t)\in{\cal M}({\cal Y}_t): {t\in\mathbb{N}_0^n}\}$ are given by the following recursive equations backwards in time.\\
For $t=n$:
\begin{align}
&Q^*_{n}(dy_n|y^{n-1},x^n)=\frac{{e}^{s\rho_n(T^n{x^n},{T^ny^n})}\Pi^{\overrightarrow{Q}^*}_{n}(dy_n|y^{n-1})}{\int_{{\cal Y}_n} e^{s \rho_n(T^n{x^n},T^n{y^n})}\Pi^{\overrightarrow{Q}^*}_{n}(dy_n|y^{n-1})}.\label{nonstationary:eq.31a}
\end{align}
For $t=n-1,n-2,\ldots,0$:
\begin{align}
&Q^*_{t}(dy_t|y^{t-1},x^t)=\frac{{e}^{s\rho_t(T^t{x^n},T^t{y^n})-g_{t,n}(x^t,y^t)}\Pi^{\overrightarrow{Q}^*}_{t}(dy_t|y^{t-1})}{\int_{{\cal Y}_t} e^{s \rho_t(T^t{x^n},T^t{y^n})-g_{t,n}(x^t,y^t)}\Pi^{\overrightarrow{Q}^*}_{t}(dy_t|y^{t-1})}\label{nonstationary:eq.31b}
\end{align}
where $s<0$, $\Pi^{\overrightarrow{Q}^*}_{t}(\cdot|y^{t-1})\in {\cal M}({\cal Y}_t)$ and $g_{t,n}(x^t,y^t)$ is given by
\begin{align}\label{nonstationary:eq.32b}
\begin{split}
&g_{t,n}(x^{t},y^{t})=-\int_{{\cal X}_{t+1}}{P}_{{t+1}}(dx_{t+1}|x^t)\log\left(\int_{{\cal Y}_{t+1}}{e}^{s\rho_{t+1}(T^{t+1}x^{n},T^{t+1}y^{n})-g_{t+1,n}(x^{t+1},y^{t+1})}\Pi^{\overrightarrow{Q}^*}_{t+1}(dy_{t+1}|y^t)\right),\\
& g_{n,n}(x^n, y^n)=0.
\end{split}
\end{align} 
{\bf(2)} The {NRDF} is given by
\begin{align}
&{R}^{na}_{0,n}(D)=sD-\frac{1}{n+1}\sum_{t=0}^n\int_{{\cal X}^t\times{\cal Y}^{t-1}}\bigg\{\int_{{\cal Y}_t}{g}_{t,n}(x^t,y^t)Q^*_{t}(dy_t|y^{t-1},x^t)\nonumber\\
&\qquad+\frac{1}{n+1}\log\left(\int_{{\cal Y}_t}{e}^{s\rho_t(T^tx^n,T^ty^n)-g_{t,n}(x^t,y^t)}\Pi^{\overrightarrow{Q}^*}_{t}(dy_{t}|y^{t-1})\right)\bigg\} \nonumber \\
&\qquad\qquad\otimes{P}_{t}(dx_t|x^{t-1})\otimes({P}_{0,t-1}\otimes\overrightarrow{Q}_{0,t-1}^*)(dx^{t-1},dy_0^{t-1}|y^{-1})\otimes{\mu}(dy^{-1}).\label{nonstationary:eq.33}
\end{align}
{\bf (3)} If ${R}^{na}_{0,n}(D) > 0$ then $ s < 0$,  and
\begin{equation}
\frac{1}{n+1}\sum_{t=0}^n\int_{{\cal X}^t\times{\cal Y}^t}\rho_t(T^tx^n,T^ty^n)({P}_{0,t}\otimes\overrightarrow{Q}_{0,t}^*)(dx^{t},dy_0^{t}|y^{-1})\otimes{\mu}(dy^{-1})=D\label{nonstationary:eq.33i}.
\end{equation}
\end{theorem}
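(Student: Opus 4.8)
The plan is to work from the Lagrange-duality reformulation \eqref{nonstationary:eq.0}: for each fixed multiplier $s\le 0$ one minimizes the unconstrained functional
\[
\mathcal{L}_s(\overrightarrow{Q}_{0,n}) \triangleq \mathbb{I}_{0,n}(P_{0,n},\overrightarrow{Q}_{0,n}) - s\Big(\frac{1}{n+1}{\bf E}_{\mu}^{\overrightarrow{Q}}\{d_{0,n}(X^n,Y^n)\} - D\Big)
\]
over $\overrightarrow{Q}_{0,n}(\cdot|y^{-1},x^n)\in{\cal M}({\cal Y}_0^n)$, and then takes the supremum over $s\le 0$. First, using the chain-rule identity \eqref{equation203} and the additive structure \eqref{introduction:distortion_function} of $d_{0,n}$, I would rewrite $\mathbb{I}_{0,n}$ as $\sum_{t=0}^n{\bf E}\big[\mathbb{D}\big(Q_t(\cdot|y^{t-1},x^t)\,\|\,\Pi_t^{\overrightarrow{Q}}(\cdot|y^{t-1})\big)\big]$ and $d_{0,n}$ as $\sum_{t=0}^n\rho_t(T^tx^n,T^ty^n)$, so that $\mathcal{L}_s$ has a cost additive in $t$ with a ``state'' $Y^{t}$ propagating forward. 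This is precisely the structure on which a backward recursion over $t=n,n-1,\ldots,0$ operates, and the functions $g_{t,n}$ of \eqref{nonstationary:eq.32b}, with terminal value $g_{n,n}=0$, will be identified as the corresponding cost-to-go functions.

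Since $\mathbb{I}_{0,n}(P_{0,n},\cdot)$ is convex (by \cite[Theorem 6]{charalambous-stavrou2016ieeetit}) and the distortion term is affine in $\overrightarrow{Q}_{0,n}$, the functional $\mathcal{L}_s$ is convex on the convex set ${\cal M}({\cal Y}_0^n)$, so a point that is stationary in every G\^ateaux direction $\{Q_t(\cdot|y^{t-1},x^t):t\in\mathbb{N}_0^n\}$ is a global minimizer; the differentiability hypothesis is exactly what makes this stationarity condition available. I would then compute, for each $t$, the G\^ateaux derivative of $\mathcal{L}_s$ along an admissible perturbation of $Q_t$ only. The delicate point is that $Q_t$ enters $\mathbb{I}_{0,n}$ not just through its own stage term but also through every marginal $\Pi_\tau^{\overrightarrow{Q}}(\cdot|y^{\tau-1})$ with $\tau\ge t$ and through the measure over which the stages $\tau>t$ are averaged. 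Two observations tame this: (i) the derivative of the $t$-th relative-entropy term with respect to $\Pi_t^{\overrightarrow{Q}}$, with the numerator $Q_t$ held fixed, vanishes, because $\Pi_t^{\overrightarrow{Q}}(\cdot|y^{t-1})$ is the $I$-projection $\arg\min_{R\in{\cal M}({\cal Y}_t)}{\bf E}[\mathbb{D}(Q_t(\cdot|y^{t-1},X^t)\,\|\,R)]$; (ii) after taking the conditional expectation given $(x^t,y^t)$ and using \eqref{nonstationary:equation4}, all contributions from the stages $\tau>t$ telescope into exactly $-g_{t,n}(x^t,y^t)$ as defined by the recursion \eqref{nonstationary:eq.32b}. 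Setting the derivative to zero then gives, for every $(y^{t-1},x^t)$, an identity of the form $\log\frac{dQ_t^*}{d\Pi_t^{\overrightarrow{Q}^*}}(y_t)=s\rho_t(T^tx^n,T^ty^n)-g_{t,n}(x^t,y^t)+c_t(y^{t-1},x^t)$ with a normalizing constant $c_t$; solving for $Q_t^*$ and imposing $\int_{{\cal Y}_t}Q_t^*(dy_t|y^{t-1},x^t)=1$ produces \eqref{nonstationary:eq.31b}, whose $t=n$ specialization (with $g_{n,n}=0$) is \eqref{nonstationary:eq.31a}. This proves {\bf(1)}.

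For {\bf(2)}, I would substitute $\{Q_t^*\}$ back into $\mathcal{L}_s$: inside each stage term replace $\log\frac{dQ_t^*}{d\Pi_t^{\overrightarrow{Q}^*}}$ by $s\rho_t-g_{t,n}-\log\int_{{\cal Y}_t}e^{s\rho_t-g_{t,n}}d\Pi_t^{\overrightarrow{Q}^*}$, use $\sum_{t=0}^n{\bf E}_{\mu}^{\overrightarrow{Q}^*}[\rho_t]={\bf E}_{\mu}^{\overrightarrow{Q}^*}\{d_{0,n}\}=(n+1)D$ by {\bf(3)}, and collect the remaining $g_{t,n}$ and log-partition contributions averaged against $P_t(dx_t|x^{t-1})\otimes(P_{0,t-1}\otimes\overrightarrow{Q}^*_{0,t-1})(dx^{t-1},dy_0^{t-1}|y^{-1})\otimes\mu(dy^{-1})$, which yields \eqref{nonstationary:eq.33}.

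Finally, for {\bf(3)}: if $s=0$ then $\mathcal{L}_0(\overrightarrow{Q}_{0,n})=\mathbb{I}_{0,n}(P_{0,n},\overrightarrow{Q}_{0,n})\ge 0$, and the value $0$ is attained by any $Q_t$ not depending on $x^t$ (so that $Y_0^n$ is conditionally independent of $X^n$ given $Y^{-1}$); hence $\inf_{\overrightarrow{Q}_{0,n}}\mathcal{L}_0=0$, and if $R_{0,n}^{na}(D)=\sup_{s\le 0}\inf_{\overrightarrow{Q}_{0,n}}\mathcal{L}_s>0$ the supremum cannot be attained at $s=0$, so $s<0$. Then complementary slackness for the convex program \eqref{nonstationary:ex12} (equivalently, evaluating the saddle point of \eqref{nonstationary:eq.0}) gives $s\big(\frac{1}{n+1}{\bf E}_{\mu}^{\overrightarrow{Q}^*}\{d_{0,n}\}-D\big)=0$, and with $s<0$ this is \eqref{nonstationary:eq.33i}. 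The hard part will be the derivation in {\bf(1)}: correctly accounting for the implicit dependence of the future marginals $\Pi_\tau^{\overrightarrow{Q}}$, $\tau\ge t$, on $Q_t$ inside the G\^ateaux derivative and verifying that these future terms collapse exactly onto the backward recursion defining $g_{t,n}$; a secondary technical point is checking finiteness of the normalizing integrals in \eqref{nonstationary:eq.31a}--\eqref{nonstationary:eq.31b} so that each $Q_t^*$ is a genuine stochastic kernel.
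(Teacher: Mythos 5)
Your proposal is correct and follows essentially the same route as the paper: the paper's own proof is only a sketch that invokes the nested structure of the minimization in \eqref{nonstationary:eq.0} and a dynamic-programming backward recursion starting from the last stage, which is exactly your Lagrangian decomposition with $g_{t,n}$ as the cost-to-go functions and stage-wise stationarity conditions. Your write-up actually supplies more detail than the paper (which omits the derivation ``due to space limitations''), in particular the treatment of the implicit dependence of the future marginals $\Pi_\tau^{\overrightarrow{Q}}$ on $Q_t$ and the complementary-slackness argument for part {\bf (3)}.
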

\begin{proof}
The sequence of minimizations over $\{Q_{t}(\cdot|y^{t-1},x^t):~{t\in\mathbb{N}_0^n}\}$ in \eqref{nonstationary:eq.0} is a nested optimization problem. Hence, we can introduce the dynamic programming recursive equations. Then, we carry out the infimum starting at the last stage over $Q_{n}(\cdot|y^{n-1},x^n)\in{\cal M}({\cal Y}_n)$ and sequentially move backwards in time to determine $Q^*_{n}(\cdot|y^{n-1},x^n), Q^*_{n-1}(\cdot|y^{n-2},x^{n-1}),\ldots, Q^*_{0}(\cdot|y^{-1},x_0)$. The procedure is straightforward and we omit it due to space limitations. 
\end{proof}
\par We note that Theorem \ref{nonstationary:theorem:solution_rdf} is fundamentally different from \cite[Theorem IV.4]{charalambous-stavrou-ahmed2014ieeetac}. In the latter, it is assumed that all elements $\{Q_t(dy_t|y^{t-1},x^t): t\in\mathbb{N}_0^n\}$ are identical.

\par From  the above theorem, for a given distribution $P_{0,n}(\cdot)\in{\cal M}({\cal X}^n)$, we can identify the dependence of the optimal nonstationary reproduction distribution on past and present symbols of the information process $\{X_t:~t\in\mathbb{N}_0^n\}$, but not its dependence on past reproduction symbols. In what follows, we give certain properties of the information structure of the optimal nonstationary reproduction distribution that achieves the infimum in (\ref{nonstationary:ex12}).
\vspace*{0.2cm}\\
\noi {\bf Information structure of the optimal nonstationary reproduction distribution}. \\
{\bf(1)}~The dependence of ${Q}^*_{n}(dy_n|y^{n-1},x^n)$ on $x^n\in{\cal X}^n$ is determined by the dependence of $\rho_n(T^nx^n,T^ny^n)$ on $x^n\in{\cal X}^n$ as follows:\\
{\bf (1.1)}  If $\rho_t(T^tx^n,T^iy^n)=\bar{\rho}(x_t,y^t), t=0, \ldots, n$, then ${Q}^*_{n}(dy_n|y^{n-1},x^n)={Q}^*_{n}(dy_n|y^{n-1},x_n)$, while for $t=n-1,n-2,\ldots,0$, the dependence of ${Q}^*_{t}(dy_t|y^{t-1},x^t)$ on $x^t\in{\cal X}^t$ is determined from the dependence of  $g_{t,n}(x^t,y^t)$ on $x^t\in{\cal X}^t$. \\
{\bf (1.2)} If $P_{t}(dx_{t}|x^{t-1})=P_{t}(dx_{t}|x_{t-1-L}^{t-1})$, where $L$ is a non-negative finite integer, and $\rho_t(T^tx^n,T^ty^n)=\bar{\rho}(x_{t-N}^t,y_t)$, where $N$ is a non-negative finite integer, then ${Q}^*_{t}(dy_t|$ $y^{t-1},x^t)={Q}^*_{t}(dy_t|y^{t-1},x_{t-J}^t)$, where $J=\max\{N,L\}$.\\
{\bf(2)} If $g_{t,n}(x^t,y^t)=\hat{g}_{t,n}(x^t,y^{t-1}),~\forall{t\in\mathbb{N}_0^{n-1}}$  then the optimal reproduction distribution (\ref{nonstationary:eq.31b}) reduces to
\begin{align}
{Q}^*_{t}(dy_t|y^{t-1},x^t)=\frac{{e}^{s\rho_t(T^t{x^n},T^t{y^n})}\Pi^{\overrightarrow{Q}^*}_{t}(dy_t|y^{t-1})}{\int_{{\cal Y}_t} e^{s\rho_t(T^t{x^n},T^t{y^n})}\Pi^{\overrightarrow{Q}^*}_{t}(dy_t|y^{t-1})}.\nonumber
\end{align}

\par To further understand the dependence of the optimal nonstationary reproduction distributions (\ref{nonstationary:eq.31a}), (\ref{nonstationary:eq.31b}) on past reproductions, we  state an alternative characterization of the nonstationary solution of $R^{na}_{0,n}(D)$, as a maximization over a certain class of functions. We use this additional characterization to derive lower bounds on $R^{na}_{0,n}(D)$, which are achievable.

\begin{theorem}(Characterization of solution of NRDF)\label{nonstationary:theorem:alternative_expression_nonstationary} {\ \\}
An alternative characterization of NRDF is
\begin{align}
\begin{split}
&R^{na}_{0,n}(D)=\sup_{s\leq{0}}\sup_{\{\lambda_t\in\Psi_s^t:~t\in\mathbb{N}_0^n\}}\Bigg\{sD-\frac{1}{n+1}\sum_{t=0}^n\int_{{\cal X}^t\times{\cal Y}^{t-1}} \int_{{\cal Y}_t}g_{t,n}(x^t,y^t)Q^*_{t}(dy_t|y^{t-1},x^t) \\
&\qquad+\log\left(\lambda_t(x^t,y^{t-1})\right) P_{t}(dx_t|x^{t-1})\otimes({P}_{0,t-1}\otimes\overrightarrow{Q}_{0,t-1}^*)(dx^{t-1},dy_0^{t-1}|y^{-1})\otimes\mu(dy^{-1})\Bigg\}, 
\end{split}\label{nonstationary:eq.38a}
\end{align}
where 
\begin{align}
\begin{split}
&\Psi_s^t\triangleq\bigg\{\lambda_t(x^t,y^{t-1})\geq{0}:~\int_{{\cal X}^{t-1}}\left(\int_{{\cal X}_{t}}e^{s\rho_t(T^tx^n,T^ty^n)-g_{t,n}(x^t,y^t)}\lambda_t(x^t,y^{t-1})P_{t}(dx_t|x^{t-1})\right) \\
&\qquad\qquad\otimes{\bf P}^{\overrightarrow{Q}^*}(dx^{t-1}|y^{t-1})\leq{1}\Bigg\}
\end{split}\label{nonstationary:eq.39}
\end{align}
and $g_{n,n}(x^n,y^n)=0$,~ and for $t\in\mathbb{N}_0^{n-1}$,
\begin{align}
g_{t,n}(x^{t},y^{t})=-\int_{{\cal X}_{t+1}}{P}_{{t+1}}(dx_{t+1}|x^t)\log\left(\lambda_{t+1}(x^{t+1},y^{t})\right)^{-1}.\nonumber
\end{align}
For $s\in(-\infty,0]$ a necessary and sufficient condition for $\{\lambda_t(\cdot,\cdot):~t=0,\ldots,n\}$ to achieve the supremum of (\ref{nonstationary:eq.38a}) is the existence of a probability distribution $\Pi_t^{\overrightarrow{Q}^*}(\cdot|y^{t-1})$ $\in{\cal M}({\cal Y}_t)$ such that
\begin{align}
\lambda_t(x^t,y^{t-1})=\left(\int_{{\cal Y}_t}e^{s\rho_t(T^tx^n,T^ty^n)-g_{t,n}(x^t,y^t)}\Pi^{\overrightarrow{Q}^*}_{t}(dy_t|y^{t-1})\right)^{-1},~t\in\mathbb{N}_0^n.\nonumber
\end{align} 
\end{theorem}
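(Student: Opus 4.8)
The plan is to obtain the alternative characterization \eqref{nonstationary:eq.38a} by taking the Lagrange-dual expression \eqref{nonstationary:eq.0} and, instead of performing the inner infimum over $\overrightarrow{Q}_{0,n}(\cdot|y^{-1},x^n)\in{\cal M}({\cal Y}_0^n)$ directly (which produces Theorem~\ref{nonstationary:theorem:solution_rdf}), replacing the normalization constants that appear in the Gibbs-type optimizers \eqref{nonstationary:eq.31a}--\eqref{nonstationary:eq.31b} by free nonnegative functions $\lambda_t(x^t,y^{t-1})$ and then maximizing over them. Concretely, I would first recall from the proof of Theorem~\ref{nonstationary:theorem:solution_rdf} that the nested (dynamic-programming) minimization at stage $t$ produces a ``running cost'' whose value function is $-\log\big(\int_{{\cal Y}_t}e^{s\rho_t-g_{t,n}}\Pi^{\overrightarrow{Q}^*}_t(dy_t|y^{t-1})\big)$, and that by definition this is exactly the optimal $\lambda_t(x^t,y^{t-1})$. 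So the first step is to substitute $\lambda_t$ for that reciprocal normalizer in \eqref{nonstationary:eq.33}, yielding the objective inside the braces of \eqref{nonstationary:eq.38a}.

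Second, I would establish that for arbitrary (not necessarily optimal) $\lambda_t\in\Psi_s^t$ the bracketed functional is a lower bound on $R^{na}_{0,n}(D)$, so that the supremum over admissible $\lambda_t$ recovers the NRDF. This is the variational-inequality half: given any feasible $\overrightarrow{Q}_{0,n}\in\overrightarrow{\cal Q}_{0,n}(D)$, one uses the log-sum (Jensen) inequality applied to each factor $Q_t(dy_t|y^{t-1},x^t)$, exactly as in the standard Gallager/Csiszár lower-bounding technique for rate–distortion functions, but carried out stage-by-stage backwards in time so that the recursively-defined $g_{t,n}$ collect the future contributions. The constraint set $\Psi_s^t$ in \eqref{nonstationary:eq.39} is precisely the normalization-type inequality that makes this chain of Jensen bounds go through (it is the analogue of the constraint $\int e^{s\rho}\lambda \le 1$ in Shannon's lower bound); I would verify that the optimal $\lambda_t$ given at the end of the statement lies in $\Psi_s^t$ with equality, so the bound is tight. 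Combining the lower bound with the achievability furnished by the explicit $\overrightarrow{Q}^*_{0,n}$ of Theorem~\ref{nonstationary:theorem:solution_rdf} gives \eqref{nonstationary:eq.38a}.

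Third, for the ``necessary and sufficient condition'' clause I would argue as follows. Sufficiency: if $\lambda_t(x^t,y^{t-1})=\big(\int_{{\cal Y}_t}e^{s\rho_t-g_{t,n}}\Pi^{\overrightarrow{Q}^*}_t(dy_t|y^{t-1})\big)^{-1}$ for some $\Pi^{\overrightarrow{Q}^*}_t\in{\cal M}({\cal Y}_t)$, then substituting into \eqref{nonstationary:eq.38a} reproduces \eqref{nonstationary:eq.33} and hence attains $R^{na}_{0,n}(D)$; one also checks this $\lambda_t$ satisfies the $\Psi_s^t$ constraint with equality, using that $\Pi^{\overrightarrow{Q}^*}_t$ is a genuine probability measure and Fubini to integrate out $x_t$ against $P_t(dx_t|x^{t-1})$. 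Necessity: at an optimal $\lambda_t$ the Jensen steps in step two must all be equalities, which forces, for ${\bf P}^{\overrightarrow{Q}^*}$-a.e. $(x^t,y^{t-1})$, that the conditional distribution on ${\cal Y}_t$ appearing there be independent of $x_t$; identifying that common conditional with $\Pi^{\overrightarrow{Q}^*}_t(\cdot|y^{t-1})$ and renormalizing gives the stated formula.

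The main obstacle I expect is the bookkeeping of the backwards recursion: making the stage-wise Jensen inequalities compose correctly so that the telescoping of the $g_{t,n}$ terms is exact, and tracking which measure ($P_t(dx_t|x^{t-1})$, the causal ${\bf P}^{\overrightarrow{Q}^*}(dx^{t-1}|y^{t-1})$, or $\Pi^{\overrightarrow{Q}^*}_t$) each integration is against — the $\Psi_s^t$ constraint is subtle precisely because it mixes the source kernel $P_t$ with the a-posteriori kernel ${\bf P}^{\overrightarrow{Q}^*}(dx^{t-1}|y^{t-1})$. A secondary technical point is justifying the interchange of the supremum over $\{\lambda_t\}$ with the backwards dynamic-programming recursion (i.e. that one may optimize the $\lambda_t$ independently at each stage), which I would handle by the same nested-optimization / principle-of-optimality argument used in the proof of Theorem~\ref{nonstationary:theorem:solution_rdf}, invoking the G\^ateaux-differentiability hypothesis carried over from there.
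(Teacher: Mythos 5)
Your proposal follows essentially the same route as the paper's proof: for arbitrary $s\leq 0$, $\lambda_t\in\Psi_s^t$ and feasible $\overrightarrow{Q}_{0,n}$, the paper lower-bounds the Lagrangian by invoking the distortion constraint, the elementary inequality $\log x\geq 1-\tfrac{1}{x}$ applied stage-by-stage to the ratio $Q^*_t e^{-s\rho_t+g_{t,n}}/(\Pi^{\overrightarrow{Q}^*}_t\lambda_t)$ (your ``Gallager/Csisz\'ar'' step), and the normalization inequality defining $\Psi_s^t$, then shows the bound is tight for $\lambda_t=\left(\int_{{\cal Y}_t}e^{s\rho_t-g_{t,n}}\Pi^{\overrightarrow{Q}^*}_{t}(dy_t|y^{t-1})\right)^{-1}$ --- exactly your plan. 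The only point worth noting is that the paper verifies only the sufficiency of that choice (``equality in $(c)$ holds if\ldots''), whereas you also sketch the necessity direction via the equality conditions of the log inequality, which is a modest strengthening rather than a different argument.
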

\begin{proof} See Appendix~\ref{nonstationary:appendix:proof:theorem:alternative_expression_nrdf}.
\end{proof}

\par Theorem~\ref{nonstationary:theorem:alternative_expression_nonstationary} is crucial in the computation of $R_{0,n}^{na}(D)$ for any given source (with memory), simply because apart from Gaussian or memoryless sources, to solve a rate distortion problem explicitly, one needs to identify the dependence of the optimal reproduction distribution on past reproduction symbols, $Y^{t-1}$, and in general to find the information structure of the optimal reproduction distribution. In the next section, we use the previous theorems to derive $R_{0,n}^{na}(D)$ for the Gaussian source.

%
%
%
%
\section{NRDF of Time-Varying Multidimensional Gauss-Markov Processes}\label{nonstationary:section:example:gaussian}

\par In this section, we apply Theorem~\ref{nonstationary:theorem:solution_rdf} and Theorem~\ref{nonstationary:theorem:alternative_expression_nonstationary} from Section~\ref{nonstationary:necessary_conditions_nonstationary_solution_rdf} to time-varying multidimensional Gauss-Markov processes in state-space form, and we obtain the following results:\\
\noi{\bf (1)} the analytical expression of the optimal nonstationary reproduction distribution that achieves the infimum of the NRDF and the analytical expression of the NRDF subject to a square error distortion;\\
\noi{\bf (2)} a realization of the optimal nonstationary reproduction distribution in the sense of Fig.~\ref{nonstationary:communication_system} that allows us to obtain the optimal filter;\\
\noi{\bf (3)} a universal lower bound on the MSE of any causal estimator of Gaussian processes.

\par The analytical expression of the  NRDF is found by developing a time-space algorithm, which is a generalization of the standard reverse-waterfilling algorithm derived in \cite[Section 10.3.3]{cover-thomas2006} for independent Gaussian RV. Toward this, illustrative examples that verify our theory are presented.



\par The time-varying multidimensional Gauss-Markov processes defined as follows.

\begin{definition}(Time-varying multidimensional Gauss-Markov process){\ \\}
The source process is modeled as a time-varying $p$-dimensional Gauss-Markov process  defined by 
\begin{align}
X_{t+1}=A_tX_t+B_tW_t, \:X_0=x_0,  \: \: {t\in\mathbb{N}_0^{n-1}},  \label{nonstationary:example:gaussian:equation51}
\end{align}
where $A_t\in\mathbb{R}^{p\times{p}}, B_t\in\mathbb{R}^{p\times{k}}, {t\in\mathbb{N}_0^{n-1}}$. We assume\\
\noi {\bf(G1)} $X_0\in\mathbb{R}^p$ is Gaussian $N(0;\Sigma_{X_0})$;\\
\noi {\bf(G2)} $\{W_t: {t\in\mathbb{N}_0^{n}}\}$ is a $k$-dimensional  $\IID$ Gaussian $N(0; I_{k})$ sequence, independent of $X_0$;\\
\noi {\bf (G3)} The distortion function is defined by $d_{0,n}(x^n,{y}^n)\triangleq\sum_{t=0}^n\rho_t(T^tx^n,T^ty^n)$ $=\sum_{t=0}^n||x_t-{y}_t||_{2}^2$.
\end{definition}

\paragraph*{\it Information Structure} By Theorem~\ref{nonstationary:theorem:solution_rdf} and the Markovian property of (\ref{nonstationary:example:gaussian:equation51}), the optimal nonstationary reproduction distribution given by \eqref{nonstationary:eq.31a}-\eqref{nonstationary:eq.31b} is Markov with respect to $\{X_0,\ldots,X_n\}$, that is, $\{{Q}^*_{t}(d{y}_t|y^{t-1},x^t)\equiv{Q}^*_{t}(d{y}_t|y^{t-1},x_t):~t\in\mathbb{N}_0^n\}$ (see the comments below Theorem~\ref{nonstationary:theorem:solution_rdf} on information structures of the optimal reproduction distribution). Since $\{X_t: t\in\mathbb{N}_0^n\}$ is Markov and the distortion function is squared error, then by \cite{tatikonda-sahai-mitter2004ieeetac} the optimal reproduction process $\{Y^*_t: t\in\mathbb{N}_0^n\}$ is Gaussian, and the joint process $\{(X_t,Y_t): t\in\mathbb{N}_0^n\}$ is also Gaussian. In what follows, we also show the Gaussianity of the structure of the optimal reproduction distribution $\{{Q}^*_{t}(d{y}_t|y^{t-1},x_t):~t\in\mathbb{N}_0^n\}$.

\par Starting from stage $n$ and going backwards, we can show that $\{{Q}^*_{t}$ $(d{y}_t|y^{t-1},x_t):~t\in\mathbb{N}_0^n\}$ are conditional Gaussian distributions.   
\paragraph*{Stage $n$} Since the exponential term $||{y}_n-x_n||_{2}^2$ in the Right-Hand Side (RHS) of (\ref{nonstationary:eq.31a}) is quadratic in $(x_n,{y}_n)$, and $\{X_t:~{t\in\mathbb{N}_0^{n}}\}$ is Gaussian,  then it follows that a Gaussian distribution $Q_{n}(\cdot|{y}^{n-1},x_n)$, for a fixed realization of $({y}^{n-1},x_n)$, and {a} Gaussian distribution $\Pi^{\overrightarrow{Q}}_{n}(\cdot|{y}^{n-1})$ satisfy both the left and right sides of (\ref{nonstationary:eq.31a}). This implies that $Q^*_{n}(\cdot|y^{n-1},x_n)$ and $\Pi^{\overrightarrow{Q}^*}_{n}(\cdot|y^{n-1})$ are both Gaussian for fixed $(y^{n-1},x_n)$ and $y^{n-1}$, with conditional means which are linear in $(y^{n-1},x_n)$ and $y^{n-1}$, respectively, and conditional covariances which are independent of $(y^{n-1},x_n)$ and $y^{n-1}$, respectively.
\paragraph*{Stages $t \in \{n-1, n-2,\ldots,1, 0\}$} By (\ref{nonstationary:eq.31b}), evaluated at $t=n-1$, then
 $g_{n-1,n}(x_{n-1},y^{n-1})$ will include terms of quadratic form in $x_{n-1}$ and $y^{n-1}$. Repeating this argument recursively, it can be verified that at any time ${t\in\mathbb{N}_0^{n-1}}$, the optimal reproduction distribution $Q^*_{t}(\cdot|y^{t-1},x_t)$ is conditionally Gaussian with conditional means linear with respect to $(x_t,y^{t-1})$, and conditional covariances independent of $(x_t,y^{t-1})$,~${t\in\mathbb{N}_0^{n-1}}$. 
 
\par By induction, we then deduce that the optimal reproduction distributions are conditionally Gaussian, and they are realized using a general equation of the form 
\begin{align}
{Y}_t=\bar{A}_tX_t+\bar{B}_t{Y}^{t-1}+V^c_t,~{t\in\mathbb{N}_0^{n}}, \label{nonstationary:example:gaussian:eq.10}
\end{align}
where $\bar{A}_t\in\mathbb{R}^{p\times{p}}$, $\bar{B}_t\in\mathbb{R}^{p\times{t}p}$, and $\{V^c_t:~{t\in\mathbb{N}_0^{n}}\}$ is an independent sequence of Gaussian vectors $\{N(0;Q_t):~{t\in\mathbb{N}_0^{n}}\}$.

\par Next, we simplify the computation by introducing the following preprocessing at the encoder and decoder associated with channel (\ref{nonstationary:example:gaussian:eq.10}) (as shown in Fig.~\ref{nonstationary:communication_system}).\\
\noi{\it Preprocessing at Encoder.} Introduce (i) the estimation error $\{K_t:~{t\in\mathbb{N}_0^{n}}\}$ of $\{X_t:~{t\in\mathbb{N}_0^{n}}\}$ based on $\{Y_0,\ldots,Y_{t-1}\}$, and (ii) its covariance $\{\Pi_{t|t-1}:~{t\in\mathbb{N}_0^{n}}\}$, defined by
\begin{align}
K_t\triangleq{X}_t-\widehat{X}_{t|t-1},~\widehat{X}_{t|t-1}\triangleq\mathbb{E}\left\{X_t|\sigma\{{Y}^{t-1}\}\right\},~\Pi_{t|t-1}\triangleq\mathbb{E}\{K_tK_t^{\T}\},~{t\in\mathbb{N}_0^{n}},\label{nonstationary:example:gaussian:equation52}
\end{align}
where $\sigma\{{Y}^{t-1}\}$ is the $\sigma$-algebra (observable events) generated by the sequence $\{Y^{t-1}\}$. The covariance is diagonalized by introducing a unitary transformation $\{E_t: {t\in\mathbb{N}_0^{n}}\}$ such that 
\begin{align}
E_t\Pi_{t|t-1}{E}_t^{\T}=\Lambda_t,~\mbox{where}~\Lambda_t\triangleq\diag\{\lambda_{t,1},\ldots\lambda_{t,p}\},~{t\in\mathbb{N}_0^{n}}.\label{nonstationary:example:gaussian:equation53}
\end{align}
\noi To facilitate the computation, we introduce the scaling process $\{\Gamma_t:~{t\in\mathbb{N}_0^{n}}\}$, where $\Gamma_t\triangleq{E}_tK_t,~{t\in\mathbb{N}_0^{n}}$, has independent Gaussian components but  all of the components are correlated.\\
\noi{\it Preprocessing at Decoder.} Analogously, we introduce the error process $\{\tilde{K}_t:~{t\in\mathbb{N}_0^{n}}\}$ and the scaling process  $\{\widetilde{\Gamma}:~{t\in\mathbb{N}_0^{n}}\}$ defined by
\begin{align}
\tilde{K}_t\triangleq{Y}_t-\widehat{X}_{t|t-1},~\mbox{and}~\widetilde{\Gamma}_t{\triangleq}\Phi_tZ_t,~Z_t\triangleq\left(\Theta_t{E}_tK_t+V_t^c\right),~{t\in\mathbb{N}_0^{n}}.\label{nonstationary:example:gaussian:eq.12i}
\end{align}

\begin{figure}[!h]
\centering
\includegraphics[width=0.99\columnwidth]{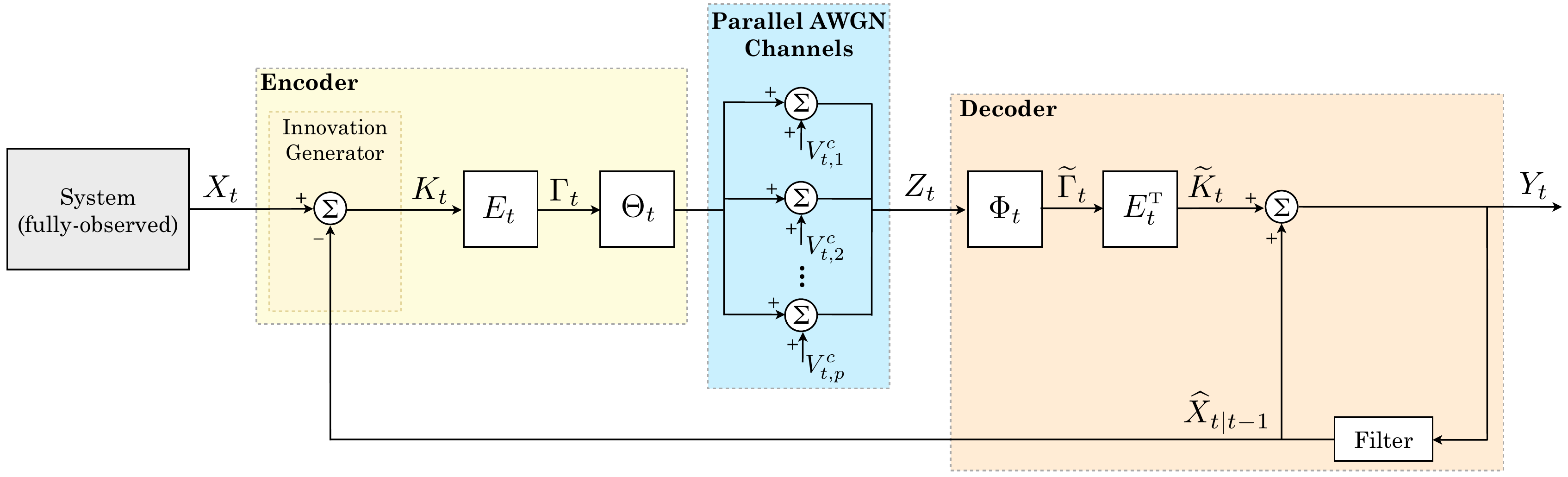}
\vspace{-0.1cm}
\caption{Realization of the optimal nonstationary reproduction distribution of multidimensional Gaussian process\vspace{-0.3cm}.}
\label{nonstationary:communication_system}
\end{figure}

\noindent The square error fidelity criterion $d_{0,n}(\cdot,\cdot)$ is not affected by the above processing of $\{(X_t,Y_t):~{t\in\mathbb{N}_0^{n}}\}$, since the preprocessing at both the encoder and decoder does not affect the form of the squared error distortion function, that is,
\begin{align}
\begin{split}
d_{0,n}(X^n,{Y}^n)&=d_{0,n}(K^n,\tilde{K}^n)=\frac{1}{n+1}\sum_{t=0}^n||\tilde{K}_t-K_t||_{2}^2 \\
&=\frac{1}{n+1}\sum_{t=0}^n||\widetilde{\Gamma}_t-\Gamma_t||_{2}^2=d_{0,n}(\Gamma^n,\widetilde{\Gamma}^n).\label{nonstationary:example:gaussian:equation.2abc}
\end{split}
\end{align}
\noi Using basic properties of conditional entropy, it can be shown that the following expressions are equivalent. 
\begin{align}
{R}_{0,n}^{na}(D)&=R_{0,n}^{na,K^n,\tilde{K}^n}(D)\triangleq\inf_{\{Q_{t}:~t=0,\ldots,n\}:~\mathbb{E}\left\{d_{0,n}(K^n,\tilde{K}^n)\right\}\leq{D}}\sum_{t=0}^n{I}(K_t;\tilde{K}_t|\tilde{K}^{t-1})\nonumber\\
&=R_{0,n}^{na,\Gamma^n,\widetilde{\Gamma}^n}(D)\triangleq\inf_{\{Q_{t}:~t=0,\ldots,n\}:~\mathbb{E}\left\{d_{0,n}(\Gamma^n,\widetilde{\Gamma}^n)\right\}\leq{D}}\sum_{t=0}^n{I}(\Gamma_t;\widetilde{\Gamma}_t|\widetilde{\Gamma}^{t-1}).\label{nonstationary:example:gaussian:equation.2}
\end{align}

\noi Next, we derive the main theorem which gives the closed form expression of the {NRDF} for multidimensional Gaussian process (\ref{nonstationary:example:gaussian:equation51}) by considering the feedback realization scheme shown in Fig.~\ref{nonstationary:communication_system}, where $\{V_c^t:~t\in\mathbb{N}_0^{n}\}$ is Gaussian $\{N(0;Q_t):~t\in\mathbb{N}_0^{n}\}$, and $\{{\Theta}_t,{\Phi}_t:~t\in\mathbb{N}_0^{n}\}$ are the matching matrices to be determined. 

\begin{theorem}($R_{0,n}^{na}(D)$ of time-varying multidimensional Gauss-Markov process)\label{nonstationary:example:gaussian:solution_gaussian} {\ \\}
{\bf (1)} The NRDF, $R_{0,n}^{na}(D)$, of the Gauss-Markov process (\ref{nonstationary:example:gaussian:equation51}), is given by
\begin{align}
{R}_{0,n}^{na}(D)&=\frac{1}{2}\frac{1}{n+1}\sum_{t=0}^n\sum_{i=1}^p\log\left(\frac{\lambda_{t,i}}{\delta_{t,i}}\right),~\delta_{t,i}\leq \lambda_{t,i},~t\in\mathbb{N}_0^{n},~i=1,\ldots,p,\label{nonstationary:example:gaussian:equa.13} \\
&\equiv\frac{1}{2}\frac{1}{n+1}\sum_{t=0}^n\sum_{i=1}^p\log\left\{\max\left(1,\frac{\lambda_{t,i}}{\delta_{t,i}}\right)\right\} ,\label{nonstationary:example:gaussian:determinant:expression} 
\end{align}
where  $\Lambda_t=E_t\Pi_{t|t-1}E^{\T}_{t}$,
\begin{align}
&\Pi_{t|t-1} \triangleq \mathbb{E}\left\{\left(X_t-\mathbb{E}\left\{X_t|\sigma\{Y^{t-1}\}\right\}\right)\left(X_t-\mathbb{E}\left\{X_t|\sigma\{Y^{t-1}\}\right\}\right)\T\right\}\label{nonstationary:example:gaussian:equa.133i}\\
&\delta_{t,i} \triangleq\left\{ \begin{array}{ll} \xi , & \mbox{if} \quad \xi\leq\lambda_{t,i} \\
\lambda_{t,i} , &  \mbox{if}\quad\xi>\lambda_{t,i} \end{array} \right.,~\forall{t,i}\label{nonstationary:example:gaussian:equa.13i}
\end{align}
and $\xi$ is chosen such that 
\begin{align}
\frac{1}{n+1}\sum_{t=0}^n\sum_{i=1}^p\delta_{t,i}=D.\label{average_distortion:eq.1}
\end{align}
{\bf (2)} The error $X_t-\mathbb{E}\{X_t|\sigma\{Y^{t-1}\}\}$ is Gaussian ${N}(0; \Pi_{t|t-1})$,~$\widehat{X}_{t|t-1}\triangleq\mathbb{E}\{X_t|\sigma\{Y^{t-1}\}\}$, and $\Pi_{t|t-1}$ are given by the Kalman filter equations
\begin{align}
\widehat{X}_{t+1|t}&=A_t\widehat{X}_{t|t-1}+A_t\Pi_{t|t-1}(E_t^{\T}H_{t}E_{t})^{\T}M_t^{-1}\left({Y}_t-\widehat{X}_{t|t-1}\right),\label{nonstationary:example:gaussian:10}\\
\Pi_{t+1|t}&=A_t\Pi_{t|t-1}{A}_t^{\T}-A_t\Pi_{t|t-1}(E_t^{\T}H_t{E}_{t})^{\T}M_{t}^{-1}(E_{t}^{\T}H_{t}E_{t})\Pi_{t|t-1}A_t^{\T}\nonumber\\
&+B_tB_{t}^{\T}=A_tE^{\T}_t\Delta_tE_tA_t^{\T}+B_tB_{t}^{\T},~\Pi_{0|-1}=\bar{\Pi}_{0|-1},~{t\in\mathbb{N}_0^{n}},\label{nonstationary:example:gaussian:11}\\
M_t&=E_t^{\T}H_t{E}_{t}\Pi_{t|t-1}(E_{t}^{\T}H_{t}E_{t})^{\T}+E_{t}^{\T}{\Phi}_{t}Q_t{\Phi}_{t}^{\T}E_t =E_t^{\T}H_t{\Lambda_t}{E}_t,\label{nonstationary:example:gaussian:11i}
\end{align}
where 
\begin{align}
\begin{split}
&\eta_{t,i}=1-\frac{\delta_{t,i}}{\lambda_{t,i}},~i=1,\ldots,p,~H_t\triangleq\diag\{\eta_{t,1},\ldots,\eta_{t,p}\},\\
&~\Delta_t=\diag\{\delta_{t,1},\ldots,\delta_{t,p}\},~{\Phi}_t\triangleq\sqrt{H_t\Delta_tQ_t^{-1}},~{t\in\mathbb{N}_0^{n}} .\label{nonstationary:example:gaussian:equation11mmm}
\end{split}
\end{align}
{\bf (3)} The realization of the optimal time-varying (nonstationary) reproduction distribution illustrated in Fig.~\ref{nonstationary:communication_system} is given by
\begin{align}
\begin{split}
Y_t&=E_t^{\T}{H}_t{E}_t\left(X_t-\widehat{X}_{t|t-1}\right)+E_t^{\T}{\Phi}_tV_t^c+\widehat{X}_{t|t-1}\\ 
&=\widehat{X}_{t|t-1}+E_t^{\T}{\Phi}_tZ_t,~~Z_t=\Theta_tE_t\left(X_t-\widehat{X}_{t|t-1}\right)+V_t^c,~~\Theta_t=\Phi^{-1}_tH_t.\label{nonstationary:10ab}
\end{split}
\end{align}
{\bf (4)} The filter estimate satisfies
\begin{align}
\widehat{X}_{t|t-1}&=A_{t-1}Y_{t-1},~\widehat{X}_{0|-1}=\mathbb{E}\{X_0|\sigma\{Y^{-1}\}\},~{t\in\mathbb{N}_0^{n}}\label{filter:outcome:eq:1}\\
\widehat{X}_{t|t}&=Y_t~\label{filter:outcome:eq:2}
\end{align}
and the optimal reproduction process is
\begin{align}
Y_t=A_{t-1}Y_{t-1}+E_t^{\T}\Phi_tZ_t,~ Z_t=\Theta_tE_t\left(X_t-A_{t-1}Y_{t-1}\right)+V^c_t. \label{nonstationary:10abbba}
\end{align}
{\bf (5)} The processes $\{Y_t:{t\in\mathbb{N}_0^{n}}\}$ and $\{\tilde{K}_t:{t\in\mathbb{N}_0^{n}}\}$ generate the same information, i.e., $\sigma\{Y^t\}=\sigma\{\tilde{K}^t\},~{t\in\mathbb{N}_0^{n}}$.
\end{theorem}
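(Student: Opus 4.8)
\textbf{Proof proposal for Theorem \ref{nonstationary:example:gaussian:solution_gaussian}.}

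The plan is to prove the five parts in the order that they logically depend on one another, starting from the equivalent problem \eqref{nonstationary:example:gaussian:equation.2} posed in terms of the innovations-type processes $\{\Gamma_t\}$, $\{\widetilde{\Gamma}_t\}$. First I would establish part (2), the Kalman-filter structure, as the backbone: having argued (via Theorem~\ref{nonstationary:theorem:solution_rdf} and the Markov/quadratic structure) that the optimal reproduction admits the linear-Gaussian realization \eqref{nonstationary:example:gaussian:eq.10}, the pair $(X_t, Y^t)$ is jointly Gaussian, so $\widehat{X}_{t|t-1}=\mathbb{E}\{X_t\mid\sigma\{Y^{t-1}\}\}$ and its error covariance $\Pi_{t|t-1}$ obey the standard Kalman recursions for the state equation \eqref{nonstationary:example:gaussian:equation51} observed through the ``channel'' $Y_t = E_t^{\T} H_t E_t K_t + E_t^{\T}\Phi_t V_t^c + \widehat{X}_{t|t-1}$; substituting the gain and rewriting the Riccati update in the diagonalized coordinates $\Lambda_t = E_t\Pi_{t|t-1}E_t^{\T}$ yields \eqref{nonstationary:example:gaussian:10}--\eqref{nonstationary:example:gaussian:11i}, with $M_t$ collapsing to $E_t^{\T}H_t\Lambda_t E_t$ after inserting $\Phi_t=\sqrt{H_t\Delta_t Q_t^{-1}}$ from \eqref{nonstationary:example:gaussian:equation11mmm}.

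Next I would do part (1): in the $\Gamma,\widetilde{\Gamma}$ coordinates the components of $\Gamma_t = E_t K_t$ are independent Gaussian with variances $\lambda_{t,i}$, and the sum $\sum_t I(\Gamma_t;\widetilde{\Gamma}_t\mid\widetilde{\Gamma}^{t-1})$ decouples across time and space into independent scalar Gaussian rate-distortion subproblems (this is exactly the point where the claim that $\Pi_{t|t-1}$ — hence $\lambda_{t,i}$ — is \emph{fixed data} at each stage, established in part (2), is essential). Each scalar subproblem contributes $\tfrac12\log(\lambda_{t,i}/\delta_{t,i})^+$ with per-letter distortion $\delta_{t,i}$, giving \eqref{nonstationary:example:gaussian:equa.13}--\eqref{nonstationary:example:gaussian:determinant:expression}; the global distortion budget $\frac{1}{n+1}\sum_{t,i}\delta_{t,i}=D$ together with convexity of the rate functional forces the single-Lagrange-multiplier (water-level $\xi$) solution \eqref{nonstationary:example:gaussian:equa.13i}--\eqref{average_distortion:eq.1}, which I would justify by matching it to the optimality conditions of Theorem~\ref{nonstationary:theorem:alternative_expression_nonstationary} (the reverse-waterfilling structure with $s = -1/(2\xi)$). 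Parts (3) is then just bookkeeping: unwinding the definitions of $\Theta_t,\Phi_t$ from \eqref{nonstationary:example:gaussian:equation11mmm} and \eqref{nonstationary:example:gaussian:eq.12i} to put the realization into the stated form \eqref{nonstationary:10ab}.

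For part (4) I would use the fact that the realization channel delivers, at each $t$, exactly the ``Gaussian test-channel'' output for $K_t$, so that $\mathbb{E}\{X_t\mid\sigma\{Y^t\}\}=\mathbb{E}\{X_t\mid\sigma\{Y^{t-1}\},Y_t\}$ equals $Y_t$ — because $Y_t$ is, by construction, the MMSE reproduction of $X_t$ given the past and the current noisy reading — establishing \eqref{filter:outcome:eq:2}; then $\widehat{X}_{t+1|t}=\mathbb{E}\{A_tX_t+B_tW_t\mid\sigma\{Y^t\}\}=A_t\widehat{X}_{t|t}=A_tY_t$ gives \eqref{filter:outcome:eq:1}, and substituting back into \eqref{nonstationary:10ab} gives \eqref{nonstationary:10abbba}. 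Finally part (5), $\sigma\{Y^t\}=\sigma\{\tilde{K}^t\}$, follows because $\tilde{K}_t = Y_t - \widehat{X}_{t|t-1} = Y_t - A_{t-1}Y_{t-1}$ is an invertible (affine, causally computable) function of $(Y_{t-1},Y_t)$ and conversely $Y_t = \tilde K_t + A_{t-1}Y_{t-1}$ reconstructs $Y^t$ from $\tilde K^t$ and $Y_{-1}$ by induction.

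\textbf{Main obstacle.} The delicate step is justifying the clean decoupling in part (1): one must show that the backward recursion of Theorem~\ref{nonstationary:theorem:solution_rdf} for the \emph{vector} process, when specialized to the Gauss–Markov case, does not couple the waterfilling levels across different times through the $g_{t,n}$ terms — equivalently, that the functions $g_{t,n}(x^t,y^t)$ are quadratic with coefficients depending only on the (deterministic) covariance recursion, so that the ``effective'' single-letter distortion seen at stage $t$ is genuinely $\sum_i\delta_{t,i}$ with a \emph{common} water level $\xi$ across all $(t,i)$. I expect this to require carefully propagating the conditional-Gaussian ansatz through \eqref{nonstationary:eq.31b}--\eqref{nonstationary:eq.32b} and checking that $\Pi_{t|t-1}$ is indeed $\sigma\{Y^{t-1}\}$-measurable-constant (deterministic), which in turn relies on the realization in part (3) being consistent — so parts (1), (2), (3) are somewhat intertwined and should really be proved as a single induction on $t$ running backward for the rate/waterfilling and forward for the filter.
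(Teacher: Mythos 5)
Your overall architecture matches the paper's: the linear--Gaussian realization of Fig.~\ref{nonstationary:communication_system} with the preprocessing $(K_t,\tilde K_t)$, the modified Kalman recursions for part (2), and the bookkeeping for parts (3)--(5) (the paper proves \eqref{filter:outcome:eq:1} exactly as you suggest, by inserting $M_t=E_t^{\T}H_t\Lambda_tE_t$ into the gain so that $\widehat{X}_{t+1|t}$ collapses to $A_tY_t$, and part (5) by the same innovations/invertibility argument). The genuine gap is in part (1). You present $\tfrac12\sum_{t,i}\log(\lambda_{t,i}/\delta_{t,i})^{+}$ as the result of a ``decoupling into independent scalar Gaussian subproblems,'' but the infimum in \eqref{nonstationary:example:gaussian:equation.2} ranges over \emph{all} reproduction kernels, not only those of the diagonal test-channel form; what the realization actually yields is only the \emph{upper} bound $R^{na}_{0,n}(D)\le\tfrac{1}{2(n+1)}\sum_{t,i}\log(\lambda_{t,i}/\delta_{t,i})$ (obtained in the paper via $H(\tilde K_t|\tilde K^{t-1})\le H(\tilde K_t)$ and the entropy of $E_t^{\T}\Phi_tV_t^c$, followed by the Lagrange/KKT waterfilling over the allocation $\{\delta_{t,i}\}$). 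The matching \emph{lower} bound is where the real work lies, and the route you sketch --- propagating the conditional-Gaussian ansatz through the $g_{t,n}$ recursion of Theorem~\ref{nonstationary:theorem:solution_rdf} --- is not the paper's route and would be hard to close. The paper instead exploits that \emph{any} admissible choice of $\{\lambda_t\}$ in Theorem~\ref{nonstationary:theorem:alternative_expression_nonstationary} gives a lower bound: it takes $\lambda_t(k_t,\tilde k^{t-1})=\alpha_t(\tilde k^{t-1})/\bar p(k_t|\tilde k^{t-1})$, evaluates the Gaussian integral to pin down $\alpha_t$, and uses ``conditioning reduces entropy'' to replace $g_{t,n}(\tilde k^{t})$ by a bound $\bar g_{t,n}(\tilde k^{t-1})$ independent of $\tilde k_t$, so the backward recursion telescopes to $sD+\tfrac{p}{2}\log(-s/\pi)+\tfrac{1}{n+1}\sum_tH(K_t|\tilde K^{t-1})$ and is matched to the upper bound by $s=-1/(2\xi)$. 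Without some version of this dual construction, your part (1) is one-sided.

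A secondary point you should make explicit: the eigenvalues $\lambda_{t,i}$ are not fixed data for the allocation problem, because the Riccati update $\Pi_{t+1|t}=A_tE_t^{\T}\Delta_tE_tA_t^{\T}+B_tB_t^{\T}$ makes the stage-$(t{+}1)$ spectrum depend on the stage-$t$ distortions $\Delta_t$. Your ``independent scalar subproblems'' are therefore chained in time, which is why the common level $\xi$ must be found by the fixed-point iteration of Algorithm~1 rather than read off a one-shot decoupled waterfilling; your closing remark that (1)--(3) should be a joint backward/forward induction is the right instinct, but it needs to be carried out rather than anticipated.
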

\begin{proof}
See Appendix~\ref{nonstationary:theorem:proof:solution_gaussian}.
\end{proof}

\par We make the following observations regarding Theorem~\ref{nonstationary:example:gaussian:solution_gaussian}.
\begin{remark}
\label{remark:comments}
\begin{itemize}
\item[{\bf (1)}] The main features of Theorem~\ref{nonstationary:example:gaussian:solution_gaussian} are {the following:} \\ First, by \eqref{nonstationary:10abbba} the information structure of the optimal reproduction for the specific Gaussian source with memory given by  \eqref{nonstationary:example:gaussian:equation51} is Markov, i.e.,
\begin{align}
{Q}^*_{t}(dy_t|y^{t-1},x^t)\equiv{Q}^*_{t}(dy_t|y_{t-1},x_t).\label{structure:reproduction}
\end{align}
Hence, the output process $\{Y_t:~t\in\mathbb{N}_0^n\}$ is first order Markov.\\
Second, the time-space reverse-waterfilling property \eqref{nonstationary:example:gaussian:equa.13}-\eqref{nonstationary:example:gaussian:equa.13i}, states that if the reproduction error $\delta_{t,i}$ is above the eigenvalue $\lambda_{t,i}$ of the error covariance $\Pi_{t|t-1}$, then the time-space component $X_{t,i}$\footnote{$X_{t,i}$ is the time-space component of the vector process $\{X_t:~t\in\mathbb{N}^n_{0}\}$.} is not reconstructed by $Y_{t,i}$\footnote{$Y_{t,i}$ is the time-space component of the vector process $\{Y_t:~t\in\mathbb{N}^n_{0}\}$.}, for $t\in\mathbb{N}_0^n,~i=1,\ldots,p$. The behavior of $\delta_{t,i}$ is described by the reverse-waterfilling expression \eqref{nonstationary:example:gaussian:equa.13i}, and the level $\xi$ depends on $D$, i.e., the overall fidelity of the error.
\item[{\bf (2)}] For each $t+1$, $\widehat{X}_{t+1|t}=A_tY_t$, given by \eqref{nonstationary:example:gaussian:10}, is the  estimator of $X_{t+1}$ based on $Y^t$. In addition, the time-space reverse-waterfilling is part of the estimation algorithm. This is a variant of the Kalman filter.
\end{itemize}
\end{remark}

\par The following remark, is a direct consequence of Theorem~\ref{nonstationary:example:gaussian:solution_gaussian}, and illustrates the connection between $R^{na}_{0,n}(D)$ and $D_{0,n}(R^{na})$ given by \eqref{section:ndrf:equation1}.

\begin{remark}
\label{corollary}{\ \\}
From Theorem~\ref{nonstationary:example:gaussian:solution_gaussian} the NRDF of the Gaussian process \eqref{nonstationary:example:gaussian:equation51} is given by 
\begin{align}
{R}_{0,n}^{na}(D)&=\frac{1}{2}\frac{1}{n+1}\sum_{t=0}^n\sum_{i=1}^p\log\left\{\max\left(1,\frac{\lambda_{t,i}}{\delta_{t,i}}\right)\right\}\stackrel{(a)}\equiv\frac{1}{n+1}\sum_{t=0}^n\sum_{i=1}^p{R}_{t,i}^{na}(\delta_{t,i})\label{proof:lower:bound:msed:equation1}
\end{align}
where $(a)$ follows if we let 
\begin{align}
{R}_{t,i}^{na}(\delta_{t,i})\triangleq\frac{1}{2}\log\left\{\max\left(1,\frac{\lambda_{t,i}}{\delta_{t,i}}\right)\right\},~t\in\mathbb{N}_0^n, ~i=1,\ldots,p.\label{proof:lower:bound:msed:equation2}
\end{align}
By \eqref{proof:lower:bound:msed:equation2} we obtain 
\begin{align}
\delta_{t,i}=\lambda_{t,i}e^{-2R^{na}_{t,i}},~t\in\mathbb{N}_0^n, ~i=1,\ldots,p.\label{corollary:equation1}
\end{align}
Utilizing \eqref{average_distortion:eq.1}, we obtain
\begin{align}
D=\frac{1}{n+1}\sum_{t=0}^n\delta_t=\frac{1}{n+1}\sum_{t=0}^n\sum_{i=1}^p\delta_{t,i},~\delta_t\triangleq\sum_{i=1}^p\delta_{t,i}.\label{corollary:equation2}
\end{align}
Substituting \eqref{corollary:equation1} into \eqref{corollary:equation2} we obtain
\begin{align}
D_{0,n}(R^{na})=\frac{1}{n+1}\sum_{t=0}^n\delta_t=\frac{1}{n+1}\sum_{t=0}^n\sum_{i=1}^p\lambda_{t,i}e^{-2R^{na}_{t,i}}.\label{corollary:equation3}
\end{align}
\end{remark}

\par Next, we utilize the closed form expressions of the NRDF and FTH NDRF evaluated for time-varying multidimensional Gauss-Markov process to derive a lower bound on the MSE given in terms of conditional mutual information $I(X^n;Y_0^n|Y^{-1})$.
\begin{theorem}(Universal lower bound on mean square error)\label{theorem:universal:bound}{\ \\}
Let $\{X_t:~t\in\mathbb{N}_0^n\}$ be the multidimensional Gauss-Markov process given by \eqref{nonstationary:example:gaussian:equation51} and let $\{\widetilde{Y}_t:~t\in\mathbb{N}_0^n\}$ be any estimator (not necessarily Gaussian) of $\{X_t:~t\in\mathbb{N}_0^n\}$. The mean square error is bounded below by
\begin{align}
\frac{1}{n+1}\sum_{t=0}^n\mathbb{E}\left\{||X_t-\widetilde{Y}_t||_2^2\right\}\geq\frac{1}{n+1}\sum_{t=0}^n\sum_{i=1}^p\lambda_{t,i}e^{-2I(X_{t,i};\widetilde{Y}_{t,i}|\widetilde{Y}_{t-1,i})}\label{equation:universal lower bound}
\end{align}
\end{theorem}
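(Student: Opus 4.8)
The plan is to reduce the multidimensional, vector-valued bound to a collection of scalar bounds indexed by $(t,i)$, and on each scalar component invoke the classical information-theoretic lower bound on mean square error together with the data-processing inequality. First I would fix an arbitrary estimator $\{\widetilde{Y}_t: t\in\mathbb{N}_0^n\}$ and, exactly as in the preprocessing of Section~\ref{nonstationary:section:example:gaussian}, pass to the innovations coordinates: let $K_t\triangleq X_t-\widehat{X}_{t|t-1}$ with $\widehat{X}_{t|t-1}=\mathbb{E}\{X_t|\sigma\{\widetilde{Y}^{t-1}\}\}$, and diagonalize $\Pi_{t|t-1}$ via the unitary $E_t$ so that $E_t\Pi_{t|t-1}E_t^{\T}=\Lambda_t=\diag\{\lambda_{t,1},\dots,\lambda_{t,p}\}$, setting $\Gamma_t\triangleq E_tK_t$ and $\widetilde{\Gamma}_t\triangleq E_t(\widetilde{Y}_t-\widehat{X}_{t|t-1})$. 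Since $E_t$ is unitary and $\widehat{X}_{t|t-1}$ is $\sigma\{\widetilde{Y}^{t-1}\}$-measurable, the distortion is invariant, $\|X_t-\widetilde{Y}_t\|_2^2=\|\Gamma_t-\widetilde{\Gamma}_t\|_2^2=\sum_{i=1}^p(\Gamma_{t,i}-\widetilde{\Gamma}_{t,i})^2$, so it suffices to lower bound $\mathbb{E}\{(\Gamma_{t,i}-\widetilde{\Gamma}_{t,i})^2\}$ for each $(t,i)$. Here $\Gamma_{t,i}$ is a scalar Gaussian $N(0;\lambda_{t,i})$ random variable (this is part~(2) of Theorem~\ref{nonstationary:example:gaussian:solution_gaussian}: $K_t$ is Gaussian $N(0;\Pi_{t|t-1})$).

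Next I would apply the standard Gaussian rate–distortion lower bound in the form: for a Gaussian $N(0;\lambda)$ source $\Gamma$ and any reconstruction $\widehat{\Gamma}$ (jointly distributed however), $\mathbb{E}\{(\Gamma-\widehat{\Gamma})^2\}\geq \lambda\, e^{-2I(\Gamma;\widehat{\Gamma})}$. This is the scalar reverse-waterfilling bound; it follows from $I(\Gamma;\widehat{\Gamma})\geq h(\Gamma)-h(\Gamma-\widehat{\Gamma})\geq \frac12\log(2\pi e\lambda)-\frac12\log(2\pi e\,\mathbb{E}\{(\Gamma-\widehat{\Gamma})^2\})$, using the maximum-entropy property of the Gaussian. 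Applying this conditionally on $\widetilde{\Gamma}_{t-1,i}$ (equivalently, using $I(\Gamma_{t,i};\widetilde{\Gamma}_{t,i}|\widetilde{\Gamma}_{t-1,i})\geq \frac12\log(\lambda_{t,i}/\mathbb{E}\{(\Gamma_{t,i}-\widetilde{\Gamma}_{t,i})^2\})$, noting the conditional distribution of $\Gamma_{t,i}$ given $\widetilde{\Gamma}_{t-1,i}$ is still Gaussian with variance $\lambda_{t,i}$ since $\Gamma_t$ is the Kalman innovation and hence independent of $\widetilde{Y}^{t-1}\supseteq\sigma\{\widetilde{\Gamma}_{t-1,i}\}$) yields $\mathbb{E}\{(\Gamma_{t,i}-\widetilde{\Gamma}_{t,i})^2\}\geq\lambda_{t,i}e^{-2I(\Gamma_{t,i};\widetilde{\Gamma}_{t,i}|\widetilde{\Gamma}_{t-1,i})}$.

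Then I would relate the mutual-information term to the one appearing in the statement, $I(X_{t,i};\widetilde{Y}_{t,i}|\widetilde{Y}_{t-1,i})$, via the data-processing inequality: $\Gamma_{t,i}$ is a deterministic (affine, conditionally on the past observations) function of $X_{t,i}$ and $\widetilde{Y}^{t-1}$, and $\widetilde{\Gamma}_{t,i}$ is a deterministic function of $\widetilde{Y}_{t,i}$ and $\widetilde{Y}^{t-1}$, so $I(\Gamma_{t,i};\widetilde{\Gamma}_{t,i}|\widetilde{\Gamma}_{t-1,i})\leq I(X_{t,i};\widetilde{Y}_{t,i}|\widetilde{Y}_{t-1,i})$; since $x\mapsto \lambda e^{-2x}$ is decreasing this only strengthens the bound. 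Finally, summing over $i=1,\dots,p$ and $t=0,\dots,n$ and normalizing by $\frac{1}{n+1}$ gives \eqref{equation:universal lower bound}.

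I expect the main obstacle to be the bookkeeping in the reduction step — specifically, justifying cleanly that the component-wise conditioning in the scalar bound is legitimate, i.e., that $\Gamma_{t,i}$ remains $N(0;\lambda_{t,i})$ after conditioning on $\widetilde{Y}^{t-1}$ (this uses the orthogonality/independence properties of the Kalman innovation, which hold because $\widehat{X}_{t|t-1}$ is the conditional mean given $\sigma\{\widetilde{Y}^{t-1}\}$ regardless of which estimator is used), and that the coordinate-wise data-processing step is valid for a general, possibly non-Gaussian, estimator $\widetilde{Y}$. Care is also needed to pick the diagonalizing transformation $E_t$ consistently so that the $\lambda_{t,i}$ appearing in the bound are exactly those of Theorem~\ref{nonstationary:example:gaussian:solution_gaussian}; but since the bound holds for any unitary $E_t$ diagonalizing $\Pi_{t|t-1}$ and the eigenvalues $\{\lambda_{t,i}\}$ are intrinsic, this is not a genuine difficulty.
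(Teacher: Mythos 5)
Your high-level idea---reduce to scalar components and apply the scalar Gaussian distortion--rate bound $\mathbb{E}\{(\Gamma-\widehat{\Gamma})^2\}\geq\lambda e^{-2I(\Gamma;\widehat{\Gamma})}$---is the same one underlying the paper's argument, but the paper executes it differently and more economically: it takes the per-component distortions $\delta_{t,i}$ achieved by the arbitrary estimator, invokes the already-computed component rates $R^{na}_{t,i}=\tfrac{1}{2}\log\max(1,\lambda_{t,i}/\delta_{t,i})$ and the inverse relation $\delta_{t,i}=\lambda_{t,i}e^{-2R^{na}_{t,i}}$ from Remark~\ref{corollary}, asserts the converse inequality $R^{na}_{t,i}\leq I(X_{t,i};\widetilde{Y}_{t,i}|\widetilde{Y}_{t-1,i})$ (the NRDF is an infimum of mutual information over all schemes meeting the distortion), and concludes by monotonicity of $x\mapsto\lambda e^{-2x}$. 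No rotation to innovations coordinates and no data-processing step appear in the paper's proof.

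Two steps in your version do not go through as written. First, the data-processing claim $I(\Gamma_{t,i};\widetilde{\Gamma}_{t,i}|\widetilde{\Gamma}_{t-1,i})\leq I(X_{t,i};\widetilde{Y}_{t,i}|\widetilde{Y}_{t-1,i})$ is not an instance of the DPI: $\Gamma_{t,i}=\sum_j(E_t)_{ij}\bigl(X_{t,j}-\widehat{X}_{t,j|t-1}\bigr)$ depends on \emph{all} components of $X_t$ and on the entire past $\widetilde{Y}^{t-1}$, not on $(X_{t,i},\widetilde{Y}_{t-1,i})$ alone, and the conditioning variables on the two sides are different; there is no general inequality between conditional mutual informations when both the arguments and the conditioning are transformed in this way. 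Second, for a general non-Gaussian estimator the innovation $K_t=X_t-\mathbb{E}\{X_t|\sigma\{\widetilde{Y}^{t-1}\}\}$ is merely orthogonal to, not independent of, $\widetilde{Y}^{t-1}$, and it need not be Gaussian; your chain requires $h(\Gamma_{t,i}|\widetilde{\Gamma}_{t-1,i})\geq\tfrac{1}{2}\log(2\pi e\lambda_{t,i})$, whereas the maximum-entropy inequality gives the \emph{opposite} direction for a non-Gaussian variable, so the "obstacle" you flag at the end is in fact fatal to this route. Relatedly, the $\lambda_{t,i}$ of the theorem are generated by the reverse-waterfilling recursion \eqref{nonstationary:example:gaussian:11}, not as eigenvalues of the error covariance induced by $\widetilde{Y}$. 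The repair is to argue as the paper does: lower-bound $I(X_{t,i};\widetilde{Y}_{t,i}|\widetilde{Y}_{t-1,i})$ by the component NRDF at the achieved distortion and then invert via \eqref{corollary:equation3}, rather than pushing the arbitrary estimator through the optimal realization's coordinates.
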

\begin{proof}
Let $D=\frac{1}{n+1}\sum_{t=0}^n\mathbb{E}\left\{||X_{t}-\widetilde{Y}_{t}||_2^2\right\}$ where 
$$\mathbb{E}\left\{||X_{t}-\widetilde{Y}_{t}||_2^2\right\}=\sum_{i=1}^p\delta_{t,i} \text{ with } D\in[0,\infty).
$$
Since, in general, $R_{t,i}^{na}\leq{I}(X_{t,i};\widetilde{Y}_{t,i}|\widetilde{Y}_{t-1,i}),~t\in\mathbb{N}_0^n,~i=1,\ldots,p$, then by \eqref{corollary:equation3}, we obtain
\begin{align}
\frac{1}{n+1}\sum_{t=0}^n\mathbb{E}\left\{||X_t-\widetilde{Y}_t||_2^2\right\}&={D}_{0,n}(R^{na})=\frac{1}{n+1}\sum_{t=0}^n\sum_{i=1}^p\lambda_{t,i}e^{-2R_{t,i}^{na}}\nonumber\\
&\geq\frac{1}{n+1}\sum_{t=0}^n\sum_{i=1}^p\lambda_{t,i}e^{-2I(X_{t,i};\widetilde{Y}_{t,i}|\widetilde{Y}_{t-1,i})}\label{proof:universal:lower:bound:msed:equation1} ,
\end{align}
which is the desired result. This completes the proof.
\end{proof}

\par Notice that from Remark~\ref{remark:comments}, ${\bf(2)}$, if we substitute $\widetilde{Y}_t=\widehat{X}_{t|t-1}=A_{t-1}Y_{t-1}$ in Theorem~\ref{theorem:universal:bound}, then we have the lower bound \eqref{equation:universal lower bound}.

\par In the next remark, we relate degenerated versions of the lower bound given by \eqref{equation:universal lower bound} to existing results in the literature.
\begin{remark}(Relations to existing results)\label{lower:bound:classical:rdf}
\begin{itemize}
\item[(a)] \cite[Theorem 5.8.1]{ihara1993},\cite{blahut1987} Let $X=(X_1,\ldots,X_p)$ be a $p-$dimensional Gaussian vector with distribution $X\sim N(0;\Gamma_X)$ and $Y=(Y_1,\ldots,Y_p)$ be its reproduction vector. Then, for any $D>0$,
\begin{align}
R(D)\triangleq\inf_{Q(dy|x):\mathbb{E}||X-Y||_2^2\leq{D}}I(X;Y)=\frac{1}{2}\sum_{i=1}^p\log\left\{\max\left(1,\frac{\lambda_i}{\xi}\right)\right\}\label{remark:equation1}
\end{align}
where $\{\lambda_i:~i=1,\ldots,p\}$ are the eigenvalues of $\Gamma_X$ and $\xi>0$ is a constant uniquely determined by $\sum_{i=1}^p\min\{\lambda_i,\xi\}=D$.
Note that the solution of classical RDF in \eqref{remark:equation1} is based on reverse-waterfilling method (see \cite[Lemma 5.8.2]{ihara1993}). The above results are also obtained from Theorem~\ref {nonstationary:example:gaussian:solution_gaussian}, if we assume model  \eqref{nonstationary:example:gaussian:equation51} generates an $\IID$ sequence $\{X_t:~t\in\mathbb{N}_0^n\}$ (by setting $A_t=0$, $B_t=I$). In such case, $\widehat{X}_{t|t-1}=\mathbb{E}X_t=0$ and $\Pi_{t|t-1}=\mathbb{E}X_tX^{\T}_t=\Gamma_X$.
\item[(b)]  Assume $X\sim{N}(0;\sigma^2_{X})$. By \cite[Theorem 1.8.7]{ihara1993} the following holds.
\begin{align*}
R(D)&=\min_{Q(dy|x):~\mathbb{E}||X-Y||_2^2\leq{D}}I(X;Y) = \frac{1}{2}\log\left\{\max\left(1,\frac{\sigma^2_X}{D}\right)\right\},~D\geq{0},\\
D(R)&=\min_{Q(dy|x):~I(X;Y)\leq{R}}\mathbb{E}\left\{||X-Y||_2^2\right\}=\sigma^2_Xe^{-2R}. 
\end{align*} 
The realization scheme to achieve the classical RDF or the DRF is the following.
\begin{align}
Y=\Big(1-\frac{D}{\sigma^2_X}\Big)X+V^c,~V^c\sim{N}\left(0;D(1-\frac{D}{\sigma^2_X})\right).\label{remark:equation4}
\end{align}
Note that \eqref{remark:equation4} is a degenerated version of \eqref{nonstationary:10ab} assuming the model of \eqref{nonstationary:example:gaussian:equation51} generates $\IID$ sequence $\{X_t:~t\in\mathbb{N}_0^n\}$ as in {(a)}, and the connection to Theorem \ref {nonstationary:example:gaussian:solution_gaussian} is established by setting $E_t=1$, ${H}_t=1-\frac{D}{\sigma^2_X}$, $\widehat{X}_{t|t-1}=0$, ${\Phi}_t=H_tD$ and $V_t^c\sim{N}(0;1)$. 
\item[(c)](Lower bound on MSE \cite[1.8.8]{ihara1993},\cite{liptser-shiryaev2001}) Given a Gaussian RV $X\sim{N}(0;\sigma^2_X)$, then for any real valued RV $\widetilde{Y}$(not necessarily Gaussian) the MSE is bounded below by
\begin{align}
\mathbb{E}||X-\widetilde{Y}||_2^2\geq\sigma^2_Xe^{-2I(X;\bar{Y})}.\label{remark:equation2}
\end{align}
\end{itemize}
\end{remark}
The RDF of the Gaussian RV $X\sim{N}(0;\sigma^2_X)$ and the lower bound in \eqref{remark:equation2}, are utilized in \cite{ihara1993,liptser-shiryaev2001} to derive optimal coding and decoding schemes for transmitting a Gaussian message $\theta\sim{N}(0;\sigma^2_\theta)$ over an AWGN channel with feedback, $Y_t=X_t(\theta,Y^{t-1})+V_t^c,~t\in\mathbb{N}_0^n$, where $\{V_t^c:~t\in\mathbb{N}_0^n\}$ is $\IID$ Gaussian process. Although we do not pursue such problems in this paper, we note that Theorems \ref{nonstationary:example:gaussian:solution_gaussian} and \ref{theorem:universal:bound} are necessary in order to derive optimal coding schemes for additive Gaussian channels with memory (including additive Gaussian memoryless channels). 

\subsection{Examples}\label{examples}

\par In this section, we numerically compute the {NRDF} of {\it time-varying} Gauss-Markov process, using Theorem~\ref{nonstationary:example:gaussian:solution_gaussian}. For these examples, the utility of the reverse-waterfilling algorithm is necessary even when the process elements are scalar (i.e., $p=1$). For process elements in higher dimensions (i.e., $p\geq{2}$), the complexity of the problem increases, since the reverse-waterfilling algorithm must be solved both in time and space units. We overcome this obstacle by proposing an iterative algorithmic technique that allocates information of the Gaussian process and distortion levels optimally.
 
\begin{remark}(Relations to existing results){\ \\}
The examples presented here deal with the time-space aspects of the reverse-waterfilling algorithm. This is fundamentally different from \cite[Section IV.C]{stavrou-kourtellaris-charalambous2016ieeetit} where it is assumed that the optimal reproduction distributions $\{Q^*_t(dy_t|y^{t-1},x_t)=Q^*(dy_t|y^{t-1},x_t): t\in\mathbb{N}_0\}$ are time-invariant (identical).  
\end{remark}

\begin{varalgorithm}{1}
\caption{Rate distortion allocation algorithm: The vector case}
\begin{algorithmic}
\STATE {\textbf{Initialize:} }
\STATE {The number of time-steps $n$; the number of channels $p$ the distortion level $D$; the error tolerance $\epsilon$; the initial covariance matrix $\bar{\Pi}_{0|-1}$ of the error process $K_0$, the state-space matrices $A_t$ and $B_t$ of the time-varying multidimensional Gauss-Markov process $X_t$ given by \eqref{nonstationary:example:gaussian:equation51}.}
\STATE {}
\STATE {Set $\xi=D$; $\text{flag}=0$.}
\STATE {}
\WHILE{$\text{flag}=0$}
\STATE {Compute $\delta_{t,i}~\forall~t,i$ as follows:}
\FOR {$t=0:n$}
\STATE {Perform Singular Value Decomposition: $[E_t,\Lambda_t]=\texttt{SVD}(\Pi_{t|t-1})$}
\STATE {$\Delta_t$ is computed according to \eqref{nonstationary:example:gaussian:equa.13i}.}
\STATE {Use $A_t$ $B_t$ and $\Delta_t$ to compute $\Pi_{t+1|t}$ according to \eqref{nonstationary:example:gaussian:11}.}
\ENDFOR
\IF {$|\frac{1}{n+1}\sum_{t=0}^n\sum_{i=1}^p\delta_{t,i}-D| \leq \epsilon$}
\STATE {$\text{flag}\leftarrow 1$}
\ELSE
\STATE {Re-adjust $\xi$ as follows:}
\STATE {$\xi \leftarrow \xi +\beta(D-\frac{1}{n+1}\sum_{t=0}^n\sum_{i=1}^p\delta_{t,i})$, where $\beta\in(0,1]$ is a proportionality gain and affects the rate of convergence.}
\ENDIF
\ENDWHILE
\end{algorithmic}
\label{algo2}
\end{varalgorithm}

\begin{example}
Consider the following two-dimensional Gauss-Markov process
\begin{align}
\begin{bmatrix}
    X_{t+1,1} \\
    X_{t+1,2}
 \end{bmatrix}
=\underbrace{\begin{bmatrix}
    -\alpha_t & 1 \\
    -\beta_t & 0 
 \end{bmatrix}}_{A_t}
\begin{bmatrix}
    X_{t,1} \\
    X_{t,2}
 \end{bmatrix}
+\underbrace{\begin{bmatrix}
    \sigma_{W_{t,1}} & 0 \\
    0 & \sigma_{W_{t,2}} 
 \end{bmatrix}}_{B_t} 
\begin{bmatrix}
    W_{t,1} \\
    W_{t,2}
 \end{bmatrix}  
~t=0,1,2,~i=1,2,  \label{nonstationary:two:dimension:equation1}
\end{align}
where $W_{t,i}\sim{N}(0;1)$, $\sigma_{W_{t,i}}W_{t,i}\sim{N}(0;\sigma^2_{W_{t,i}})$ and $\{A_t,B_t\}$ are time-varying matrices. This example corresponds to \eqref{nonstationary:example:gaussian:equation51} for $p=k=n=2$.
For this example, we choose the distortion level $D=3$ and consider the following matrices $\{A_t,B_t\}$:
\begin{align*}
&A_0=\begin{bmatrix}
    -0.5 & 1 \\
    -0.4 & 0 
 \end{bmatrix}
,~B_0=\begin{bmatrix}
     1 & 0 \\ 
     0 & 1
\end{bmatrix} \\
&A_1=\begin{bmatrix}
    -0.4 & 1 \\
    -0.5 & 0 
 \end{bmatrix}
,~B_1=\begin{bmatrix}
     0.9 & 0 \\ 
     0 & 1.4
\end{bmatrix} \\
&A_2=\begin{bmatrix}
    -0.9 & 1 \\
    -0.5 & 0 
 \end{bmatrix}
,~B_2=\begin{bmatrix}
     1.2 & 0 \\ 
     0 & 1.3
\end{bmatrix}.
\end{align*}
The initial covariance matrix of the error process $K_t$ is
\begin{align*}
\bar{\Pi}_{0|-1}=\begin{bmatrix}
    0.6 & 0.2 \\ 
    0.2 & 0.4
\end{bmatrix}.
\end{align*}
Recall that the covariance matrix of the error process $K_t$ given by \eqref{nonstationary:example:gaussian:11} is simplified to 
\begin{align}
\Pi_{t+1|t}=A_tE^{\T}_t\big\{\diag\{\delta_{t,1},\delta_{t,2}\}\big\}E_tA^{\T}_t+B_tB^{\T}_t,~t=0,1,2,~\Pi_{0|-1}=\bar{\Pi}_{0|-1}\label{nonstationary:twodimension:example:equation1a}
\end{align}
and $\delta_{t,i}$ given by \eqref{nonstationary:example:gaussian:equa.13i} becomes
\begin{align} 
\delta_{t,i}=\min\{\lambda_{t,1},\xi\},~t=0,1,2,~i=1,2.\label{nonstationary:twodimension:example:equation1b}
\end{align}
Now let us implement Algorithm~\ref{algo2} for error tolerance $\epsilon=10^{-3}$. We choose an initial $\xi=\xi_0$ to start our iterations. A good starting point is $\xi_0=D$. 
For $\bar{\Pi}_{0|-1}$ we perform Singular Value Decomposition (SVD) and we obtain the unitary matrix 
\begin{align*}
E_0=
\begin{bmatrix}
   -0.8507   & -0.5257\\
  -0.5257   &  ~ 0.8507
\end{bmatrix}
\end{align*}
and the eigenvalues in a diagonal matrix that correspond to the levels of the noise $\lambda_{0,1}$ and $\lambda_{0,2}$, i.e., 
\begin{align*}
\Lambda_0=
\begin{bmatrix}
   0.7236   &      0\\
         0  &  0.2764
\end{bmatrix}.
\end{align*}
For $\xi=\xi_0= D=3$ and $(\lambda_{0,1},\lambda_{0,2})=(0.72, 0.28)$ we compute $\Delta_0$ using \eqref{nonstationary:twodimension:example:equation1b}. Hence,
\begin{align*}
\Delta_0=\Lambda_0=
\begin{bmatrix}
    0.7236   &      0\\
         0  &  0.2764
\end{bmatrix}.
\end{align*}

Using $A_0$, $B_0$, $\Delta_0$ and $E_0$ we compute $\Pi_{1|0}$ using \eqref{nonstationary:twodimension:example:equation1a} 
\begin{align*}
{\Pi}_{1|0}=\begin{bmatrix}
1.3500     & 0.0400\\
    0.0400 & 1.0960
\end{bmatrix},
\end{align*}
and the procedure of (a) computing the SVD of $\Pi_{1|0}$, (b) computing $\Delta_1$ is repeated as it is done for ${\Pi}_{1|0}$. Similarly, the procedure is repeated for all $t=0,1,\ldots, n$. At the end, for the given $\xi$ we check if $|\frac{1}{n+1}\sum_{t=0}^n\sum_{i=1}^p\delta_{t,i}-D| \leq \epsilon$. If it does, we stop the iterations and the last $\xi$ is the level we want. If not, we update $\xi$ as $\xi \leftarrow \xi +\beta(D-\frac{1}{n+1}\sum_{t=0}^n\sum_{i=1}^p\delta_{t,i})$ and we repeat the procedure for all $t$ again. For this example, the final reverse-waterfilling is found in 9 iterations and it is shown in Figure~\ref{fig:example1_reverseWaterfilling}.
\begin{figure}[h]
\centering
\includegraphics[width=0.6\columnwidth]{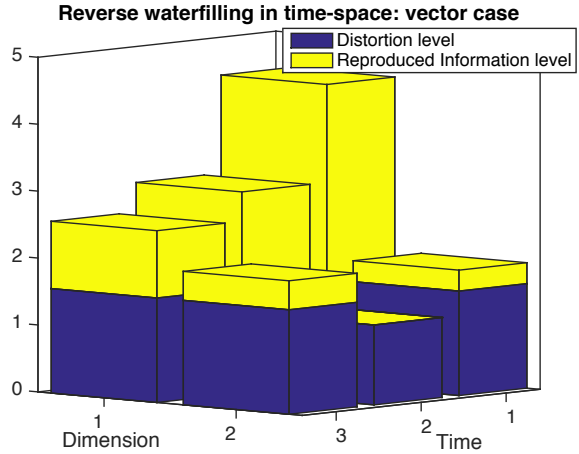}
\caption{Reverse-waterfilling in time-space for n=2 time units and p=2 space units.}
\label{fig:example1_reverseWaterfilling}
\end{figure}

By (\ref{nonstationary:example:gaussian:equa.13}) we compute the {NRDF}: 
\begin{align*}
{R}_{0,2}^{na}(D)=\frac{1}{2}\frac{1}{2+1}\sum_{t=0}^2\sum_{i=1}^2\log\left(\frac{\lambda_{t,i}}{\delta_{t,i}}\right) 
=0.6330~\mbox{bits/source symbol}
\end{align*}

\end{example}

In the next corollary, we degrade the results derived in Theorem~\ref{nonstationary:example:gaussian:solution_gaussian} to the case of {\it time-varying scalar Gauss-Markov process}. This corollary emphasizes on the fact that even in its simplest form, i.e., when $p=1$, the computation of FTH NRDF for time-varying Gauss-Markov processes can only be evaluated numerically by utilizing algorithmic methods. Note that in the sequel, when we refer to the scalar Gaussian process, for simplicity we will not make use of the dimension subscript, that is, $\lambda_{t,1}\equiv\lambda_t, \delta_{t,1}\equiv\delta_t, \eta_{t,1}\equiv\eta_t, q_{t,1}\equiv{q}_t$ etc.

\begin{corollary}($R^{na}_{0,n}(D)$ of time-varying scalar Gauss-Markov process)\label{nonstationary:example:gaussian:scalar:solution_gaussian}{\ \\}
This corresponds to (\ref{nonstationary:example:gaussian:equation51}) by setting $p=k=1$, $A_t=\alpha_t$, $B_t=\sigma_{W_{t}}$, i.e., $\sigma_{W_{t}}W_t\sim{N}(0;\sigma^2_{W_{t}})$ giving
\begin{align}
X_{t+1}=\alpha_t{X}_t+\sigma_{W_{t}}{W}_t,~W_t\sim{N}(0;1),~X_0\sim{N}(0;\sigma^2_{X_{0}}), ~t=0,1,\ldots,n\label{nonstationary:scalar:equation111}
\end{align}
where $\{\alpha_t,\sigma_{W_t}:~t=0,1,\ldots,n\}$ are time varying. Then $\sigma^2_{X_{t}}\triangleq\var(X_t)$, satisfies~$\sigma^2_{X_{t+1}}=\alpha^2_t\sigma^2_{X_t}+\sigma^2_{W_t},~\sigma^2_{X_0}=\sigma^2_0,~t\in\mathbb{N}_0^n$. \\
In this case, by Theorem~\ref{nonstationary:example:gaussian:solution_gaussian}, and (\ref{nonstationary:example:gaussian:equa.13}) we obtain 
\begin{align}
{R}_{0,n}^{na}(D)=\frac{1}{2}\frac{1}{n+1}\sum_{t=0}^n\log\left(\frac{\lambda_{t}}{\delta_{t}}\right)\label{nonstationary:example:gaussian:scalar_markov1}
\end{align}
where
\begin{align}
\delta_{t} \triangleq\left\{ \begin{array}{ll} \xi & \mbox{if} \quad \xi\leq\lambda_{t} \\
\lambda_{t} &  \mbox{if}\quad\xi>\lambda_{t} \end{array} \right.,~\forall{t}\label{nonstationary:watefilling:scalar}
\end{align}
with $\xi$ fixed such that $\frac{1}{n+1}\sum_{t=0}^n\delta_{t}=D, \delta_{t}=\min_{t}\{\lambda_{t},\xi\}$ and $\Pi_{t|t-1}=\Lambda_t=\lambda_{t}$, (i.e., $E_t=1$), $H_t=\eta_{t}=1-\frac{\delta_{t}}{\lambda_{t}},~t=0,\ldots,n$.\\
By (\ref{nonstationary:example:gaussian:11i}), we obtain
\begin{align}
&M_{t}=\lambda_{t}H^2_t+H_t{\delta_{t}}=H_t\left(\lambda_{t}{H}_t+\delta_{t}\right)=H_t\left(\lambda_{t}\left(1-\frac{\delta_{t}}{\lambda_{t}}\right)+\delta_{t}\right)=\lambda_{t}{H}_t.\label{nonstationary:example:gaussian:11j}
\end{align}
Also, by (\ref{nonstationary:example:gaussian:11}), we obtain
\begin{align}
\lambda_{t+1}&=\alpha_t^2\lambda_{t}-\alpha_t^2\lambda^2_{t}{H}^2_t{M}^{-1}+\sigma_{W_t}^2\stackrel{(a)}
=\alpha_t^2\lambda_{t}-\alpha_t^2\lambda^2_{t}{H}^2_t{H}^{-1}_t\lambda^{-1}_{t}+\sigma_{W_t}^2\nonumber\\
&=\alpha_t^2\lambda_{t}-\alpha_t^2\lambda_{t}{H}_t+\sigma_{W_t}^2=\alpha_t^2\lambda_{t}-\alpha_t^2\lambda_{t}\left(1-\frac{\delta_{t}}{\lambda_{t}}\right)+\sigma_{W_t}^2=\alpha_t^2{\delta_{t}}+\sigma_{W_t}^2,~\bar{\lambda}_{0}=\sigma^2_{X_0}\label{nonstationary:example:gaussian:11k}
\end{align}
where $(a)$ follows from (\ref{nonstationary:example:gaussian:11j}).
\end{corollary}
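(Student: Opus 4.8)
The plan is to derive Corollary~\ref{nonstationary:example:gaussian:scalar:solution_gaussian} as a direct specialization of Theorem~\ref{nonstationary:example:gaussian:solution_gaussian} to the one–dimensional case $p=k=1$, $A_t=\alpha_t$, $B_t=\sigma_{W_t}$, supplemented by the elementary variance recursion for \eqref{nonstationary:scalar:equation111}. First I would dispose of the variance claim: taking expectations in \eqref{nonstationary:scalar:equation111} with $\mathbb{E}\{X_0\}=0$ gives $\mathbb{E}\{X_t\}=0$ for all $t$; since $W_t\sim N(0;1)$ is independent of $X_t$ (being independent of $X_0,W_0,\ldots,W_{t-1}$), taking variances of \eqref{nonstationary:scalar:equation111} yields $\sigma^2_{X_{t+1}}=\alpha_t^2\sigma^2_{X_t}+\sigma^2_{W_t}$ with $\sigma^2_{X_0}=\sigma^2_0$.

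Next, for the NRDF formula and the reverse–waterfilling conditions I would observe that when $p=1$ the error covariance $\Pi_{t|t-1}$ of \eqref{nonstationary:example:gaussian:equa.133i} is a scalar, so the unitary transformation $E_t$ of \eqref{nonstationary:example:gaussian:equation53} is trivially $E_t=1$; hence $\Lambda_t=\Pi_{t|t-1}\equiv\lambda_t$ and the space index $i$ can be dropped throughout. Substituting $p=1$ into \eqref{nonstationary:example:gaussian:equa.13}, \eqref{nonstationary:example:gaussian:equa.13i} and \eqref{average_distortion:eq.1} then produces \eqref{nonstationary:example:gaussian:scalar_markov1}, the threshold rule \eqref{nonstationary:watefilling:scalar}, and the constraint $\frac{1}{n+1}\sum_{t=0}^n\delta_t=D$, while \eqref{nonstationary:example:gaussian:equation11mmm} gives $H_t=\eta_t=1-\delta_t/\lambda_t$.

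Then I would carry out the scalar algebra in the Kalman recursions, which is the bulk of the computation but entirely routine. From the scalar form of \eqref{nonstationary:example:gaussian:11i}, $M_t=\lambda_tH_t^2+H_t\delta_t=H_t(\lambda_tH_t+\delta_t)$, and since $\lambda_tH_t=\lambda_t-\delta_t$ this collapses to $M_t=\lambda_tH_t$, i.e.\ \eqref{nonstationary:example:gaussian:11j}. Feeding this into the scalar form of \eqref{nonstationary:example:gaussian:11} gives $\lambda_{t+1}=\alpha_t^2\lambda_t-\alpha_t^2\lambda_t^2H_t^2M_t^{-1}+\sigma_{W_t}^2=\alpha_t^2\lambda_t-\alpha_t^2\lambda_tH_t+\sigma_{W_t}^2=\alpha_t^2\delta_t+\sigma_{W_t}^2$, which is \eqref{nonstationary:example:gaussian:11k}, with $\lambda_0=\sigma^2_{X_0}$ inherited from the initialization $\Pi_{0|-1}=\bar\Pi_{0|-1}$.

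The one genuinely delicate point — and the reason the corollary still calls for the iterative procedure of Algorithm~\ref{algo2} rather than an explicit threshold — is that $\xi$ and the sequence $\{\lambda_t\}$ are coupled: $\lambda_{t+1}$ depends on $\delta_t=\min\{\lambda_t,\xi\}$, which itself depends on $\xi$, so $\frac{1}{n+1}\sum_{t=0}^n\delta_t=D$ is a fixed–point equation in $\xi$ rather than a closed–form one. I do not expect to need a new argument here: existence of such a $\xi$ is inherited from the hypotheses of Theorem~\ref{nonstationary:example:gaussian:solution_gaussian} (equivalently from continuity and monotonicity of the map $D\mapsto\xi$), and I would simply remark that the scalar case already exhibits this nonseparability, which is precisely what motivates the numerical treatment. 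This completes the plan.
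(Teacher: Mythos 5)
Your proposal is correct and follows essentially the same route as the paper: the corollary is obtained by specializing Theorem~\ref{nonstationary:example:gaussian:solution_gaussian} to $p=k=1$ (so that $E_t=1$ and $\Lambda_t=\Pi_{t|t-1}=\lambda_t$), and the scalar identities $M_t=\lambda_tH_t^2+H_t\delta_t=\lambda_tH_t$ and $\lambda_{t+1}=\alpha_t^2\delta_t+\sigma_{W_t}^2$ are exactly the inline computations \eqref{nonstationary:example:gaussian:11j}--\eqref{nonstationary:example:gaussian:11k} carried out in the statement itself. Your closing observation that $\xi$ and $\{\lambda_t\}$ are coupled through $\delta_t=\min\{\lambda_t,\xi\}$, making the distortion constraint a fixed-point condition that motivates Algorithm~2, is consistent with (and slightly more explicit than) the paper's treatment.
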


Similarly to Algorithm~\ref{algo2}, we structure Algorithm~\ref{algo1} for rate distortion allocation.

\begin{varalgorithm}{2}
\caption{Rate distortion allocation algorithm: The scalar case}
\begin{algorithmic}
\STATE {\textbf{Initialize:} }
\STATE{The number of time-steps $n$; the distortion level $D$; the error tolerance $\epsilon$; the initial variance $\bar{\lambda}_0=\sigma^2_{X_{0}}$ of the initial state $X_0$, the values $a_t$ and $\sigma^2_{W_t}$ of the time-varying scalar Gauss-Markov process $X_t$ given by \eqref{nonstationary:scalar:equation111}}. 
\STATE{}
\STATE {Set $\xi=D$; $\text{flag}=0$.}
\STATE {}
\WHILE{$\text{flag}=0$}
\STATE {Compute $\delta_{t}~\forall~t$ as follows:}
\FOR {$t=0:n$}
\STATE {$\delta_t$ is computed according to \eqref{nonstationary:watefilling:scalar}.}
\STATE {Use $a_t$ and $\sigma^2_{W_t}$ to compute $\lambda_{t+1}$ according to \eqref{nonstationary:example:gaussian:11k}.}
\ENDFOR
\IF {$|\frac{1}{n+1}\sum_{t=0}^n\delta_{t}-D| \leq \epsilon$}
\STATE {$\text{flag}\leftarrow 1$}
\ELSE
\STATE {Re-adjust $\xi$ as follows:}
\STATE {$\xi \leftarrow \xi +\beta(D-\frac{1}{n+1}\sum_{t=0}^n\delta_{t})$, where $\beta\in(0,1]$ is a proportionality gain and affects the rate of convergence.}
\ENDIF
\ENDWHILE
\end{algorithmic}
\label{algo1}
\end{varalgorithm}

\begin{example}\label{nonstationary:nrdf:example:scalar:waterfilling}
For this example, we choose the distortion level $D=2$ and use the following $\{a^2_t,\sigma^2_{W_t}\}$:
\begin{align*}
&(a^2_0, \sigma^2_{W_0})=(1,1) , \quad (a^2_1, \sigma^2_{W_1})=(0.2,1.3) , \quad (a^2_2, \sigma^2_{W_2})=(1.8,0.7) .
\end{align*}
The initial variance is $\sigma_{X_0}=1$. Hence, $\bar{\lambda}_0=\sigma_{X_0}=1$.

Now let us implement Algorithm~\ref{algo1} for error tolerance $\epsilon=10^{-3}$. We choose an initial $\xi=\xi_0$ to start our iterations. A good starting point is $\xi_0=D$. Using \eqref{nonstationary:watefilling:scalar}, $\delta_0=\min\{1,2\}=1$. Then, using \eqref{nonstationary:example:gaussian:11k}, $\lambda_1=\alpha_0^2{\delta_{0}}+\sigma_{W_0}^2$ and thus $\delta_1$ is computed. Similarly, the procedure is repeated for all $t=0,1,\ldots, n$. At the end, for the given $\xi$ we check if $|\frac{1}{n+1}\sum_{t=0}^n\delta_{t}-D| \leq \epsilon$. If it does, we stop the iterations and the last $\xi$ is the level we want. If not, we update $\xi$ as $\xi \leftarrow \xi +\beta(D-\frac{1}{n+1}\sum_{t=0}^n\delta_{t})$ and we repeat the procedure for all $t$ again.

For this example, the final reverse-waterfilling is found after $15$ iterations and it is shown in Figure~\ref{fig:example1_reverseWaterfilling}.
\begin{figure}[h]
\centering
\includegraphics[width=0.6\columnwidth]{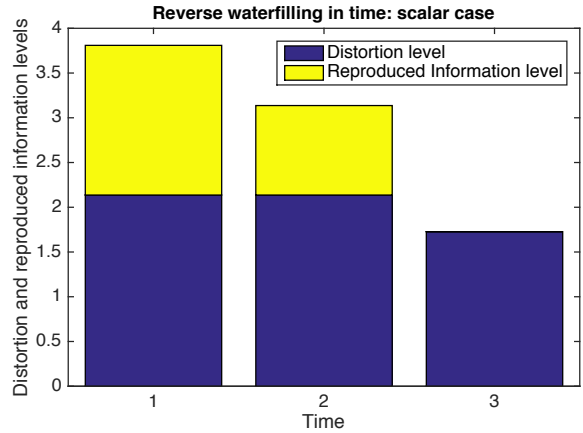}
\caption{Reverse-waterfilling in time for n=2 time units.}
\label{fig:example2_scalar}
\end{figure}

By \eqref{nonstationary:example:gaussian:scalar_markov1} we compute the NRDF: 
\begin{align*}
{R}_{0,2}^{na}(D)&=\frac{1}{2}\frac{1}{2+1}\sum_{t=0}^2\log\left(\frac{\lambda_{t}}{\delta_{t}}\right) 
=0.2314~\mbox{bits/source symbol}
\end{align*}

\end{example}

\subsection{Realization of \eqref{nonstationary:10abbba}}

In this section, we exemplify the relation between information-based estimation via NRDF and the fact that the latter can also be seen as a realization of an \texttt{\{encoder, channel, decoder\}} processing information optimally with zero-delay. For simplicity we consider scalar process ($p=1$). Note that this concept is precisely the one described in Fig.~\ref{introduction:fig:realization:nrdf:zero-delay}.

\begin{example}(Realization of \eqref{nonstationary:10abbba} for scalar processes)\label{example:gaussian:scalar:feeedback}{\ \\}
Let $X_t$ be the scalar time-varying Gauss-Markov process defined by (\ref{nonstationary:scalar:equation111}), and recall that  $\lambda_{t}=\alpha_{t-1}^2{\delta_{t-1}}+\sigma^2_{W_{t-1}}$ and $R^{na}(D)=\frac{1}{2}\sum_{t=0}^n\log\left(\frac{\delta_{t,i}}{\lambda_{t,i}}\right)$ (see (\ref{nonstationary:example:gaussian:11k}) and \eqref{nonstationary:example:gaussian:scalar_markov1}, respectively). 
\par Using \eqref{nonstationary:10abbba} for $p=1$, we obtain the following expression:
\begin{align}
Y_t=\Phi_tZ_t+\alpha_{t-1}Y_{t-1},~ Z_t=\Theta_t\left(X_t-\alpha_{t-1}Y_{t-1}\right)+V^c_t\label{example:realization:scalar},
\end{align}
where 
\begin{align}
{\Phi}_t\triangleq\sqrt{\frac{H_t\delta_t}{q_t}}~~~\mbox{and}~~~\Theta_t\triangleq\sqrt{\frac{H_tq_t}{\delta_t}}.\label{example:scalings}
\end{align}
Note that $H_t, \delta_t$ are defined in Corollary~\ref{nonstationary:example:gaussian:scalar:solution_gaussian} and $q_t$ is the variance of the scalar noise process $V^c_t~\sim{N}(0;q_t)$.
\par Next,  we consider the FTH information capacity of a memoryless AWGN channel with or without feedback with Gaussian noise process given as follows
\begin{align}
C_{0,n}(P)=\frac{1}{2}\frac{1}{n+1}\sum_{t=0}^n\log\left(1+\frac{P_t}{q_t}\right)\label{gaussian_capacity_scalar} ,
\end{align}
where $P_t$ is the power level allocated at each time.\\
Suppose that this channel is used once per source symbol, that is, the coding rate between the source symbols and the channel symbols is $1$ \cite[Definition 2.1]{gastpar2002}. For the realization in \eqref{example:realization:scalar}, the smallest achievable distortion is obtained by setting \eqref{nonstationary:example:gaussian:scalar_markov1}=\eqref{gaussian_capacity_scalar} that yields
\begin{align}
{\delta^{\min}_{t}}=\frac{\lambda_{t}q_t}{q_t+P_{t}}=\frac{\left(\alpha_{t-1}^2{\delta^{\min}_{t-1}}+\sigma^2_{W_{t-1}}\right)q_t}{q_t+P_{t}},~{t\in\mathbb{N}_1^{n}} ,\label{nonstationary:scalar:minimum_distortion_2}
\end{align}
where
\begin{align}
D_{\min}=\frac{1}{n+1}\sum_{t=0}^n\delta^{\min}_{t}. \label{example:jscc:scalar:eq.3}
\end{align}
Evaluating  $\lambda_{t}=\alpha_{t-1}^2{\delta_{t-1}}+\sigma^2_{W_{t-1}}$ at $\delta_t^{\min}$ the following feedback encoder operates at FTH information capacity.
\begin{align}
\begin{split}
Z_t&= \sqrt{\frac{P_t}{\lambda_t}}K_t+q_t=\sqrt{\frac{P_t}{\alpha_{t-1}^2{\delta^{\min}_{t-1}}+\sigma^2_{W_{t-1}}}}K_t+q_t, \\
K_t&=X_t-\mathbb{E}\{X_t|\sigma\{Y^{t-1}\}\} = X_t-\alpha_{t-1}{Y}_{t-1}\label{nonstationary:additive_channel_feedback1b}  ,
\end{split}
\end{align}
where $Z_t$ is the observation process containing the data.\\
In addition, the decoder (or the filter) is given by the realization in \eqref{example:realization:scalar}.
\par By \eqref{example:scalings}, the scaling factor ${\Phi}_t$ which guarantees the minimum end-to-end error is 
\begin{align}
{\Phi}_t=\sqrt{\frac{\alpha^2_{t-1}\delta^{\min}_{t-1}+\sigma^2_{W_{t-1}}}{P_{t}}}\frac{P_{t}}{q_t+P_{t}}.\label{scaling_factor_feedback_2}
\end{align}
Substituting (\ref{scaling_factor_feedback_2}) into (\ref{example:realization:scalar}) we obtain 
\begin{align}
Y_t = \alpha_{t-1}Y_{t-1} +\sqrt{\frac{\alpha^2_{t-1}\delta^{\min}_{t-1}+\sigma^2_{W_{t-1}}}{P_{t}}}\frac{P_{t}}{q_t+P_{t}}{Z}_t.
\label{decoder2b}
\end{align}
Finally, the average end-to-end distortion at each time instant is computed by evaluating the expectation
\begin{align*}
D_{t}=\mathbb{E}\{|X_t-Y_t|^2\}&=\frac{(\alpha^2_{t-1}\delta^{\min}_{t-1}+\sigma^2_{W_{t-1}})q^2_t+(\alpha^2_{t-1}\delta^{\min}_{t-1}+\sigma^2_{W_{t-1}})q_tP_{t}}{(P_{t}+q_t)^2}\nonumber\\
&=\frac{(\alpha^2_{t-1}\delta^{\min}_{t-1}+\sigma^2_{W_{t-1}})q_t}{q_t+P_{t}}=\delta^{\min}_{t}.
\end{align*} 
\noi The realization of \eqref{example:realization:scalar} with an \texttt{\{encoder, channel, decoder\}} operating with zero-delay is illustrated in Fig.~\ref{fig:nonstationary:realization_of_filter}.
\begin{figure}[ht]
\centering
\includegraphics[width=\columnwidth]{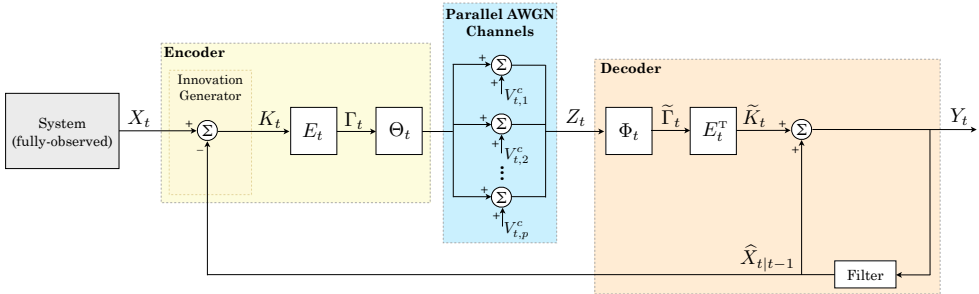}
\caption{Realization of the optimal reproduction of $R_{0,n}^{na}(D)$ given by \eqref{example:realization:scalar}. The scalings $\Theta_t$ and $\Phi_t$ are given by \eqref{nonstationary:additive_channel_feedback1b} and \eqref{scaling_factor_feedback_2} respectively\vspace{-0.3cm}.}\label{fig:nonstationary:realization_of_filter}
\end{figure}

\end{example}

%
%
%
%
\section{Conclusions and Future Directions}\label{sec:conclusions}

\par In this paper, we derived information-based causal filters via nonanticipative rate distortion theory in finite-time horizon. We exemplified our theoretical framework to time-varying multidimensional Gauss-Markov process subject to a MSE fidelity, and we demonstrated that obtaining such filters is equivalent to the design of an optimal \texttt{\{encoder, channel, decoder\}}, which ensures that the error fidelity is met. Unlike classical Kalman filters, the new information-based causal filter is characterized by a reverse-waterfilling algorithm. Moreover, we established a universal lower bound on the MSE of any estimator of a Gaussian random process. 

\par The results derived in this paper makes pave the way to generalizing the proposed framework to Gaussian sources governed by partially observed Gauss-Markov processes. Part of ongoing research focuses on how filtering with fidelity criteria affects stability and performance of control systems.

\appendix

\section{Proof of Theorem~\ref{nonstationary:theorem:alternative_expression_nonstationary}}\label{nonstationary:appendix:proof:theorem:alternative_expression_nrdf}

\par Let $s\leq{0}$, $\lambda_n\in\Psi_s^n$ and $\overrightarrow{Q}^*_{0,n}(\cdot|y^{-1},x^n)\in\overrightarrow{\cal Q}_{0,n}(D)$ be given. Then, using the fact that 
\begin{equation*}
\frac{1}{n+1}\sum_{t=0}^n\int_{{\cal X}^t\times{\cal Y}^t}\rho_t(T^tx^n,T^ty^n)({P}_{0,t}{\otimes}{\overrightarrow Q}_{0,t})\left(dx^t,dy_0^t|y^{-1}\right)\mu(dy^{-1})\leq{D}
\end{equation*}
we obtain
\begin{align}
&\frac{1}{n+1}\mathbb{I}_{0,n}({P}_{0,n},\overrightarrow{Q}_{0,n}) -sD +\frac{1}{n+1}\sum_{t=0}^n\int_{{\cal X}^t\times{\cal Y}^t}g_{t,n}(x^t,y^t)({P}_{0,t}{\otimes}{\overrightarrow Q}_{0,t})\left(dx^t,dy_0^t|y^{-1}\right)\otimes\mu(dy^{-1})\nonumber\\
&-\frac{1}{n+1}\sum_{t=0}^n\int_{{\cal X}^t\times{\cal Y}^{t-1}}\log\left(\lambda_t(x^t,y^{t-1})\right)({P}_{0,t}{\otimes}{\overrightarrow Q}_{0,t})\left(dx^t,dy_0^t|y^{-1}\right)\otimes\mu(dy^{-1})\nonumber\\
&\geq\frac{1}{n+1}\sum_{t=0}^n\int_{{\cal X}^t\times{\cal Y}^t}\log\left(\frac{Q^*_t(dy_t|y^{t-1},x^t)}{\Pi^{\overrightarrow{Q}^*}_{t}(dy_t|y^{t-1})}\right)({P}_{0,t}{\otimes}{\overrightarrow Q}_{0,t})\left(dx^t,dy_0^t|y^{-1}\right)\otimes\mu(dy^{-1})\nonumber\\
&\qquad-s\frac{1}{n+1}\sum_{t=0}^n\int_{{\cal X}^t\times{\cal Y}^t}\rho_t(T^tx^n,T^ty^n)({P}_{0,t}{\otimes}{\overrightarrow Q}_{0,t})\left(dx^t,dy_0^t|y^{-1}\right)\otimes\mu(dy^{-1})\nonumber\\
&\qquad+\frac{1}{n+1}\sum_{t=0}^n\int_{{\cal X}^t\times{\cal Y}^t}g_{t,n}(x^t,y^t)({P}_{0,t}{\otimes}{\overrightarrow Q}_{0,t})\left(dx^t,dy_0^t|y^{-1}\right)\otimes\mu(dy^{-1})\nonumber\\
&\qquad-\frac{1}{n+1}\sum_{t=0}^n\int_{{\cal X}^t\times{\cal Y}^t}\log\left(\lambda_t(x^t,y^{t-1})\right)({P}_{0,t}{\otimes}{\overrightarrow Q}_{0,t})\left(dx^t,dy_0^t|y^{-1}\right)\otimes\mu(dy^{-1})\nonumber\\
&=\frac{1}{n+1}\sum_{t=0}^n\int_{{\cal X}^{t-1}\times{\cal Y}^{t-1}}\Bigg\{\int_{{\cal X}_t\times{\cal Y}_t}\log\left(\frac{Q^*_t(dy_t|y^{t-1},x^t)e^{-s\rho_t(T^tx^n,T^ty^n)+g_{t,n}(x^t,y^t)}}{\Pi^{\overrightarrow{Q}^*}_{t}(dy_t|y^{t-1})\lambda_t(x^t,y^{t-1})}\right)\nonumber\\
&\qquad Q^*_t(dy_t|y^{t-1},x^t)\otimes{P}_t(dx_t|x^{t-1})\Bigg\}({P}_{0,t-1}{\otimes}{\overrightarrow Q}_{0,t-1})\left(dx^{t-1},dy_0^{t-1}|y^{-1}\right)\otimes\mu(dy^{-1})\nonumber\\
&\stackrel{(a)}\geq\frac{1}{n+1}\sum_{t=0}^n\int_{{\cal X}^{t-1}\times{\cal Y}^{t-1}}\Bigg\{\int_{{\cal X}_t\times{\cal Y}_t}\left(1-\frac{e^{s\rho_t(T^tx^n,T^ty^n)-g_{t,n}(x^t,y^t)}\Pi^{\overrightarrow{Q}^*}_{t}(dy_t|y^{t-1})\lambda_t(x^t,y^{t-1})}{Q^*_t(dy_t|y^{t-1},x^t)}\right)\nonumber\\
&\qquad\qquad {Q}^*_t(dy_t|y^{t-1},x^t)\otimes{P}_t(dx_t|x^{t-1})\Bigg\}({P}_{0,t-1}{\otimes}{\overrightarrow Q}_{0,t-1})\left(dx^{t-1},dy_0^{t-1}|y^{-1}\right)\mu(dy^{-1})\nonumber \\
&=\frac{1}{n+1}\sum_{t=0}^n\Bigg\{1-\int_{{\cal Y}^t}\Pi^{\overrightarrow{Q}^*}_{0,t}(dy_0^{t}|y^{-1})\otimes\mu(dy^{-1})\nonumber\\
&\left(\int_{{\cal X}^{t-1}}\int_{{\cal X}_t}{e^{s\rho_t(T^tx^n,T^ty^n)-g_{t,n}(x^t,y^t)}\lambda_t(x^t,y^{t-1})}P_t(dx_t|x^{t-1})\otimes{\bf P}^{\overrightarrow{Q}^*}(dx^{t-1}|y^{t-1})\right)\Bigg\}\nonumber\\
&\stackrel{(b)}\geq\frac{1}{n+1}\sum_{t=0}^n\left(1-\int_{{\cal Y}^t}\Pi^{\overrightarrow{Q}^*}_{0,t}(dy_0^t|y^{-1})\otimes\mu(dy^{-1})\right)=0\nonumber
\end{align}
where $(a)$ follows from the inequality $\log{x}\geq1-\frac{1}{x},~x>0$, and $(b)$ follows from (\ref{nonstationary:eq.39}).\\
Hence, we obtain
\begin{align}
\begin{split}
&R^{na}_{0,n}(D)\stackrel{(c)}\geq\\
&\sup_{s\leq{0}}\sup_{\lambda\in\Psi_s}\Big\{sD-\frac{1}{n+1}\sum_{t=0}^n\int_{{\cal X}^t\times{\cal Y}^t}g_{t,n}(x^t,y^t)({P}_{0,t}{\otimes}{\overrightarrow Q}_{0,t})\left(dx^{t},dy_0^{t}|y^{-1}\right)\otimes\mu(dy^{-1})\\
&+\frac{1}{n+1}\sum_{t=0}^n\int_{{\cal X}^t\times{\cal Y}^{t-1}}\log\left(\lambda_t(x^t,y^{t-1})\right){P}_t(dx_t|x^{t-1})({P}_{0,t-1}{\otimes}{\overrightarrow Q}_{0,t-1})\left(dx^{t-1},dy_0^{t-1}|y^{-1}\right)\otimes\mu(dy^{-1})\Big\}.\end{split}
\nonumber
\end{align}
However, equality in $(c)$ holds if 
\begin{align*}
\lambda_t(x^t,y^{t-1})\triangleq\left(\int_{{\cal Y}_t}e^{s\rho_t(T^tx^n,T^ty^n)-g_{t,n}(x^t,y^t)}\Pi^{\overrightarrow{Q}^*}_{t}(dy_t|y^{t-1})\right)^{-1},~\forall{t}\in\mathbb{N}_0^n.
\end{align*}
\par This completes the proof.

\section{Proof of Theorem~\ref{nonstationary:example:gaussian:solution_gaussian}}\label{nonstationary:theorem:proof:solution_gaussian}

{\bf (1)} The derivation is based on the fact that the feedback realization scheme of Fig.~\ref{nonstationary:communication_system} is generally an upper bound on the NRDF, $R_{0,n}^{na}(D)$, of the Gaussian process, and this realization gives (\ref{nonstationary:example:gaussian:equa.13}). The achievability of this upper bound is established by evaluating the lower bound in (\ref{nonstationary:example:gaussian:equa.13}) which is done recursively moving backward in time, utilizing the expression we obtained in Theorem~\ref{nonstationary:theorem:alternative_expression_nonstationary}.\\
{\it Upper Bound.} First, consider the realization of Fig.~\ref{nonstationary:communication_system}. Define $\{H_t:~{t\in\mathbb{N}_0^{n}}\}$ as in (\ref{nonstationary:example:gaussian:equation11mmm}). By Fig.~\ref{nonstationary:communication_system}, we obtain
\begin{align}
\tilde{K}_t&=E_t^{\T}H_tE_t\left(X_t-\mathbb{E}\left\{X_t|\sigma\{Y^{t-1}\}\right\}\right)+E_t^{\T}{\Phi}_tV^c_t=E_t^{\T}H_tE_tK_t+E_t^{\T}{\Phi}_tV^c_t,~{t\in\mathbb{N}_0^{n}}\label{nonstationary:example:gaussian:equation95ii}
\end{align}
where $\{V^c_t: {t\in\mathbb{N}_0^{n}}\}$ is a zero mean independent Gaussian process with covariance $\cov(V^c_t)=Q_t=\diag\{q_{t,1},\ldots,q_{t,p}\}$, and $\{{\Phi}_t:~{t\in\mathbb{N}_0^{n}}\}$ is to be determined. Next, we show that by letting ${\Phi}_t=\sqrt{H_t\Delta_t{Q}_t^{-1}}$, and $\Delta_t\triangleq\diag\{\delta_{t,1},\ldots,\delta_{t,p}\}$, then $\Pi_{t|t-1}=\mathbb{E}\left\{K_tK^{\T}_t\right\}$, and also $\frac{1}{n+1}\mathbb{E}\left\{\sum_{t=0}^n||X_t-Y_t||_{2}^2\right\}=\frac{1}{n+1}\mathbb{E}\left\{\sum_{t=0}^n||K_t-\tilde{K}_t||_{2}^2\right\}=D$. Clearly, by (\ref{nonstationary:example:gaussian:equation52}), (\ref{nonstationary:example:gaussian:eq.12i}), and (\ref{nonstationary:example:gaussian:equation95ii}), we obtain
\begin{align}
&\frac{1}{n+1}\sum_{t=0}^n\mathbb{E}\left\{(X_t-{Y}_t)^{\T}(X_t-{Y}_t)\right\}=\frac{1}{n+1}\sum_{t=0}^n{\trace}\left(\mathbb{E}\left\{(K_t-\tilde{K}_t)(K_t-\tilde{K}_t)^{\T}\right\}\right)\nonumber\\
&=\frac{1}{n+1}\sum_{t=0}^n{\trace}~\mathbb{E}\left\{(K_t-E_t^{\T}H_tE_tK_t-E_t^{\T}{\Phi}_tV^c_t)(K_t-E_t^{\T}H_tE_tK_t-E_t^{\T}{\Phi}_tV^c_t)^{\T}\right\}\nonumber\\
&=\frac{1}{n+1}\sum_{t=0}^n{\trace}\left\{E_t^{\T}\left((I-H_t)\diag(\lambda_{t,1},\ldots,\lambda_{t,p})(I-H_t)^{\T}+({\Phi}_tQ_t{\Phi}_t^{\T})\right)E_t\right\}\nonumber\\
&\stackrel{(a)}=\frac{1}{n+1}\sum_{t=0}^n{\trace}\left\{\diag(\delta_{t,1},\ldots,\delta_{t,p})\right\}=D , \nonumber
\end{align}
where $(a)$ holds by setting ${\Phi}_t$ as in (\ref{nonstationary:example:gaussian:equation11mmm}). By (\ref{nonstationary:example:gaussian:equation.2}), the NRDF can be written as follows:
\begin{align}
R_{0,n}^{na,K^n,\tilde{K}^n}(D)&\leq\frac{1}{n+1}\sum_{t=0}^n{I}(K_t;\tilde{K}_t|\tilde{K}^{t-1})\label{nonstationary:example:gaussian:equa.18-}\\
&=\frac{1}{n+1}\sum_{t=0}^n\left\{{H}(\tilde{K}_t|\tilde{K}^{t-1})-H(\tilde{K}_t|\tilde{K}^{t-1},K_t)\right\}\nonumber\\
&\stackrel{(b)}\leq\frac{1}{n+1}\sum_{t=0}^n\left\{{H}(\tilde{K}_t)-H(\tilde{K}_t|\tilde{K}^{t-1},K_t)\right\}\nonumber\\
&\stackrel{(c)}\leq\frac{1}{n+1}\sum_{t=0}^n\left\{{H}(\tilde{K}_t)-H(\tilde{K}_t|K_t)\right\}\nonumber\\
&\stackrel{(d)}\leq\sum_{t=0}^n\left\{{H}(\tilde{K}_t)-H(E_t^{\T}{\Phi}_tV^c_t)\right\} , \label{nonstationary:example:gaussian:equa.18}
\end{align}
where $(b)$ follows from the fact that conditioning reduces entropy (see also \cite[Lemma V.1, Remark V.2]{stavrou-kourtellaris-charalambous2016ieeetit}), $(c)$ follows again from the fact that $\tilde{K}_t=E_t^{\T}H_tE_tK_t+E_t^{\T}{\Phi}_tV^c_t$ is a memoryless Gaussian channel, and $(d)$ follows from the orthogonality of $K_t$ and $V_t^c$. Actually, by \cite[Lemma V.1, Remark V.2]{stavrou-kourtellaris-charalambous2016ieeetit}, it can be shown that the inequalities $(b)$, $(c)$, $(d)$ are equalities.\\
\noi Next, we compute the entropies appearing in (\ref{nonstationary:example:gaussian:equa.18}) from the covariances of the corresponding processes. The covariance of the Gaussian zero mean term $E_t^{\T}{\Phi}_tV^c_t,~{t\in\mathbb{N}_0^{n}}$, is given by
\begin{align}
\mathbb{E}\left\{(E_t^{\T}{\Phi}_tV^c_t)(E_t{\Phi}_tV_t^{c})^{\T}\right\}&=E_t^{\T}{\Phi}_t\mathbb{E}\{V^c_tV_t^{c,{\T}}\}{\Phi}_t^{\T}E_t=E_t^{\T}{\Phi}_tQ_t{\Phi}_t^{\T}E_t\nonumber\\
&=E_t^{\T}H_t\Delta_tE_t=E_t^{\T}\diag\{\eta_{t,1}\delta_{t,1},\ldots,\eta_{t,p}\delta_{t,p}\}E_t,~{t\in\mathbb{N}_0^{n}}.\label{nonstationary:example:gaussian:equa.19}
\end{align}
The covariance of $\tilde{K}_t,~{t\in\mathbb{N}_0^{n}}$, is given by
\begin{align}
\mathbb{E}\left\{\tilde{K}_t\tilde{K}^{\T}_t\right\}&=\mathbb{E}\left\{(E_t^{\T}H_tE_tK_t+E_t^{\T}{\Phi}_tV^c_t)(E_t^{\T}H_tE_tK_t+E_t^{\T}{\Phi}_tV^c_t)^{\T}\right\}\nonumber\\
&=E_t^{\T}\big(\diag\{\eta^2_{t,1}\lambda_{t,1},\ldots,\eta^2_{t,p}\lambda_{t,p}\}+\diag\{\eta_{t,1}\delta_{t,1},\ldots,\eta_{t,p}\delta_{t,p}\}\big)E_t\nonumber\\
&=E_t^{\T}\diag\{\lambda_{t,1}-\delta_{t,1},\ldots,\lambda_{t,p}-\delta_{t,p}\}E_t,~{t\in\mathbb{N}_0^{n}}.\label{nonstationary:example:gaussian:equa.20}
\end{align}
Using (\ref{nonstationary:example:gaussian:equa.20}) we obtain the first term of (\ref{nonstationary:example:gaussian:equa.18}) as follows\footnote{Note that $(\cdot)^{+}\triangleq\max\{0,\cdot\}$.}
\begin{align}
\sum_{t=0}^n{H}(\tilde{K}_t)=\frac{1}{2}\sum_{t=0}^n\sum_{i=1}^p\log\left\{\left(2{\pi}e\right)\left(\lambda_{t,i}-\delta_{t,i}\right)^{+}\right\}.\label{nonstationary:example:gaussian:equa.21}
\end{align}
Also, by (\ref{nonstationary:example:gaussian:equa.19}), we obtain the second term in (\ref{nonstationary:example:gaussian:equa.18}) as follows.
\begin{align}
\sum_{t=0}^n{H}(E_t^{\T}{\Phi}_tV^c_t)=\frac{1}{2}\sum_{t=0}^n\sum_{i=1}^p\log\left\{\left(2{\pi}e\right)\left(\eta_{t,i}\delta_{t,i}\right)\right\}.\label{nonstationary:example:gaussian:equa.22}
\end{align}
This problem can be cast into the following convex optimization problem
\begin{align}
\min_{\frac{1}{n+1}\sum_{t=0}^n\sum_{i=1}^p\delta_{\infty,i}=D}\frac{1}{n+1}\sum_{t=0}^n\sum_{i=1}^p\max\left\{0,\frac{1}{2}\log\left(\frac{\lambda_{t,i}}{\delta_{t,i}}\right)\right\}.\label{optimization problem:eq.1}
\end{align}
Since this is a convex optimization problem, we use Lagrange multipliers to construct the following augmented functional
\begin{align}
J(D)=\frac{1}{2}\sum_{t=0}^n\sum_{i=1}^p\log\left(\frac{\lambda_{t,i}}{\delta_{t,i}}\right)-s\frac{1}{n+1}\sum_{t=0}^n\sum_{i=1}^p\delta_{t,i},~~~s\leq{0}.\label{lagrangian_functional}
\end{align}
Differentiating with respect to $\delta_{t,i}$ and setting equal to zero, we obtain
\begin{align}
\frac{\partial{J}}{\partial{\delta_{t,i}}}=-\frac{1}{2\delta_{t,i}}-s=0\Longrightarrow{s}=-\frac{1}{2\delta_{t,i}}~~~\mbox{or}~~~\delta_{t,i}=\hat{\xi},~~\hat{\xi}\geq{0}.\label{lagrangian_functional:solution:eq.1}
\end{align}
Evidently, the optimal information allocation to the various descriptions results in an equal distortion for the components of the time-invariant multidimensional Gauss-Markov process. This is feasible if the constant $\hat{\xi}$ in \eqref{lagrangian_functional:solution:eq.1} is less than $\lambda_{t,i}$ $\forall{t,i}$. As the total distortion level increases, the constant $\hat{\xi}$ also increases until it exceeds $\lambda_{t,i}$ for some $t,i$. If we increase the total distortion, we must use the Karush-Kuhn-Tucker (KKT) conditions \cite{boyd-vandenberghe2004} to find the minimum in the convex optimization problem \eqref{optimization problem:eq.1}. By applying KKT conditions we obtain
\begin{align}
\frac{\partial{J}}{\partial{\delta_{t,i}}}=-\frac{1}{2\delta_{t,i}}-s,~~~s\leq{0}\label{lagrangian_functional:solution:eq.3}
\end{align}
where $s$ is chosen so that
\begin{align}
\frac{\partial{J}}{\partial{\delta_{t,i}}}=\left\{ \begin{array}{ll}0 & \mbox{if} \quad \delta_{t,i}\leq\lambda_{t,i} \\
\leq{0} &  \mbox{if}\quad\delta_{t,i}>\lambda_{t,i} \end{array} \right..\label{lagrangian_functional:solution:eq.4}
\end{align}
It is easy to verify that the solution of KKT conditions yields
\begin{align}
\delta_{t,i} \triangleq\left\{ \begin{array}{ll} \xi & \mbox{if} \quad \xi\leq\lambda_{t,i} \\
\lambda_{t,i} &  \mbox{if}\quad\xi>\lambda_{t,i} \end{array} \right.,~\forall{t,i}\label{lagrangian_functional:solution:eq.5}
\end{align}
where $\xi$ is chosen such that $\frac{1}{n+1}\sum_{t=0}^n\sum_{i=1}^p\delta_{t,i}=D$ and
$\delta_{t,i}=\min\{\xi,\lambda_{t,i}\}$.
\par Using (\ref{nonstationary:example:gaussian:equa.21}) and (\ref{nonstationary:example:gaussian:equa.22}) in (\ref{nonstationary:example:gaussian:equa.18}) we have the following upper bound
\begin{align}
\begin{split}
R_{0,n}^{na,K^n,\tilde{K}^n}(D)&\leq\frac{1}{n+1}\sum_{t=0}^n{I}(K_t;\tilde{K}_t|,\tilde{K}^{t-1})\\
&\leq\frac{1}{2}\frac{1}{n+1}\sum_{t=0}^n\sum_{i=1}^p\log\left\{\frac{\left(\lambda_{t,i}-\delta_{t,i}\right)^{+}}{\eta_{t,i}\delta_{t,i}}\right\}
=\frac{1}{2}\frac{1}{n+1}\sum_{t=0}^n\sum_{i=1}^p\log\left(\frac{\lambda_{t,i}}{\delta_{t,i}}\right) ,
\end{split}\label{nonstationary:example:gaussian:equation101}
\end{align}
where $\delta_{t,i}=\min\{\xi,\lambda_{t,i}\},~{t\in\mathbb{N}_0^{n}}, i=1,\ldots,p$, and $\frac{1}{n+1}\sum_{t=0}^n\sum_{i=1}^p\delta_{t,i}=D$. Note that if $\delta_{t,i}=\lambda_{t,i}$, for~${t\in\mathbb{N}_0^{n}}$ and $i=1,\ldots,p$, then no data are estimated.
\vspace*{0.2cm}\\
\noi{\it Lower Bound.} Here, we apply Theorem~\ref{nonstationary:theorem:alternative_expression_nonstationary} recursively, to obtain a lower bound for the NRDF, ${R}_{0,n}^{na}(D)=R_{0,n}^{na, K^n,\tilde{K}^n}(D)$, which is precisely (\ref{nonstationary:example:gaussian:equa.13}).\\
Let $\bar{p}(\cdot|,\cdot)$ and $\bar{p}(\cdot)$ denote the conditional and unconditional densities, respectively. Using the property of $\{\lambda_t(\cdot,\cdot,):~t=0,\ldots,n\}$ corresponding to the fact that $\lambda_t(k^t,\tilde{k}^{t-1})\equiv\lambda_t(k_t,\tilde{k}^{t-1})$, ~$t=0,\ldots,n$ and by Theorem~\ref{nonstationary:theorem:alternative_expression_nonstationary}, an alternative expression for the NRDF, $R_{0,n}^{na, K^n,\tilde{K}^n}(D)$ is the following.
\begin{equation}
R_{0,n}^{na,K^n,\tilde{K}^n}(D)=\sup_{s\leq0}\sup_{\{\lambda_t(k_t,\tilde{k}^{t-1})\in{\Psi}_s^t:~t\in\mathbb{N}_0^n\}}\left\{\mbox{term-(0)+\ldots+term-(n-1)+term-(n)}\right\}\label{nonstationary:example:gaussian:equation108}
\end{equation}
where
\begin{align*}
\mbox{term-(0)}&\equiv-\frac{1}{n+1}\int_{{\cal K}_0}\Big(\int_{\tilde{{\cal K}}_{0}}g_{0,n}(\tilde{k}_0)\bar{p}(\tilde{k}_0|k_0)d\tilde{k}_0\Big)\bar{p}(k_0|)dk_0+\frac{1}{n+1}\int_{{\cal K}_0}\log\Big(\lambda_0(k_0)\Big)\bar{p}(k_0)dk_0\\
\mbox{term-(1)}&\equiv-\frac{1}{n+1}\int_{{\cal K}_1\times\tilde{{\cal K}}_0}\Big(\int_{\tilde{{\cal K}}_{1}}g_{1,n}(\tilde{k}^1)\bar{p}(\tilde{k}_1|\tilde{k}_0,k_1)d\tilde{k}_1\Big)\bar{p}(k_1,\tilde{k}_0)dk_1d\tilde{k}_0\nonumber\\
&+\frac{1}{n+1}\int_{{\cal K}_1\times\tilde{{\cal K}}_0}\log\Big(\lambda_0(k_1,\tilde{k}_0)\Big)\bar{p}(k_1,\tilde{k}_0)dk_1d\tilde{k}_0\\
&\vdots\\
\mbox{term-(n-2)}&\equiv-\frac{1}{n+1}\int_{{\cal K}_{n-2}\times\tilde{\cal K}^{n-3}}\Big(\int_{\tilde{\cal K}_{n-2}}g_{n-2,n}(\tilde{k}^{n-2})\bar{p}(\tilde{k}_{n-2}|\tilde{k}^{n-3},k_{n-2})d\tilde{k}_{n-2}\Big)\bar{p}(k_{n-2},\tilde{k}^{n-3})\nonumber\\
&dk_{n-2}d\tilde{k}^{n-3}+\frac{1}{n+1}\int_{{\cal K}_{n-2}\times\tilde{\cal K}^{n-3}}\log\Big(\lambda_{n-2}(k_{n-2},\tilde{k}^{n-3})\Big)\bar{p}(k_{n-2},\tilde{k}^{n-3})dk_{n-2}d\tilde{k}^{n-3}\\
\mbox{term-(n-1)}&\equiv-\frac{1}{n+1}\int_{{\cal K}_{n-1}\times\tilde{{\cal K}}^{n-2}}\Big(\int_{\tilde{{\cal K}}_{n-1}}g_{n-1,n}(\tilde{k}^{n-1})\bar{p}(\tilde{k}_{n-1}|\tilde{k}^{n-2},k_{n-1})d\tilde{k}_{n-1}\Big)\bar{p}(k_{n-1},\tilde{k}^{n-2})\\
&dk_{n-1}d\tilde{k}^{n-2}+\frac{1}{n+1}\int_{{\cal K}_{n-1}\times\tilde{{\cal K}}^{n-2}}\log\Big(\lambda_{n-1}(k_{n-1},\tilde{k}^{n-2})\Big)\bar{p}(k_{n-1},\tilde{k}^{n-2})dk_{n-1}d\tilde{k}^{n-2}
\end{align*}
\begin{align*}
\mbox{term-(n)}\equiv{s}D+\frac{1}{n+1}\int_{{{\cal K}}_n\times\tilde{{\cal K}}^{n-1}}\log\Big(\lambda_n(k_n,\tilde{k}^{n-1})\Big)\bar{p}(k_n,\tilde{k}^{n-1})dk_nd\tilde{k}^{n-1}
\end{align*}
and
\begin{align}
\Psi^t_s\triangleq\Big\{&\lambda_t(k_t,\tilde{k}^{t-1})\geq{0}:
~\int_{{\cal K}_{t}}{e}^{s||k_t-\tilde{k}_t||_{2}^2-g_{t,n}(\tilde{k}^t)}{\lambda_t(k_t,\tilde{k}^{t-1})}\bar{p}(k_t|\tilde{k}^{t-1})dk_t\leq{1}\Big\},~t\in\mathbb{N}_{0}^n,\label{nonstationary:example:gaussian:equation100aaa}
\end{align}
\begin{align}
g_{n,n}(\tilde{k}^n)&=0,\nonumber\\
g_{t,n}(\tilde{k}^t)&=-\int_{{\cal K}_{t+1}}\log\Big({\lambda_{t+1}(k_{t+1},\tilde{k}^{t})}\Big)^{-1}\bar{p}(k_{t+1}|,\tilde{k}^t)dk_{t+1},~t\in\mathbb{N}_{0}^{n-1}.\label{nonstationary:example:gaussian:equation100aa}
\end{align}
Clearly, if $g_{t,n}(\tilde{k}^t)=\bar{g}_{t,n}(\tilde{k}^{t-1})$, i.e.,  it is independent of $(\tilde{k}_t)$, for ${t\in\mathbb{N}_0^{n-1}}$, then by Theorem~\ref{nonstationary:theorem:solution_rdf}, the RHS terms in (\ref{nonstationary:example:gaussian:equation108}) involving $g_{t,n}(\cdot,\cdot),~t\in\mathbb{N}_{0}^{n-1}$, will not appear (because the optimal reproduction distribution will not involve such terms).\\
\noi Since $g_{n,n}(\cdot,\cdot)=0$, by (\ref{nonstationary:example:gaussian:equation100aaa}), (\ref{nonstationary:example:gaussian:equation100aa}), $\lambda_n(k_n,\tilde{k}^{n-1})$ determines $g_{n-1,n}(\cdot,\cdot)$, $\lambda_{n-1}(\cdot,\cdot)$ determines $g_{n-2,n}(\cdot,\cdot)$ and so on, and the RHS of (\ref{nonstationary:example:gaussian:equation108}) involves supremum over $\{\lambda_t(\cdot,\cdot):~{t\in\mathbb{N}_0^{n}}\}$, then 
any choice of $\{\lambda_t(\cdot,\cdot):~{t\in\mathbb{N}_0^{n}}\}$ gives a lower bound.\\
\noi The main idea, implemented below, uses the property of distortion function, and the source distribution, to show that $\{\lambda_t(\cdot,\cdot):~{t\in\mathbb{N}_0^{n}}\}$ can be chosen so that $g_{t,n}(\tilde{k}^t)=\bar{g}_{t,n}(\tilde{k}^{t-1}),~{t\in\mathbb{N}_0^{n-1}}$, giving a lower bound which is achievable, and that the optimal reproduction distribution is of the form 
\begin{align*}
\bar{p}(\tilde{k}_t|,\tilde{k}^{t-1},k_t)=\frac{e^{s||{k}_t-\tilde{k}_t||_{2}^2}\bar{p}(\tilde{k}_t|\tilde{k}^{t-1})}{\int_{{\cal \tilde{K}}_t}e^{s||{k}_t-\tilde{k}_t||_{2}^2}\bar{p}(\tilde{k}_t|\tilde{k}^{t-1})}.
\end{align*}

\paragraph*{\bf {Step ${\bf t=n}$}} The set $\Psi^n_s$ is defined as follows:
\begin{align}
\Psi^n_s\triangleq\Big\{&\lambda_n(k_n,\tilde{k}^{n-1})\geq{0}:
~\int_{{\cal K}_{n}}{e}^{s||{k}_n-\tilde{{k}}_n||_{2}^2}{\lambda_n(k_n,\tilde{k}^{n-1})}\bar{p}(k_n|\tilde{k}^{n-1})dk_n\leq{1}\Big\}  \label{nonstationary:example:gaussian:equa.24} ,
\end{align}
where $\bar{p}(k_n|\tilde{k}^{n-1})$ denotes the conditional density of $k_n$ given $(\tilde{k}^{n-1})$. Take $\lambda_n(k_n,\tilde{k}^{n-1})\in\Psi^n_s$ such that
\begin{align}
\lambda_n(k_n,\tilde{k}^{n-1})=\frac{\alpha_n}{\bar{p}(k_n|\tilde{k}^{n-1})}\label{nonstationary:example:gaussian:equa.23}
\end{align}
for some $\alpha_n$ not depending on $k_n$, and substitute (\ref{nonstationary:example:gaussian:equa.23}) into the integral inequality in (\ref{nonstationary:example:gaussian:equa.24}) to obtain
\begin{align*}
{\alpha_n}\int_{{\cal K}_n}{e}^{s||{k}_n-\tilde{{k}}_n||_{2}^2}dk_n\leq{1}.
\end{align*}
By change of variable of integration then
\begin{align}
{\alpha_n}\int_{{-\infty}}^{{\infty}} {e}^{s||z_n||^2_{2}}dz_n=\alpha_n\sqrt{\left(-\frac{\pi}{s}\right)^{p}}=\alpha_n\left(-\frac{\pi}{s}\right)^{\frac{p}{2}}\leq{1}\label{nonstationary:example:gaussian:equa.25} ,
\end{align}
where $``s"$ is the non-positive Lagrange multiplier.\\
\noi Moreover, $\alpha_n$ is chosen so that the inequality of (\ref{nonstationary:example:gaussian:equa.25}) holds with equality, giving 
\begin{align}
\alpha_n=\frac{1}{\int{e}^{s||z_n||_{2}^2}dz_n}=\left(-\frac{s}{\pi}\right)^{\frac{p}{2}},~\lambda_n(k_n,\tilde{k}^{n-1})=\frac{(-\frac{s}{\pi})^{p/2}}{\bar{p}_n(k_n|\tilde{k}^{n-1})}\label{nonstationary:example:gaussian:equa.25a}.
\end{align}
Substituting (\ref{nonstationary:example:gaussian:equa.25a}) into the term-(n) of (\ref{nonstationary:example:gaussian:equation108}) gives
\begin{align}
\mbox{\underline{term-(n)}}&=sD+\frac{1}{n+1}\log{\alpha_n}-\frac{1}{n+1}\int_{{\cal K}_n\times\tilde{\cal K}^{n-1}}\log\Big({\bar{p}(k_n|\tilde{k}^{n-1})}\Big)\bar{p}(k_n,\tilde{k}^{n-1})dk_nd\tilde{k}^{n-1}\nonumber\\
&=sD+\frac{1}{n+1}\log\left(-\frac{s}{\pi}\right)^{\frac{p}{2}}+\frac{1}{n+1}H(K_n|\tilde{K}^{n-1}).\label{nonstationary:example:gaussian:equation110}
\end{align}
The choice of $\lambda_n(\cdot,\cdot)$ given by (\ref{nonstationary:example:gaussian:equa.25a}) determines $g_{n-1,n}(\cdot)$ given by 
\begin{align}
g_{n-1,n}(\tilde{k}^{n-1})&=-\int_{{\cal K}_n}\bar{p}(dk_n|\tilde{k}^{n-1})\log\Big({\lambda_n(k_n,\tilde{k}^{n-1}})\Big)^{-1}\nonumber\\
&\stackrel{(a)}=-\int_{{\cal K}_n}\bar{p}(dk_n|\tilde{k}^{n-1})\log\Big(\frac{\bar{p}(k_n|\tilde{k}^{n-1})}{\alpha_n}\Big)\nonumber\\
&=\log{\alpha_n}+H(K_n|\tilde{K}^{n-1}=\tilde{k}^{n-1}),~\alpha_n=\left(-\frac{s}{\pi} \right)^{\frac{p}{2}}\nonumber\\
&\stackrel{(b)}\leq\log{\alpha_n}+H(K_n|\tilde{K}^{n-2}=\tilde{k}^{n-2})\nonumber\\
&=\log\left(-\frac{s}{\pi}\right)^{\frac{p}{2}}+H(K_n|\tilde{K}^{n-2}=\tilde{k}^{n-2})\equiv\bar{g}_{n-1,n}(\tilde{k}^{n-2})\label{nonstationary:example:gaussian:equation113}
\end{align}
where $(a)$ follows from the fact that $\Big(\lambda_n(k_n,\tilde{k}^{n-1})\Big)^{-1}=\frac{\bar{p}(k_n|\tilde{k}^{n-1})}{\alpha_n}$, and $(b)$ from the fact that conditioning reduces entropy.\\
When the upper bound in (\ref{nonstationary:example:gaussian:equation113})  is substituted into the second expression of term-(n-1) of (\ref{nonstationary:example:gaussian:equation108}) involving $g_{n-1,n}(\cdot)$, it gives
\begin{align*}
&-\frac{1}{n+1}\int_{{\cal K}_{n-1}\times\tilde{\cal K}^{n-2}}\Big(\int_{\tilde{\cal K}_{n-1}}g_{n-1,n}(\tilde{k}^{n-1})\bar{p}(\tilde{k}_{n-1}|\tilde{k}^{n-2},k_{n-1})d\tilde{k}_{n-1}\Big)\bar{p}(k_{n-1},\tilde{k}^{n-2})dk_{n-1}d\tilde{k}^{n-2}\nonumber\\
&\geq-\frac{1}{n+1}\int_{{\cal K}_{n-1}\times\tilde{\cal K}^{n-2}}\Big(\int_{\tilde{\cal K}_{n-1}}\bar{g}_{n-1,n}(\tilde{k}^{n-2})\bar{p}(\tilde{k}_{n-1}|\tilde{k}^{n-2},k_{n-1})d\tilde{k}_{n-1}\Big)\bar{p}(k_{n-1},\tilde{k}^{n-2})dk_{n-1}d\tilde{k}^{n-2}.
\end{align*}

\paragraph*{\bf{Step ${\bf t=n-1}$}} The set $\Psi^{n-1}_s$ is defined as follows (using $g_{n-1,n}(\tilde{k}^{n-1})\equiv\bar{g}_{n-1,n}(\tilde{k}^{n-2})$ given by (\ref{nonstationary:example:gaussian:equation113}) obtained in step $t=n$) 
\begin{align}
\Psi^{n-1}_s\triangleq&\Big\{\lambda_{n-1}(k_{n-1},\tilde{k}^{n-2})\geq{0}:\nonumber\\
&\int_{{\cal K}_{n-1}}{e}^{s||k_{n-1}-\tilde{k}_{n-1}||_{2}^2-\bar{g}_{n-1,n}(\tilde{k}^{n-2})}{\lambda_{n-1}(k_{n-1},\tilde{k}^{n-2})}\bar{p}(k_{n-1}|\tilde{k}^{n-2})dk_{n-1}\leq{1}\Big\}.\label{nonstationary:example:gaussian:equation112}
\end{align}
Take $\lambda_{n-1}(k_{n-1},\tilde{k}^{n-2})\in\Psi^{n-1}_s$ such that
\begin{align}
\lambda_{n-1}(k_{n-1},\tilde{k}^{n-2})=\frac{\alpha_{n-1}(\tilde{k}^{n-2})}{\bar{p}(k_{n-1}|\tilde{k}^{n-2})}\label{nonstationary:example:gaussian:equation114}
\end{align}
for some $\alpha_{n-1}(\tilde{k}^{n-2})$ not depending on $k_{n-1}$, and substitute (\ref{nonstationary:example:gaussian:equation114}) into the integral inequality in (\ref{nonstationary:example:gaussian:equation112}) to obtain
\begin{align*}
{\alpha_{n-1}}(\tilde{k}^{n-2}){e}^{-\bar{g}_{n-1,n}(\tilde{k}^{n-2})}\int_{{\cal K}_{n-1}}{e}^{s||k_{n-1}-\tilde{k
}_{n-1}||_{2}^2}dk_{n-1}\leq{1}.
\end{align*}
By change of variable of integration then
\begin{align*}
{\alpha_{n-1}}(\tilde{k}^{n-2}){e}^{-\bar{g}_{n-1,n}(\tilde{k}^{n-2})}&\int_{-\infty}^{\infty}{e}^{s||z_{n-1}||_{2}^2}dz_{n-1}={\alpha_{n-1}}(\tilde{k}^{n-2}){e}^{-\bar{g}_{n-1,n}(\tilde{k}^{n-2})}\left(-\frac{\pi}{s}\right)^{\frac{p}{2}}\leq{1}.
\end{align*}
Hence,
\begin{align}
{\alpha_{n-1}}(\tilde{k}^{n-2})\left(-\frac{\pi}{s}\right)^{\frac{p}{2}}\leq{e}^{\bar{g}_{n-1,n}(\tilde{k}^{n-2})}.\label{nonstationary:example:gaussian:equation115}
\end{align}
\noi Moreover, $\alpha_{n-1}(\cdot)$ is chosen so that the inequality in (\ref{nonstationary:example:gaussian:equation115}) holds with equality, giving 
\begin{align}
\alpha_{n-1}(\tilde{k}^{n-2})=\frac{{e}^{\bar{g}_{n-1,n}(\tilde{k}^{n-2})}}{\left(-\frac{\pi}{s}\right)^{\frac{p}{2}}}&={e}^{\log{\alpha_n}+H(K_n|\tilde{K}^{n-2}=\tilde{k}^{n-2})}\left(-\frac{s}{\pi}\right)^{\frac{p}{2}}\nonumber\\
&\stackrel{(c)}{=}\left(-\frac{s}{\pi}\right)^{p}e^{H(K_n|\tilde{K}^{n-2}=\tilde{k}^{n-2})},\label{nonstationary:example:gaussian:equation115a}
\end{align}
where $(c)$ holds due to \eqref{nonstationary:example:gaussian:equation113}. Therefore, (\ref{nonstationary:example:gaussian:equation114}) is given by
\begin{align}
\lambda_{n-1}(k_{n-1},\tilde{k}^{n-1})&=\frac{\left(-\frac{s}{\pi}\right)^{p} e^{H(K_n|\tilde{K}^{n-2}=\tilde{k}^{n-2})}}{\bar{p}(k_{n-1}|\tilde{k}^{n-2})}.\label{nonstationary:example:gaussian:equation115b}
\end{align}
Substituting (\ref{nonstationary:example:gaussian:equation115b}) into the term-(n-1) of (\ref{nonstationary:example:gaussian:equation108}) gives
\begin{align}
\mbox{\underline{term-(n-1)}}\stackrel{(d)}\geq&-\frac{1}{n+1}\int_{{\cal K}_{n-1}\times\tilde{\cal K}^{n-2}}\Big(\int_{\tilde{\cal K}_{n-1}}\bar{g}_{n-1,n}(\tilde{k}^{n-2})\bar{p}(\tilde{k}_{n-1}|\tilde{k}^{n-2},k_{n-1})d\tilde{k}_{n-1}\Big)\nonumber\\
&\times\bar{p}(k_{n-1},\tilde{k}^{n-2}|)dk_{n-1}d\tilde{k}^{n-2}\nonumber\\
&+\frac{1}{n+1}\int_{{\cal K}_{n-1}\times\tilde{\cal K}^{n-2}}\log\Big(\lambda_{n-1}(k_{n-1},\tilde{k}^{n-2})\Big)\bar{p}(k_{n-1},\tilde{k}^{n-2})dk_{n-1}d\tilde{k}^{n-2}\label{nonstationary:example:gaussian:equation116a}\\
&\stackrel{(e)}=-\frac{1}{n+1}\log\left(-\frac{s}{\pi}\right)^{\frac{p}{2}}-\frac{1}{n+1}H(K_n|\tilde{K}^{n-2})\nonumber\\
&+\frac{1}{n+1}\int_{{\cal K}_{n-1}\times\tilde{\cal K}^{n-2}}\log\Big(\frac{\alpha_{n-1}(\tilde{k}^{n-2})}{\bar{p}(k_{n-1}|\tilde{k}^{n-2})}\Big)\bar{p}(k_{n-1},\tilde{k}^{n-2})dk_{n-1}d\tilde{k}^{n-2}\nonumber\\
&=-\frac{1}{n+1}\log\left(-\frac{s}{\pi}\right)^{\frac{p}{2}}-\frac{1}{n+1}H(K_n|\tilde{K}^{n-2})+\frac{1}{n+1}\log\left(-\frac{s}{\pi}\right)^{p}\nonumber\\
&+\frac{1}{n+1}H(K_n|\tilde{K}^{n-2})+\frac{1}{n+1}H(K_{n-1}|\tilde{K}^{n-2})\nonumber \\
&=\frac{1}{n+1}\log\left(-\frac{s}{\pi}\right)^{\frac{p}{2}}+\frac{1}{n+1}H(K_{n-1}|\tilde{K}^{n-2})\label{nonstationary:example:gaussian:equation116b} ,
\end{align}
where $(d)$ follows from the fact that $g_{n-1,n}(\tilde{k}^{n-1})\leq\bar{g}_{n-1,n}(\tilde{k}^{n-2})$ (see (\ref{nonstationary:example:gaussian:equation113})) and $(e)$ follows by substituting (\ref{nonstationary:example:gaussian:equation113}) and (\ref{nonstationary:example:gaussian:equation114}) into the the second and third expression of (\ref{nonstationary:example:gaussian:equation116a}), respectively.\\

\noi The choice of $\lambda_{n-1}(\cdot,\cdot)$ (given by (\ref{nonstationary:example:gaussian:equation115b})) determines $g_{n-2,n}(\cdot)$ given by
\begin{align}
\begin{split}
g_{n-2,n}(\tilde{k}^{n-2})&=-\int_{{\cal K}_{n-1}}\bar{p}(k_{n-1}|,\tilde{k}^{n-2})\log\Big({\lambda_{n-1}(k_{n-1},\tilde{k}^{n-2}})\Big)^{-1}\\
\stackrel{(f)}=-\int_{{\cal K}_{n-1}}\bar{p}(k_{n-1}|&\tilde{k}^{n-2})\log\Big(\frac{\bar{p}(k_{n-1}|\tilde{k}^{n-2})}{\alpha_{n-1}(\tilde{k}^{n-2})}\Big),~\alpha_{n-1}(\tilde{k}^{n-2})=\left(-\frac{s}{\pi}\right)^{p} e^{H(K_n|\tilde{K}^{n-2}=\tilde{k}^{n-2})}\\
&=\log\big(\alpha_{n-1}(\tilde{k}^{n-2})\big)-\int_{{\cal K}_{n-1}}\bar{p}(k_{n-1}|\tilde{k}^{n-2})\log\Big(\bar{p}(k_{n-1}|\tilde{k}^{n-2})\Big)\\
&=\log\left(-\frac{s}{\pi}\right)^{p}+H(K_n|\tilde{K}^{n-2}=\tilde{k}^{n-2})+H(K_{n-1}|\tilde{K}^{n-2}=\tilde{k}^{n-2})\\
&\stackrel{(g)}\leq\log\left(-\frac{s}{\pi}\right)^{p}+H(K_n|,\tilde{K}^{n-3}=\tilde{k}^{n-3})+H(K_{n-1}|\tilde{K}^{n-3}=\tilde{k}^{n-3})\\
&\equiv\bar{g}_{n-2,n}(\tilde{k}^{n-3}) ,
\end{split}\label{nonstationary:example:gaussian:equation116c}
\end{align}
where $(f)$ follows from the fact that $\Big(\lambda_{n-1}(k_{n-1},\tilde{k}^{n-2})\Big)^{-1}=\frac{\bar{p}(k_{n-1}|\tilde{k}^{n-2})}{\alpha_{n-1}(\tilde{k}^{n-2})}$, and $(g)$ follows from the fact that conditioning reduces entropy.\\
When the upper bound in (\ref{nonstationary:example:gaussian:equation116c})  is substituted into the second expression of term-(n-2) of (\ref{nonstationary:example:gaussian:equation108}) involving $g_{n-2,n}(\cdot)$, it gives
\begin{align*}
&-\frac{1}{n+1}\int_{{\cal K}_{n-2}\times\tilde{\cal K}^{n-3}}\Big(\int_{\tilde{\cal K}_{n-2}}g_{n-2,n}(\tilde{k}^{n-2})\bar{p}(\tilde{k}_{n-2}|\tilde{k}^{n-3},k_{n-2})d\tilde{k}_{n-2}\Big)\bar{p}(k_{n-2},\tilde{k}^{n-3})dk_{n-2}d\tilde{k}^{n-3}\nonumber\\
&\geq-\frac{1}{n+1}\int_{{\cal K}_{n-2}\times\tilde{\cal K}^{n-3}}\Big(\int_{\tilde{\cal K}_{n-2}}\bar{g}_{n-2,n}(\tilde{k}^{n-3})\bar{p}(\tilde{k}_{n-2}|\tilde{k}^{n-3},k_{n-2})d\tilde{k}_{n-2}\Big)\bar{p}(k_{n-2},\tilde{k}^{n-3})dk_{n-2}d\tilde{k}^{n-3}.
\end{align*}
\paragraph*{\bf {Step ${\bf t=n-2}$}} The set $\Psi^{n-2}_s$ is defined as follows (using ${g}_{n-2,n}(\tilde{k}^{n-2})\equiv\bar{g}_{n-2,n}(\tilde{k}^{n-3})$ given by (\ref{nonstationary:example:gaussian:equation116c}) obtained in step $t=n-1$). 
\begin{align}
\begin{split}
\Psi^{n-2}_s\triangleq&\Big\{\lambda_{n-2}(k_{n-2},\tilde{k}^{n-3})\geq{0}:\\
&\int_{{\cal K}_{n-2}}{e}^{s||k_{n-2}-\tilde{k}_{n-2}||_{2}^2-\bar{g}_{n-2,n}(\tilde{k}^{n-3})}{\lambda_{n-2}(k_{n-2},\tilde{k}^{n-3})}\bar{p}(k_{n-2}|\tilde{k}^{n-3})dk_{n-2}\leq{1}\Big\}.\end{split}\label{nonstationary:example:gaussian:equation118}
\end{align}
\noi Take $\lambda_{n-2}(k_{n-2},\tilde{k}^{n-3})\in\Psi^{n-2}_s$ such that
\begin{align}
\lambda_{n-2}(k_{n-2},\tilde{k}^{n-3})=\frac{\alpha_{n-2}(\tilde{k}^{n-3})}{\bar{p}(k_{n-2}|\tilde{k}^{n-3})}\label{nonstationary:example:gaussian:equation120}
\end{align}
for some $\alpha_{n-2}(\tilde{k}^{n-3})$ not depending on $k_{n-2}$, and substitute (\ref{nonstationary:example:gaussian:equation120}) into the integral inequality in (\ref{nonstationary:example:gaussian:equation118}) to obtain
\begin{align*}
\alpha_{n-2}(\tilde{k}^{n-3}){e}^{-\bar{g}_{n-2,n}(\tilde{k}^{n-3})}\int_{{\cal K}_{n-2}}{e}^{s||k_{n-2}-\tilde{k
}_{n-2}||_{2}^2}dk_{n-2}\leq{1}.
\end{align*}
By change of variable of integration then
\begin{align*}
\alpha_{n-2}(\tilde{k}^{n-3}){e}^{-\bar{g}_{n-2,n}(\tilde{k}^{n-3})}&\int_{-\infty}^{\infty}{e}^{s||z_{n-2}||_{2}^2}dz_{n-2}=\alpha_{n-2}(\tilde{k}^{n-3}){e}^{-\bar{g}_{n-2,n}(\tilde{k}^{n-3})}\left(-\frac{\pi}{s}\right)^{\frac{p}{2}}\leq{1}.
\end{align*}
Hence,
\begin{align}
\alpha_{n-2}(\tilde{k}^{n-3})\left(-\frac{\pi}{s}\right)^{\frac{p}{2}}\leq{e}^{\bar{g}_{n-2,n}(\tilde{k}^{n-3})}.\label{nonstationary:example:gaussian:equation121}
\end{align}
\noi Moreover, $\alpha_{n-2}(\cdot)$ is chosen so that the inequality in (\ref{nonstationary:example:gaussian:equation121}) holds with equality, giving 
\begin{align}
\alpha_{n-2}(\tilde{k}^{n-3})&=\frac{{e}^{\bar{g}_{n-2,n}(\tilde{k}^{n-3})}}{\int{e}^{s||z_{n-2}||_{2}^2}dz_{n-2}}={e}^{\log{\alpha_{n-1}(\tilde{k}^{n-2})}+H(K_{n-1}|\tilde{K}^{n-3}=\tilde{k}^{n-3})}\left(-\frac{s}{\pi}\right)^{\frac{p}{2}}\nonumber\\
&=\left\{\left(-\frac{s}{\pi}\right)^{\frac{p}{2}}\right\}^{3} e^{H(K_n|\tilde{K}^{n-3}=\tilde{k}^{n-3})+H(K_{n-1}|\tilde{K}^{n-3}=\tilde{k}^{n-3})}.\label{nonstationary:example:gaussian:equation122}
\end{align}
Therefore, (\ref{nonstationary:example:gaussian:equation120}) is given by
\begin{align}
\lambda_{n-2}(k_{n-2},\tilde{k}^{n-3})&=\frac{\left\{\left(-\frac{s}{\pi}\right)^{\frac{p}{2}}\right\}^{3} e^{H(K_n|\tilde{K}^{n-3}=\tilde{k}^{n-3})+H(K_{n-1}|\tilde{K}^{n-3}=\tilde{k}^{n-3})}}{\bar{p}(k_{n-2}|\tilde{k}^{n-3})}.\label{nonstationary:example:gaussian:equation123}
\end{align}
Substituting (\ref{nonstationary:example:gaussian:equation123}) into term-(n-2) of (\ref{nonstationary:example:gaussian:equation108}) gives
\begin{align}
\underline{Term-(n-2):}\stackrel{(h)}\geq&-\int_{{\cal K}_{n-2}\times\tilde{\cal K}^{n-3}}\Big(\int_{\tilde{\cal K}_{n-2}}\bar{g}_{n-2,n}(\tilde{k}^{n-3})\bar{p}(\tilde{k}_{n-2}|\tilde{k}^{n-3},k_{n-2})d\tilde{k}_{n-2}\Big)\nonumber\\
&\bar{p}(k_{n-2},\tilde{k}^{n-3})dk_{n-2}d\tilde{k}^{n-3}\nonumber\\
&+\int_{{\cal K}_{n-2}\times\tilde{\cal K}^{n-3}}\log\Big(\lambda_{n-2}(k_{n-2},\tilde{k}^{n-3})\Big)\bar{p}(k_{n-2},\tilde{k}^{n-3})dk_{n-2}d\tilde{k}^{n-3}\label{nonstationary:example:gaussian:equation122a}\\
&\stackrel{(i)}=-\frac{1}{n+1}\log\left\{\left(-\frac{s}{\pi}\right)^{\frac{p}{2}}\right\}^{2}-H(K_n|\tilde{K}^{n-3})-H(K_{n-1}|\tilde{K}^{n-3})\nonumber\\
&+\frac{1}{n+1}\int_{{\cal K}_{n-2}\times\tilde{\cal K}^{n-3}}\log\Big(\alpha_{n-2}(\tilde{k}^{n-3})\Big)\bar{p}(k_{n-2},\tilde{k}^{n-3})dk_{n-2}d\tilde{k}^{n-3}\nonumber
\end{align}
\begin{align}
&-\frac{1}{n+1}\int_{{\cal K}_{n-2}\times\tilde{\cal K}^{n-3}}\log\Big(\bar{p}(k_{n-2}|\tilde{k}^{n-3})\Big)\bar{p}(k_{n-2},\tilde{k}^{n-3})dk_{n-2}d\tilde{k}^{n-3}\nonumber\\
&=-\frac{1}{n+1}\log\left\{\left(-\frac{s}{\pi}\right)^{\frac{p}{2}}\right\}^{2}-\frac{1}{n+1}H(K_n|\tilde{K}^{n-3})\nonumber\\
&-\frac{1}{n+1}H(K_{n-1}|\tilde{K}^{n-3})+\frac{1}{n+1}\log\left\{\left(-\frac{s}{\pi}\right)^{\frac{p}{2}}\right\}^{3}\nonumber\\
&+H(K_n|\tilde{K}^{n-3})+\frac{1}{n+1}H(K_{n-1}|\tilde{K}^{n-3})+\frac{1}{n+1}H(K_{n-2}|\tilde{K}^{n-3})\nonumber\\
&=\frac{1}{n+1}\log\left(-\frac{s}{\pi}\right)^{\frac{p}{2}}+\frac{1}{n+1}H(K_{n-2}|\tilde{K}^{n-3}) \label{nonstationary:example:gaussian:equation122b} ,
\end{align}
where $(h)$ follows from the fact that $g_{n-2,n}(\tilde{k}^{n-2})\leq\bar{g}_{n-2,n}(\tilde{k}^{n-3})$ (see (\ref{nonstationary:example:gaussian:equation116c})), and $(i)$ follows by substituting (\ref{nonstationary:example:gaussian:equation116c}) and (\ref{nonstationary:example:gaussian:equation120}) into the the second and third expression of (\ref{nonstationary:example:gaussian:equation122a}), respectively.\\
\noi By applying induction, we obtain the following lower bound for the NRDF.
\begin{align}
R_{0,n}^{na,K^n,\tilde{K}^n}(D)&\geq sD+\frac{1}{n+1}\left\{\left(-\frac{s}{\pi}\right)^{\frac{p}{2}}\right\}^{n+1}\nonumber\\
&+\frac{1}{n+1}\Big\{H(K_n|\tilde{K}^{n-1})+H(K_{n-1}|\tilde{K}^{n-2})+\ldots+H(K_1|\tilde{K}_0)+H(K_0)\Big\}\nonumber\\
&=sD+\frac{1}{2}\frac{1}{n+1}\sum_{t=0}^n\sum_{i=1}^p\log\left(-\frac{s}{\pi}\right)+\frac{1}{n+1}\sum_{t=0}^n{H}(K_t|\tilde{K}^{t-1})\nonumber\\
&\stackrel{(j)}=sD+\frac{1}{2}\frac{1}{n+1}\sum_{t=0}^n\sum_{i=1}^p\log\left(-\frac{s}{\pi}\right)+\frac{1}{2}\frac{1}{n+1}\sum_{t=0}^n\log{2}\pi{e}|\Lambda_t|\label{nonstationary:example:gaussian:equation126} ,
\end{align}
where $(j)$ follows from the fact that 
\begin{align*}
{H}(K_t|\tilde{K}^{t-1})&=H(X_t-\mathbb{E}\big\{X_t|\sigma\{K^{t-1}\}\big\}\big{|}\tilde{K}^{t-1})\nonumber\\
&=H(X_t|\tilde{K}^{t-1})=H(X_t)=\frac{1}{2}\sum_{t=0}^n\log{2}\pi{e}|\Lambda_t|.
\end{align*}
\noi Next, we show how to find the Lagrangian multiplier $``s"$ so that the lower bound (\ref{nonstationary:example:gaussian:equation126}) equals $\frac{1}{2}\sum_{t=0}^n\sum_{i=1}^p\log\left(\frac{\lambda_{t,i}}{\delta_{t,i}}\right)$. To this end, we need to ensure existence of some $s<0$ such that the following identity holds.
\begin{align}
&sD+\frac{1}{2}\frac{1}{n+1}\sum_{n=0}^n\sum_{i=1}^p\log\left(-\frac{s}{\pi}\right)+\frac{1}{2}\frac{1}{n+1}\sum_{t=0}^n\log{2}\pi{e}|\Lambda_t|=\frac{1}{2}\frac{1}{n+1}\sum_{t=0}^n\sum_{i=1}^p\log\left(\frac{\lambda_{t,i}}{\delta_{t,i}}\right).\nonumber
\end{align}
After some algebra, the previous expression can be  simplified into the following expression.
\begin{align}
&\frac{1}{2}\log{e}^{2s\frac{1}{(n+1)}\sum_{t=0}^n\sum_{i=1}^p\delta_{t,i}}+\frac{1}{2}\frac{1}{n+1}\sum_{t=0}^n\sum_{i=1}^p\log\left(-\frac{s}{\pi}\right)=\frac{1}{2}\frac{1}{n+1}\sum_{t=0}^n\sum_{i=1}^p\log\frac{1}{2\pi{e}\delta_{t,i}}\nonumber .
\end{align}
In turn, from the equation above we obtain 
\begin{align}
\frac{1}{2}\frac{1}{n+1}\sum_{t=0}^n\sum_{i=1}^p\log{e}^{2s\delta_{t,i}}\left(-\frac{s}{\pi}\right)=\frac{1}{2}\frac{1}{n+1}\sum_{t=0}^n\sum_{i=1}^p\log\frac{1}{2\pi{e}\delta_{t,i}}\Longrightarrow{\delta_{t,i}}=-\frac{1}{2s}\nonumber ,
\end{align}
where $\delta_{t,i}=\{\xi,\lambda_{t,i}\}$. Now, if $\delta_{t,i}=\xi$ then ${\delta_{t,i}}=-\frac{1}{2s}$ and the NRDF is bounded below by the following expression
\begin{align}
R_{0,n}^{na,K^n,\tilde{K}^n}(D)\geq\frac{1}{2}\frac{1}{n+1}\sum_{t=0}^n\sum_{i=1}^p\log\left(\frac{\lambda_{t,i}}{\delta_{t,i}}\right),~\frac{1}{n+1}\sum_{t=0}^n\sum_{i=1}^p\delta_{t,i}=D.\nonumber
\end{align}
{\bf (2)} The estimation error $\widehat{X}_{t|t-1}$ is given by the modified Kalman filter equations  (\ref{nonstationary:example:gaussian:10})-(\ref{nonstationary:example:gaussian:11i}) (see \cite[Theorem 1.1, pp. 158]{caines1988}). Note that (\ref{nonstationary:example:gaussian:11i}) is computed as follows. 
\begin{align}
\begin{split}
M_t&=E_t^{\T}H_t{E}_{t}\Pi_{t|t-1}(E_{t}^{\T}H_{t}E_{t})^{\T}+E_{t}^{\T}{\Phi}_{t}Q_t{\Phi}_{t}^{\T}E_t\\
&\stackrel{(a)}=E_t^{\T}H_t{\Lambda_t}H_{t}E_{t}+E_{t}^{\T}H_t\Delta_t{E}_t =E_t^{\T}H_t{\Lambda_t}{E}_t\label{proof:fully:1} ,
\end{split}
\end{align}
where $(a)$ follows if by setting ${\Phi}_t=\sqrt{H_t\Delta_t{Q}_t^{-1}}$. By substituting (\ref{nonstationary:example:gaussian:11i}) into \eqref{nonstationary:example:gaussian:11} we obtain
\begin{align}
\Pi_{t+1|t}=A_tE^{\T}_t\Delta_tE_tA_t^{\T}+B_tB_{t}^{\T}.\label{proof:fully:2}
\end{align}
{\bf (3)} Next, we determine the realization of the optimal reproduction distribution. Recall that $\Pi_{t|t-1}$ is given by \eqref{nonstationary:example:gaussian:equa.133i}. Therefore, to determine $\Pi_{t|t-1}$, we need the equation of the error $e_t\triangleq{X}_t-\widehat{X}_{t|t-1}$, hence the equation of the least-squares filter of $X_t$ given all the previous outputs $Y^{t-1}$, namely $\widehat{X}_{t|t-1}$. From Fig.~\ref{nonstationary:communication_system}, we deduce that ${Y}_t=\tilde{K}_t+\widehat{X}_{t|t-1}$, where $\{\widehat{X}_{t|t-1}:~t\in\mathbb{N}_0\}$ is obtained from the modified Kalman filter $\widehat{X}_{t|t-1}$. Thus, we obtain \eqref{nonstationary:10ab}. \\
{\bf (4)} By substituting \eqref{proof:fully:1} in \eqref{nonstationary:example:gaussian:10} we obtain the updated version of $\widehat{X}_{t|t-1}$ as follows. 
\begin{align}
\begin{split}
&\widehat{X}_{t+1|t}=A_t\widehat{X}_{t|t-1}+A_t\Pi_{t|t-1}(E_t^{\T}H_{t}E_{t})^{\T}M_t^{-1}\left({Y}_t-\widehat{X}_{t|t-1}\right)\\
&=A_t\widehat{X}_{t|t-1}+A_t\Pi_{t|t-1}{E}_t^{\T}H_{t}E_{t}{E}^{\T}_t\diag\{\frac{1}{\lambda_{t,1}},\ldots,\frac{1}{\lambda_{t,p}}\}H^{-1}_t{E}_t\left({Y}_t-\widehat{X}_{t|t-1}\right)\\
&=A_t\widehat{X}_{t|t-1}+A_t{E}^{\T}_t\Lambda_tE_t{E}_t^{\T}H_{t}\diag\{\frac{1}{\lambda_{t,1}},\ldots,\frac{1}{\lambda_{t,p}}\}H^{-1}_t{E}_t\left({Y}_t-\widehat{X}_{t|t-1}\right)=A_tY_{t}.
\label{proof:fully:3}
\end{split}
\end{align}
Using \eqref{proof:fully:3}, we obtain \eqref{filter:outcome:eq:1} and since $\widehat{X}_{t+1|t}=A_t\widehat{X}_{t|t}$ we also obtain \eqref{filter:outcome:eq:2}.
\par Finally, by substituting \eqref{filter:outcome:eq:1} in \eqref{nonstationary:10ab} we obtain \eqref{nonstationary:10abbba}.\\
\noi{\bf (5)} To show the last stage of our theorem, we note that ${\cal V}_t\triangleq{Y}_t-\mathbb{E}\left\{Y_t|\sigma\{Y^{t-1}\}\right\}$ is the innovation process of \eqref{nonstationary:10ab}, and that ${\cal V}_t\triangleq{Y}_t-\widehat{X}_{t|t-1}\equiv\tilde{K}_t$. Moreover, since 
$\tilde{K}_t=E_t^{\T}\Phi_tZ_t$ and $\{E_t^{\T}\Phi_t:~t\in\mathbb{N}_0^n\}$ are invertible, then the statement holds. This completes the proof.

\bibliographystyle{ieeetr}
\bibliography{photis_references_nonstationary_rdf}

%
%
%
%
\end{document}